\documentclass[11pt]{article}
\usepackage[english]{babel}
\usepackage[utf8]{inputenc}
\usepackage[authoryear]{natbib}
\usepackage{geometry}
\usepackage{adjustbox}
\usepackage{setspace}
\usepackage{fontawesome5} 
\usepackage{academicons}

\usepackage{amsmath}
\usepackage{amsfonts}
\usepackage{amsthm}
\usepackage{amssymb}
\usepackage{mathrsfs}
\usepackage{bbm}
\usepackage{amstext}
\usepackage{bm}
\usepackage{mathtools}
\usepackage{todonotes}
\usepackage{enumitem}

\theoremstyle{plain}
\newtheorem{theorem}{Theorem}

\newtheorem{assumption}{Assumption}

\theoremstyle{definition}
\newtheorem{remark}{Remark}

\newtheorem{example}{Example}

\numberwithin{equation}{section}
\usepackage{chngcntr}
\counterwithin*{section}{part}
\usepackage{apptools}

\usepackage{multirow}
\usepackage{babel}
\usepackage{array}
\usepackage{rotating}
\usepackage{longtable}
\usepackage{float}
\usepackage{booktabs}
\usepackage{lscape}

\usepackage{caption}
\usepackage{subcaption}

\usepackage[ruled]{algorithm2e}

\usepackage{color}
\usepackage{graphicx}

\usepackage{xcolor}
\usepackage{listings}
\usepackage{alltt}

\definecolor{codegreen}{rgb}{0,0.6,0}
\definecolor{codegray}{rgb}{0.5,0.5,0.5}
\definecolor{codepurple}{rgb}{0.58,0,0.82}
\definecolor{backcolour}{rgb}{0.95,0.95,0.92}

\lstdefinestyle{mystyle}{
    backgroundcolor=\color{backcolour},   
    commentstyle=\color{codegreen},
    keywordstyle=\color{magenta},
    numberstyle=\tiny\color{codegray},
    stringstyle=\color{codepurple},
    basicstyle=\ttfamily\footnotesize,
    breakatwhitespace=false,         
    breaklines=true,    captionpos=b,                    
    keepspaces=true, numbers=left,                    
    numbersep=5pt, showspaces=false,                
    showstringspaces=false, showtabs=false, tabsize=2}
\usepackage{url}
\usepackage{hyperref}
\hypersetup{
    colorlinks=true,
    linkcolor=blue, 
    filecolor=[rgb]{0.6000000, 0.1058824, 0.1176471}, 
    citecolor=blue, 
    linkbordercolor=[rgb]{1.0000000, 0.7803922, 0.1725490}, 
    citebordercolor=[rgb]{0.8666667, 0.6392157, 0.0000000}, 
    filebordercolor=[rgb]{1.0000000, 0.7803922, 0.1725490}, 
    urlbordercolor=[rgb]{1.0000000, 0.7803922, 0.1725490} 
}
\definecolor{myred}{rgb}{0.6000000, 0.1058824, 0.1176471}
\definecolor{mypurple}{rgb}{0.4588235, 0.05882352, 0.4274509}

\usepackage{todonotes}

\usepackage{rotfloat}
\usepackage{pdfpages}
\usepackage{indentfirst}
\usepackage{enumitem}

\makeatletter
\def\maxwidth{ %
  \ifdim\Gin@nat@width>\linewidth
    \linewidth
  \else
    \Gin@nat@width
  \fi
}
\makeatother

\definecolor{fgcolor}{rgb}{0.345, 0.345, 0.345}

\usepackage{framed}
\makeatletter
 {\par\unskip\endMakeFramed%
 \at@end@of@kframe}
\makeatother

\definecolor{shadecolor}{rgb}{.97, .97, .97}
\definecolor{messagecolor}{rgb}{0, 0, 0}
\definecolor{warningcolor}{rgb}{1, 0, 1}
\definecolor{errorcolor}{rgb}{1, 0, 0}

\allowdisplaybreaks

\title{Distributional Effects in Censored Quantile Regressions with Endogeneity and Heteroskedasticity
\thanks{%
Xi Wang: \url{theawang@connect.ust.hk}.
We sincerely thank Songnian Chen, Ivan Fern\'{a}ndez-Val, Yingyao Hu and
participants at The 2025 World Congress of the Econometric Society for their
helpful comments. The financial support from the National
Science Foundation of China through Grant 72573071.}}
\author{
Xi Wang \\ {Institute for Economic and Social Research, Jinan University}
}
\date{\today }

\begin{document}
\maketitle

\abstract{
    
Distributional effects, captured by quantile frameworks, are well-received for
characterizing heterogeneous impacts of economic factors across the unobserved
relative ranks. Censored outcome, endogenous regressor and heteroskedastic
error are prevalent in empirical work, yet challenge the consistency of
existing quantile estimation methods.

This paper proposes a two-nested-step(TNS) estimation method for distributional effects in censored quantile models with endogeneity and heteroskedasticity. It combines the sequential analysis with the control function approach, adapting for heterogeneous distributional effects. The estimation algorithm is a two-step procedure nested with a sequence of series quantile regressions, thereby
providing applied researchers with a computationally tractable and
practically feasible tool. Monte Carlo simulation results demonstrate the good performance of our estimator in a finite sample. We apply the proposed method to estimate
heterogeneous income elasticities of households across relative ranks of commodity expenditure using data from the UK Family Expenditure Survey.\\
    \\
    Key words: Distributional effects, censored quantile regression, endogeneity, heteroskedasticity\\
    JEL code: C13,C31,C34,C36
}
\thispagestyle{empty}
\newpage \setcounter{page}{1}

\section{Introduction}

Since the seminal contribution of \citet{koenker1978regression}, quantile
regression, $Y=X^{\prime }\beta (U)$, has received substantial attention for
its flexibility in analyzing heterogeneous distributional effects of
regressors and partially accommodating multiplicative heteroskedasticity.
Unlike average effects, distributional effects $\beta (U)$ reveal how the impact of an economic factor varies
across unobserved ranks---an aspect especially appealing to researchers.
Classical quantile frameworks assume that the dependent variable is fully
observed and all regressors are independent of unobservables.

In applied work, censoring, endogeneity and heteroskedasticity are pervasive
and often arise jointly. \citet{kowalski2016censored}
analyzed heterogeneous income elasticities of demand among across high- and low-rank consumption households in a setting, where a nontrivial proportion of households make no purchases of medical care goods,
and the marginal price---the regressor of interest---is endogenous because of
simultaneity between demand and price. Conditional heteroskedasticity
is another widespread feature and a well-recognized source of heterogeneous
distributional effects. For example, high-consumption households have a wider spending dispersion and exhibit greater spending
propensity than low-consumption households, and high priced
goods display more volatility. While studies
targeting average effects have extensively discussed censoring, endogeneity,
and heteroskedasticity, these features continue to complicate the consistent
estimation of distributional effects.

In this paper, we consider a quantile regression model that accommodate endogeneity, censoring and heteroskedasticity---  complex but common empirical data features. Our paper contributes to the growing literature on censored and endogenous quantile models. Existing mean regression approaches for censored endogenous models, such as \citet{honore2004estimation}
and \citet{chen2020semiparametric}, assume constant coefficients and typically
require statistical independence between the error term and exogenous regressors. 
As a result, they cannot capture heterogeneous effects across ranks of unobserved characteristics, particularly that induced by heteroskedasticity. By contrast, standard quantile regression methods target distributional
effects to characterize the heterogeneity, but are generally inconsistent in the presence of censoring or endogenous regressors.
A substantial literature on quantile models has explored these issues
from separate perspectives. On one side, \citet{chernozhukov2005iv,chernozhukov2006instrumental,lee2007endogeneity,jun2009local,imbens2009identification,wuthrich2019closed,frolich2013unconditional}
and \citet{sasaki2025extreme} investigated endogenous quantile models.
On the other, \citet{powell1984least,powell1986censored,fitzenberger199715,
chernozhukov2002three,portnoy2003censored} and \citet{fitzenberger2007improving}
studied censored quantile models. These strands have evolved separately, and their integration is difficult in most cases.

Given the prevalence of concurrent censoring and endogeneity, developing an estimation method for distributional effects of censored quantile
models with endogenous regressors merits considerable attention. \citet{frandsen2015treatment}
analyzed a censored quantile treatment effect model with a binary
endogenous treatment. Building on this line of research, \citet{chen2018sequential,wang2021moment}
and \citet{Chen2026SSIV} proposed moment-based estimators
that require the endogenous regressor to be discrete or one-sided
bounded. Despite important advances, the bounded support restriction substantially limits their applicability to the circumstance of an endogenous regressor with an unbounded support, for instance, the logarithm of total expenditure in studies of income elasticity of demand.

To accommodate continuous endogenous regressors, \citet{blundell2007censored} extended
\citeauthor{powell1986censored}'s censored quantile regression by using the error term from
the endogenous-variable equation as a control variable to account for
endogeneity. When the endogenous regressor exhibits conditional
heteroskedasticity, the associated heavy-tailed errors generate numerous
extreme residuals, rendering their estimator highly sensitive to the
trimming threshold and creating practical challenges for applied
researchers. \citet{chernozhukov2015quantile} developed a CQIV estimator by
embedding \citeauthor{chernozhukov2002three}'s (2002) algorithm into a control function
approach. As they both emphasized, when conditional heteroskedasticity arises in the latent outcome, the maximum-likelihood estimation involved in the
selection process fails and even the algorithm may not converge. As a result, the CQIV estimator fails to consistently estimate distributional effects particularly induced by heteroskedasticity.

This paper proposes a two-nested-step(TNS) estimation method for the distributional effects in censored quantile models with an endogenous regressor and heteroskedasticity. 
The primary contribution is to identify and consistently estimate distributional effects which are heterogeneous across relative ranks of latent outcome. For these heterogeneous effects, we
solve the censoring issue by sequentially selecting the subsample, and integrate it with the control function approach. The proposed method avoids the failure of maximum likelihood estimation under heteroskedasticity, and also complements moment-based methods (e.g., \citet{chen2018sequential}) to accommodate
an unbounded endogenous regressor. Second, we suggests the conditional cumulative distribution function (CDF) serving as the control variable that accounts for endogeneity, and propose a regression based approach to estimate it. It bypasses the practical difficulties of removing excessive
outlying residuals and choosing a trimming threshold on quantile level for 
estimation of counterfactual distribution. Lastly, the estimation algorithm 
is essentially a two-step procedure nested with a sequence of series quantile 
regressions. The computation is tractable for applied researchers. For greater
generality, the TNS method can be extended to the model with non-additive
structures in both the endogenous regressor and latent outcome equations.

The remainder of the paper is organized as follows: Section 2 develops
identification and estimation of distributional effects in the censored quantile model with endogeneity and heteroskedasticity. Section 3 establishes asymptotic properties of the proposed estimator. Section 4 extends the framework to a more general model
with non-additive structures. Section 5 carries out Monte Carlo simulations. Section 6 applies the TNS method
to analyze heterogeneous expenditure elasticities of demand. Section 7
concludes this paper.

\section{Identification and Estimation}
    
For a more comprehensive assessment of the relationship between the latent outcome and economic factors, researchers are often inclined to learn distributional effects on various segments in the conditional distribution of latent outcome. The pervasive heteroskedasticity in empirical data is an importance source of heterogeneity in distributional effects. Moreover, censoring and endogeneity are common data features that require special attention in the estimation of distributional effects. To motivate the model of interest, we present two examples.

\paragraph*{Income elasticity of demand study}

The income elasticity of demand describes how a household's consumption
pattern varies with income, often proxied by total expenditure, to
reflect consumer behavior in allocating resources across commodities. It
has received substantial attention in both theoretical and empirical
research, including \citet{blundell2007semi,li2021engel}
and \citet{lopez2021estimating}. A conventional specification
takes the form $Y=R\alpha +X^{\prime }\beta +\varepsilon $ where $Y$ is the
budget share of a commodity, $R$ is the logarithm of income(total
expenditure), and $X$ is a vector of covariates. Two problems are well recognized: One is endogeneity arising from a reversal causality relationship between the expenditure on a specific commodity and income. The other is censoring on the observed budget share because a notable proportion of households do not purchase certain commodities such as alcohol or transportation services.

 In recent years, researchers tend to explore the distributional effects of income (or total expenditure) on budget shares, with particular attention to households at the high and low ends of the conditional distribution of commodity expenditure. A possible motivator for such distributional effects is widely documented heteroskedasticity in the relevant literature. For example, as \citet{lewbel2006engel} showed, the discrepancy between observed and predicted budget shares widens with total expenditure. The elasticity is thus expected to be heterogeneous across households’ relative positions in the conditional distribution of commodity expenditure.

These empirical needs make a quantile analysis, based on the specification below, a particularly appealing tool to reassess the income elasticity of demand: %
\begin{equation*}
	Y=\max \{Y^{\ast },0\}=\max \{R\alpha (\tau )+X^{\prime }\beta (\tau
	)+\varepsilon (\tau ),0\}
\end{equation*}%
for any $\tau \in (0,1)$, where $Y^{\ast }$ represents the latent budget
share of a commodity. Unlike the constant-coefficient specification, $\alpha
(\tau )$ represents the income elasticity of commodity demand at relative rank $\tau$ of commodity expenditure. Due to endogeneity, 
$Q_{\varepsilon (\tau )|X,R}(\tau )\neq 0$ where $\varepsilon (\tau
)=Y^{\ast }-R\alpha (\tau )-X^{\prime }\beta (\tau )$.

\paragraph*{Price elasticity of demand study}

The price elasticity of demand quantifies the responsiveness of the quantity
demanded to changes in the price of commodity, and is a long-standing topic of
interest in microeconomics. As price and demand exhibit simultaneity, a substantial body of research cope with the endogenous problem to estimate price elasticities for
different commodities, for instance, agricultural products (\citet{roberts2013identifying}) 
and electricity (\citet{ito2023selection}). Further, \citet{kowalski2016censored} and \citet{pryce2019alcohol}
emphasized that influences of price change ought to be heterogeneous across the
conditional distribution of quantity demanded, and the
observed demand is censored from below for certain commodities. In view of endogeneity and censoring, an appropriate quantile regression is%
\begin{equation*}
	Y=\max \{Y^{\ast },0\}=\max \{R\alpha (\tau )+X^{\prime }\beta (\tau
	)+\varepsilon (\tau ),0\},
\end{equation*}%
for any $\tau \in (0,1)$, where $Y^{\ast }$ denotes the latent consumption
volume, $Y$ is the observed consumption volume and $X$ is a vector of
covariates. The logarithm of marginal price $R$ is endogenous with $Q_{\varepsilon (\tau
	)|X,R}(\tau )\neq 0$. The distributional effects $\alpha (\tau )$ characterize heterogeneity in the response to price changes across individuals situated at different conditional deciles of the expenditure distribution.


\bigskip
\textbf{Notations}: Throughout this paper, $Q_{Y^{\ast }|X}(\tau |X)$ denotes the $\tau $-th conditional quantile of $Y^{\ast }$ given $X$. $F_{Y^{\ast
	}|X}(y|x)$ is the conditional CDF of $Y^{\ast }$ given $X=x$. Let $f_{Y^{\ast }|X,R,V}(y|x,r,v)$ and $f_{Y^{\ast}|X,R,V}^{\prime }(y|x,r,v)$ denote the conditional density of $Y^{\ast }$ and\ its first-order derivative with respect to $y$, respectively. The minimum eigenvalue and the maximum eigenvalue are denoted by mineig$(\cdot)$ and maxeig$(\cdot)$.

\subsection{Model and identification}
Consider a general censored quantile model with endogeneity, which
is the same as \citet{chernozhukov2015quantile}:%
\begin{eqnarray}
	Y &=&\max \{Y^{\ast },0\},    \label{censor1} \\
	Y^{\ast } &=&m_{Y^{\ast }}(X,R,V,U),  \label{latent1} \\
	R &=&m_{R}(Z,V),  \label{endo1}
\end{eqnarray}
\noindent where $Y$ denotes the observed outcome, $Y^{\ast }$ is the latent
outcome and $X$ is a $K_{X}\times 1$ vector of exogenous regressors. $R$ is
a continuous endogenous regressor, which can be extended to a vector of
endogenous regressors. $Z$ is a $K_{Z}\times 1$ vector of instrumental variables containing $X$ and additional variables excluded from the
latent outcome equation. The structural functions $m_{Y^{\ast }}(x,r,v,\cdot
)$ and $m_{R}(z,\cdot )$ are unknown and strictly monotone in the fourth and
the second arguments, respectively.

Two unobserved variables, $U$ and $V$, work together to capture unobserved characteristics in the latent outcome. Conditional on $(X,R,V)$, $U$ serves as determining individuals' relative ranks in term of the latent outcome, because it can be normalized to be uniformed on the unit interval, i.e., $U|X,R,V\sim U(0,1)$. Differently, $V$ captures the dependence between the endogenous regressor and the unobserved heterogeneity in the latent outcome. The control variable $V$ is responsible for endogeneity. Conditional on $Z$, $V$ can also be normalized to be standard uniform, i.e., $V|Z\sim U(0,1)$.

In model (\ref{censor1})-(\ref{endo1}), the outcome is subject to censoring; the regressor of interest is endogenous; and the general forms of $m_{Y^{\ast }}(x,r,v,\tau
)$ and $m_{R}(z,v )$ admit multiplicative heteroskedasticity. In addition to the Skorohod representation, an equivalent representation of (\ref{latent1}) is%
\begin{equation*}
	Q_{Y^{\ast }|X,R,V}(\tau |X,R,V)=m_{Y^{\ast }}(X,R,V,\tau )
\end{equation*}%
for any $\tau \in (0,1)$. This quantile function representation makes explicit how the regressors $(X,R)$ shift the entire conditional distribution of $Y^{\ast}$ across relative ranks $\tau$, thereby facilitating the analysis of distributional effects in the quantile regression literature. 

Define $%
D_{\tau }(\eta )=1\{Q_{Y^{\ast }|X,R,V}(\tau |X,R,V)>\eta \}$ for any $\eta
\geq 0$\footnote{In practice, $\eta$ is zero or a small value.} where $1\{\cdot\}$ signifies the indicator function. To illustrate the identification strategy gradually, we temporarily treat $D_{\tau }(\eta )$ as observed and cope with its unobservability later. We impose a set of assumptions below:

\setcounter{assumption}{0}
\renewcommand{\theassumption}{ID1}
\begin{assumption}[Identification]\label{assum:ident1}
$\,$ 	

(i) (Monotonicity and Continuous Distribution) \textit{$m_{R}(z,v)$ is strictly monotone in $v$. The distribution of $V$ given $Z=z$ is absolutely continuous with a strictly increasing CDF in $v$. $F_{Y^{\ast }|X,R,V}(y|x,r,v)$ and its inverse are strictly increasing.}
    
(ii) (Exclusion) \textit{$U|R,Z\sim U|X,R,V\sim U(0,1)$ where $Z=(X,Z_{1})$ with $Z_{1}$ excluded from the latent outcome equation (\ref{latent1}).}

(iii) (Differentiability and Relevance) \textit{$Q_{Y|R,Z,D_{\tau }(\eta )=1}(\tau
|r,z)$ is continuously differentiable in $(r,z)$, and $Q_{R|Z}(v|z)$ is continuously differentiable in $z$ with $\nabla_{z_{1}}Q_{R|Z}(v|z) \neq 0$ for all $(\tau,v)$.}
\end{assumption}

In quantile regression framework, the strict monotonicity conditions are standard requirements for Skorohod representation to guarantee well-defined conditional quantile functions. In addition, $V$ given $Z$ can be normalized to have a standard uniform distribution and then identified as $F_{R|Z}(R|Z)$ under in Assumption ID1(i). 
 Assumption ID1(ii) posits an exclusion restriction, similar to \citet*{lee2007endogeneity}, implying that $Z_{1}$ does not affect the unobserved characteristics directly but through $(X,R,V)$. It can be satisfied if $U$ is independent of $Z_{1}$ given $(X,R,V)$, since $U|R,Z\sim U|X,R,Z_{1},V\sim U|X,R,V $. Besides of differentiability,
Assumption ID1(iii) imposes a relevance condition in that the excluded
instrumental variables exert non-trivial influences on the endogenous
regressor. Theorem \ref{Thm1} establishes identification of the distributional effects.

\begin{theorem}\label{Thm1}
	Under the model (\ref{censor1})-(\ref{endo1}), suppose Assumption ID1 holds.
	Then for any $v,\tau \in (0,1)$,%
	\begin{eqnarray*}
		\nabla _{r}Q_{Y^{\ast }|X,R,V}(\tau |x,r,v) &=&\nabla _{r}Q_{Y|R,Z,D_{\tau }(\eta
			)=1}(\tau |r,z)+\nabla _{z}Q_{R|Z}(v|z)\left[ \nabla
			_{z}Q_{R|Z}(v|z)^{\prime }\nabla _{z}Q_{R|Z}(v|z)\right] ^{-1}\\
		&&\times \nabla _{z}Q_{Y|R,Z,D_{\tau }(\eta )=1}(\tau |r,z)^{\prime },
	\end{eqnarray*}%
	where $\nabla _{r}$ and $\nabla _{z}$ denote partial derivatives with
	respect to $r$ and $z$, respectively..
\end{theorem}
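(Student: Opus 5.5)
The plan is to link the observable conditional quantile $Q_{Y|R,Z,D_\tau(\eta)=1}(\tau|r,z)$ to the structural quantile $m_{Y^*}(x,r,v,\tau)$ evaluated at $v=F_{R|Z}(r|z)$. Differentiating the identity $r=Q_{R|Z}(v|z)$ will then produce the stated correction term, in the spirit of \citet{imbens2009identification} and the local-derivative argument of \citet{lee2007endogeneity}.

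\textbf{Step 1: reduce censored quantiles to latent ones.} By Assumption ID1(i), $V=F_{R|Z}(R|Z)$, so conditioning on $(R,Z)$ fixes $V=v(r,z)$ and $X=x$. On the event $D_\tau(\eta)=1$ we have $Q_{Y^*|X,R,V}(\tau|x,r,v)>\eta\ge 0$. Since $y\mapsto\max\{y,0\}$ is monotone and quantiles are equivariant under monotone transformations,
\[
Q_{Y|R,Z}(\tau|r,z)=\max\{Q_{Y^*|R,Z}(\tau|r,z),0\}.
\]
The exclusion condition ID1(ii), $U|R,Z\sim U|X,R,V\sim U(0,1)$, together with strict monotonicity of $m_{Y^*}$ in $U$, gives $Q_{Y^*|R,Z}(\tau|r,z)=m_{Y^*}(x,r,v(r,z),\tau)$. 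Combining the two displays, for $(r,z)$ in the region where $D_\tau(\eta)=1$,
\[
Q_{Y|R,Z,D_\tau(\eta)=1}(\tau|r,z)=Q_{Y^*|X,R,V}(\tau|x,r,F_{R|Z}(r|z)).
\]
Because $D_\tau(\eta)$ is a deterministic function of $(X,R,V)$, conditioning on it does not change the distribution of $U$. This is the step I expect to need the most care: the region must be open, or at least the identity must hold on a neighbourhood, so that it can be differentiated. I would use continuity from ID1(iii) and take $\eta>0$ strictly, which makes the region open.

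\textbf{Step 2: differentiate the identity.} Write $g(r,z)=Q_{Y|R,Z,D_\tau=1}(\tau|r,z)$ and $h(x,r,v)=Q_{Y^*|X,R,V}(\tau|x,r,v)$, so that $g(r,z)=h(x,r,v(r,z))$. The chain rule gives
\[
\nabla_r g=\nabla_r h+\nabla_v h\,\nabla_r v,\qquad \nabla_z g=\nabla_z h+\nabla_v h\,\nabla_z v,
\]
where $\nabla_z h$ has nonzero components only in the $x$-coordinates, since $z_1$ is excluded. Differentiating $r=Q_{R|Z}(v(r,z)|z)$ gives
\[
\nabla_r v=1/\nabla_v Q_{R|Z},\qquad \nabla_z v=-\nabla_z Q_{R|Z}/\nabla_v Q_{R|Z}.
\]
Hence
\[
\nabla_r h=\nabla_r g-\nabla_v h/\nabla_v Q_{R|Z},
\]
and it remains to recover the scalar $c=\nabla_v h/\nabla_v Q_{R|Z}$ from the $z$-derivatives.

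\textbf{Step 3: recover $c$ by projection.} From the $z$-equation, $\nabla_z g=\nabla_z h-c\,\nabla_z Q_{R|Z}$. Taking the excluded components $z_1$, where $\nabla_{z_1}h=0$, gives $\nabla_{z_1}g=-c\,\nabla_{z_1}Q_{R|Z}$. Relevance, $\nabla_{z_1}Q_{R|Z}\ne 0$, then identifies $c$ by least squares:
\[
c=-\bigl[\nabla_{z_1}Q_{R|Z}^{\prime}\nabla_{z_1}Q_{R|Z}\bigr]^{-1}\nabla_{z_1}Q_{R|Z}^{\prime}\nabla_{z_1}g.
\]
Substituting into $\nabla_r h=\nabla_r g-c$ gives the displayed formula with $z$ read as the excluded block $z_1$. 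I would state explicitly that the gradient with respect to $z$ in the theorem is understood over the excluded instruments, or equivalently over components on which $h$ does not depend. I would also reconcile the sign and orientation of the correction term with the paper's display, interpreting $\nabla_z Q_{R|Z}$ as a column vector so that the product is a scalar.
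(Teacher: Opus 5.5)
Your derivation is correct, given somewhat more smoothness than ID1 supplies (see below). Your reading of the formula is also the right one: the gradient must be taken over the excluded block $z_1$, and the correction term is the scalar $\nabla_{z_1}Q_{R|Z}'\nabla_{z_1}g/(\nabla_{z_1}Q_{R|Z}'\nabla_{z_1}Q_{R|Z})$, with the same sign as the paper's.

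Your Step 1 matches the paper's first chain of equalities, but from there the routes differ. The paper holds $(x,v)$ fixed and lets the instrument move with $r$ along the level set $Q_{R|Z}(v|x,z_1)=r$. It takes the minimum-norm direction $\nabla_r z=\nabla_z Q_{R|Z}[\nabla_z Q_{R|Z}'\nabla_z Q_{R|Z}]^{-1}$ and applies the chain rule once to $h(x,r,v)=g(r,z(v,x,r))$. You instead hold $z$ fixed, let $v=F_{R|Z}(r|z)$ move, and eliminate the nuisance scalar $c=\nabla_v h/\nabla_v Q_{R|Z}$ through the exclusion of $z_1$.

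The cost of your route is regularity that ID1 does not provide. You need $h$ jointly differentiable in $(r,v)$, and $Q_{R|Z}(v|z)$ differentiable in $v$ with $\nabla_v Q_{R|Z}\neq 0$, i.e.\ a positive, finite conditional density of $R$ given $Z$. ID1(iii) gives only smoothness in $(r,z)$ plus relevance. That is exactly what the paper's implicit-function argument along the level set requires, and that argument yields existence of $\nabla_r h$ as a by-product. To stay within ID1, differentiate $g$ along directions $(dr,dz_1)$ with $dr=\nabla_{z_1}Q_{R|Z}'dz_1$, which keep $v$ fixed; this is precisely the paper's argument.

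What your route buys is transparency. The identity $\nabla_{z_1}g=-c\,\nabla_{z_1}Q_{R|Z}$ shows the two gradients are exactly collinear. Hence the paper's minimum-norm choice is immaterial: any direction with $\nabla_{z_1}Q_{R|Z}'dz_1=1$ gives the same answer, and there is an overidentifying restriction when $\dim z_1>1$. It also shows why the full $z=(x,z_1)$ cannot be used. With full $z$, the right-hand side picks up the extra term $\nabla_x Q_{R|Z}'\nabla_x h/(\nabla_z Q_{R|Z}'\nabla_z Q_{R|Z})$, which the paper avoids only implicitly by writing $z(v,x,r)$ with $x$ held fixed.

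One minor correction: the region $\{D_\tau(\eta)=1\}$ is open because of the strict inequality together with continuity, not because $\eta>0$.
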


While model (\ref{censor1})-(\ref{endo1}) offers a comprehensive insight into identification of the censored quantile model with an endogenous regressor, a fully nonparametric specification is rarely practical in
empirical applications due to the curse of dimensionality in estimation. A
semiparametric specification, by contrast, is more comfortable to researchers, owning to its tractability and interpretability. In particular, a partial linear form of $Q_{Y^{\ast}|X,R,V}(\tau |X,R,V)$ makes distributional effects of interest straightforward to be characterized by quantile coefficients. This specification is commonly used in the estimation of quantile regressions, since one can directly compare estimates via the corresponding quantile coefficients. Moreover, an appropriate specification on the endogenous regressor is desirable for convenient estimation of the control variable that accounts for endogeneity, while maintaining robustness to multiplicative heteroskedasticity. In this regard, an additive structure offers a reasonable setting.

These considerations motivate us to further consider a semiparametric model similar to \citet{blundell2007censored}:
\begin{eqnarray}
	Y &=&\max \{Y^{\ast },0\},   \label{censor2} \\
	Y^{\ast } &=&X^{\prime }\beta (U)+R\alpha (U)+m(V,U),  \label{latent2} \\
	R &=&h(Z)+\sigma (Z)F_{\epsilon ^{\ast }}^{-1}(V) \label{endo2}
\end{eqnarray}%
where $m(v,\tau)$ is an unknown function and $h(z)$ may be a linear or nonparametric function. The heteroskedastic error in (\ref{endo2}) has a multiplicative form, $%
\sigma (z)\epsilon ^{\ast }$, where $\epsilon ^{\ast }=F_{\epsilon ^{\ast}}^{-1}(V)$ and $\sigma ^{2}(z)$ is the variance function with $\sigma
(z)>0$. (\ref{latent2}) imposes a partial linear form on the conditional quantile function of $Y^{\ast }$,
$Q_{Y^{\ast}|X,R,V}(\tau |X,R,V)=X^{\prime }\beta (\tau )+R\alpha (\tau )+m(V,\tau )$,
so that $\beta (\tau )$ and $\alpha (\tau )$ characterize the distributional
effects of $X$ and $R$, respectively. \citet{blundell2003endogeneity,blundell2007censored} and \citet*{lee2007endogeneity} have extensively discussed the primitive conditions to justify such a specification. Although $(X,R)$ are additively separable from $V$, the model (\ref{censor2})-(\ref{endo2}) remains sufficiently general to accommodate censoring, endogeneity, and multiplicative heteroskedasticity, and is compatible with many empirically specifications.
\begin{example}\label{ex1}
	\begin{eqnarray*}
	Y &=&\max \{X^{\prime }\beta +R\alpha +\varepsilon ,0\}, \\
	R &=&Z^{\prime }\gamma +\epsilon ,
	\end{eqnarray*}%
	where $\epsilon =\sqrt{Z^{\prime }\tilde{\gamma}}\Phi ^{-1}(V)$ with $\Phi
	^{-1}(\cdot )$ denoting the inverse function of the standard normal
	distribution and $V\sim U(0,1)$ independent of $Z$; and $\varepsilon =\rho
	\Phi ^{-1}(V)+\sqrt{1-\rho ^{2}}(X^{\prime }\tilde{\beta}+R\tilde{\alpha}%
	)\Phi ^{-1}(U)$ with $U\sim U(0,1)$ independent of $(X,R,V)$. The distributional effects are $\beta (\tau )=\beta +%
\sqrt{1-\rho ^{2}}\tilde{\beta}\Phi ^{-1}(\tau )$ and $\alpha (\tau )=\alpha
+\sqrt{1-\rho ^{2}}\tilde{\alpha}\Phi ^{-1}(\tau )$.
\end{example}
\begin{example}(Blundell and Powell, 2007)\label{ex2}
	\begin{eqnarray*}
		Y &=&max\{X^{\prime }\beta(\tau)+R\alpha(\tau) +\varepsilon(\tau),0\}, \\
		R &=&\pi(Z) +\epsilon ,
	\end{eqnarray*}%
	where $\varepsilon(\tau)$ satisfies the distributional exclusion restriction in that $\varepsilon(\tau)|R,Z \sim \varepsilon(\tau)|R,X,V \sim \varepsilon(\tau)|V$.
\end{example}
For identification, notice that
\begin{equation}
	\Pr (Y\leq X^{\prime }\beta (\tau )+R\alpha (\tau )+m(V,\tau )|X,R,V,D_{\tau
	}(\eta )=1)=\tau \text{ }a.s.  \label{imp}
\end{equation}%
To handle unobserved $D_{\tau }(\eta )$, we impose an initial condition that
there exists a quantile level $\tau _{0}$ such that, for all  $\tau \geq \tau
_{0} $, the conditional quantile of $Y$ is unaffected by censoring and hence coincides with that of $Y^{\ast }$, that is
$Q_{Y^{\ast }|X,R,V}(\tau |X,R,V)>0$ almost surely. This condition is mild and typically satisfied in empirical applications. For instance, the high-ability worker's wage is unlikely to be influenced by the minimum wage policy. See \citet{chen2018sequential} for more discussions. Subsequently, $(\beta (\tau ),\alpha
(\tau ),m(v,\tau ))$ can be identified from (\ref{imp}) for any $\tau \geq \tau_{0} $. 

When $\tau <\tau _{0}$,
let $\tau _{L}$ denote the lowest quantile level at which the available data is
adequate for regression. Partition the quantile interval $[\tau _{L},\tau _{0})$
to a decreasing sequence of quantile points, $\tau _{0}>\tau _{1}>\cdots >\tau _{L}$.
At $\tau _{1}$, we have%
\begin{equation*}
	D_{\tau _{1}}(\eta )\geq 1\{Q_{Y^{\ast }|X,R,V}(\tau _{0}|X,R,V)>\tilde{\eta}%
	\}\equiv D_{\tau _{1}}(\tau _{0},\tilde{\eta})  a.s.,
\end{equation*}%
if $\tilde{\eta}$ is chosen as specified in the proof of Theorem \ref{Thm2}. As $Q_{Y^{\ast}|X,R,V}(\tau _{0}|x,r,v)$ has already been previously identified, the selector $D_{\tau _{1}}(\tau _{0},\tilde{\eta})$ is available. Based on the fact that $D_{\tau
	_{1}}(\tau _{0},\tilde{\eta})=1$ implies $D_{\tau _{1}}(\eta )=1$, we obtain
\begin{equation}
	\Pr (Y\leq X^{\prime }\beta (\tau _{1})+R\alpha (\tau _{1})+m(V,\tau
	_{1})|X,R,V,D_{\tau _{1}}(\tau _{0},\tilde{\eta})=1)=\tau _{1}\text{ }a.s.
	\label{feaimp}
\end{equation}%
Consequently, $(\beta (\tau _{1}),\alpha (\tau
_{1}),m(v,\tau _{1}))$ are identified from the implication (\ref{feaimp}). By recursively applying the strategy, we can identify $(\beta (\tau
_{l}),\alpha (\tau _{l}),m(v,\tau _{l}))$ for $l=1,\cdots ,L$ in sequence. To formalize it, we make the following assumptions and establish the identification for model (\ref{censor2})-(\ref{endo2}) in Theorem \ref{Thm2}.

\setcounter{assumption}{0}
\renewcommand{\theassumption}{ID2}
\begin{assumption}[Identification]\label{assum:ident2}
$\,$ 

(i) (Monotonicity and Continuous Distribution) \textit{ $\epsilon ^{\ast
	} $ is independent of $Z$ and continuously distributed with $F_{\epsilon
		^{\ast }}(e)$ strictly increasing and $Q_{\epsilon ^{\ast }}(%
	\tilde{v})=0$ for some $\tilde{v}$\footnote{An alternative normalization is $E(\epsilon ^{\ast })=0$.}; $\sigma (z)>0$ for all $z$; $F_{Y^{\ast
		}|X,R,V}(y|x,r,v)$ and its inverse are strictly increasing.}

(ii) (Exclusion) \textit{$U|R,Z\sim U|X,R,V\sim U(0,1)$ where $Z=(X,Z_{1})$ with $Z_{1}$ excluded from the latent outcome equation (\ref{latent2}).}

(iii) (Full Rank) \textit{The matrix $E[D_{\tau _{1}}(\tau _{0},\tilde{\eta})f_{Y^{\ast
		}|X,R,V}(Q_{Y^{\ast }|X,R,V}(\tau |X,R,V)|X,R,V)P(\tau )P(\tau )^{\prime }] $ is positive definite uniformly in $\tau $, where 	
	\begin{equation*}
		P(\tau )=\dbinom{X}{R}-\frac{E\left[ \left. D_{\tau }(\eta )f_{Y^{\ast
				}|X,R,V}(Q_{Y^{\ast }|X,R,V}(\tau |X,R,V)|X,R,V)\dbinom{X}{R}\right\vert V%
			\right] }{E\left[ \left. D_{\tau }(\eta )f_{Y^{\ast }|X,R,V}(Q_{Y^{\ast
				}|X,R,V}(\tau |X,R,V)|X,R,V)\right\vert V\right] }.
	\end{equation*}}
(iv) (Initial Condition) \textit{There exists $\tau _{0}\in (0,1)$ such
	that $\Pr (X^{\prime }\beta (\tau )+R\alpha (\tau )+m(V,\tau )>0)=1$ for 
	$\tau _{0}\leq \tau \leq \tau _{U}$.}
	
(v) (Continuity) \textit{$(\beta (\tau ),\alpha (\tau ),m(v,\tau ))$ is
	Lipschitz continuous in $\tau $ for all $v$.}
\end{assumption}

Assumption ID2(i)-(iii) are analogous to Assumption ID1. Following the literature on multiplicative
heteroskedasticity, we assume that $\epsilon ^{\ast }$ is independent of $Z$ for simplicity, which implies $V=F_{\epsilon ^{\ast }}\left( \frac{R-h(Z)}{\sigma (Z)}\right) $. The exclusion restriction under (\ref{latent2}) can be justified by the distributional exclusion restriction in \citet{blundell2007censored} which states conditional on observable covariates, $Z$ affects the unobserved characteristics only through $V$. The full rank condition in Assumption ID2 (iii) is akin to \citet{lee2007endogeneity} except that ours is formulated in a uniform version and for censored data model. The additional Assumption ID2(iv)-(v) are
introduced to address the fixed censoring on the dependent variable. Assumption ID2(iv) is the initial condition. Assumption ID2(v) requires continuity of $(\beta (\tau ),\alpha (\tau),m(v,\tau ))$ over $\tau $, ensuring that the selector can be replaced by an available counterpart based on information from a neighboring quantile level. Intuitively, as adjacent quantile levels become close, the threshold $\tilde{\eta}$ can approximate $\eta$.

\begin{theorem}\label{Thm2} 
	Under the model (\ref{censor2})-(\ref{endo2}), let Assumption ID2 hold,
	then, $(\beta (\tau ),\alpha (\tau ))$ for $\tau \in \lbrack \tau _{L},\tau
	_{U}]$ is identified.
\end{theorem}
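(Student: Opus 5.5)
The proof runs in three steps. First I identify the control variable $V$. Then I show that for each $\tau\ge\tau_0$ the coefficients are the unique minimizer of a population check-function problem on the uncensored subpopulation. Finally I extend downward to $[\tau_L,\tau_0)$ by induction along the grid $\tau_0>\tau_1>\cdots>\tau_L$, replacing the infeasible selector $D_\tau(\eta)$ by the feasible $D_{\tau_{l}}(\tau_{l-1},\tilde\eta)$.

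\textbf{Step 1.} By Assumption ID2(i), $\epsilon^*$ is independent of $Z$ and $F_{\epsilon^*}$ is strictly increasing, so
\[
Q_{R|Z}(v|z)=h(z)+\sigma(z)F_{\epsilon^*}^{-1}(v).
\]
Under the normalization $Q_{\epsilon^*}(\tilde v)=0$ we get $h(z)=Q_{R|Z}(\tilde v|z)$, and $\sigma(z)$ is identified up to scale by, e.g., an interquantile range. In any case $V=F_{R|Z}(R|Z)$ is identified directly from the joint distribution of $(R,Z)$. So from here on $V$ is treated as observed.

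\textbf{Step 2 ($\tau\ge\tau_0$).} By ID2(iv), $D_\tau(\eta)=1$ a.s. with $\eta=0$. Hence $Q_{Y|X,R,V}(\tau|\cdot)=Q_{Y^*|X,R,V}(\tau|\cdot)$, using ID2(ii) so that conditioning on $(X,R,V)$ gives $U\sim U(0,1)$. Then $(\beta(\tau),\alpha(\tau),m(\cdot,\tau))$ minimizes
\[
E\bigl[\rho_\tau\bigl(Y-X'b-Ra-g(V)\bigr)\bigr]
\]
over $(b,a,g)$. For uniqueness, I take a second-order expansion of this objective around the truth using the density $f_{Y^*|X,R,V}$. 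Concentrating out $g$ by projecting onto functions of $V$ produces exactly the weighted residual $P(\tau)$ of ID2(iii). The positive definiteness in ID2(iii) then gives strict convexity in $(b,a)$ locally. Since the population check objective is convex and the minimizer set is convex, local uniqueness gives global uniqueness, so $(\beta(\tau),\alpha(\tau))$ is identified, and then so is $m(\cdot,\tau)$.

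\textbf{Step 3 (induction for $\tau<\tau_0$).} Suppose $q_{l-1}(x,r,v)=Q_{Y^*|X,R,V}(\tau_{l-1}|x,r,v)$ is identified. By ID2(v) there is a Lipschitz constant $C$, together with bounded support of $(X,R)$ (or a trimming set on which they are bounded), such that
\[
|q_{l-1}(x,r,v)-q_l(x,r,v)|\le \bar C(\tau_{l-1}-\tau_l).
\]
Choosing $\tilde\eta=\eta+\bar C(\tau_{l-1}-\tau_l)$ then gives
\[
D_{\tau_l}(\tau_{l-1},\tilde\eta)=1\{q_{l-1}>\tilde\eta\}\le 1\{q_l>\eta\}=D_{\tau_l}(\eta).
\]
On the selected set, $q_l>\eta\ge0$, so the censoring does not affect the $\tau_l$ conditional quantile, and (\ref{feaimp}) holds. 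Because the selector is a function of $(X,R,V)$ only, conditioning on it preserves the conditional quantile restriction. Repeating the Step 2 argument with weight $D_{\tau_l}(\tau_{l-1},\tilde\eta)$ uses exactly the full-rank matrix in ID2(iii), which is stated with this feasible selector. This identifies $(\beta(\tau_l),\alpha(\tau_l),m(\cdot,\tau_l))$. Intermediate $\tau\in(\tau_l,\tau_{l-1})$ are handled identically using $\tau_{l-1}$ as the anchor, which yields the whole interval $[\tau_L,\tau_U]$.

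\textbf{Main obstacle.} I expect the hard part to be Step 3's choice of $\tilde\eta$ so that the feasible selector is dominated by the true one uniformly. This needs the Lipschitz bound in ID2(v) to hold uniformly over the support of $(X,R,V)$. ID2(v) gives this only pointwise in $v$, so I would either add boundedness of the support or restrict to a compact set. A secondary difficulty is checking that the feasible selector does not make the full-rank matrix degenerate; ID2(iii) is assumed precisely for this.
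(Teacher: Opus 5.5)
Your overall architecture matches the paper's proof. You identify $V=F_{R|Z}(R|Z)$, use ID2(iv) to remove censoring for $\tau\ge\tau_0$, descend the grid with a threshold that makes $D_{\tau_l}(\tau_{l-1},\tilde\eta)\le D_{\tau_l}(\eta)$, and anchor each intermediate $\tau$ at the preceding grid point. Your threshold $\tilde\eta=\eta+\bar C(\tau_{l-1}-\tau_l)$ is stated more carefully than the paper's, which defines it through $\sup_{x,r,v}|Q_{Y^*|X,R,V}(\tau_1|x,r,v)-Q_{Y^*|X,R,V}(\tau_0|x,r,v)|$ and tacitly assumes that quantity is finite. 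Your remark that this needs bounded $(X,R)$ (or trimming) and a Lipschitz constant uniform in $v$ is correct.

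The genuine difference is the uniqueness step. The paper never uses the check-function criterion. On the selected set, the conditional $\tau$-quantile of $Y$ given $(X,R,V)$ is a conditional quantile of observables, so any competitor $(b,a,\tilde m)$ must satisfy the a.s. identity (A1). Applying the weighted conditional expectation given $V$ (weights $D_{\tau_1}(\tau_0,\tilde\eta)f$) and subtracting removes $\tilde m-m$. This leaves $(X-X^*)'(b-\beta)+(R-R^*)(a-\alpha)=0$, and full rank finishes.

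You instead show that the truth is the unique minimizer of a convex population criterion whose profiled Hessian is an $f$-weighted matrix. Your route explains why ID2(iii) carries the density weight, and it ties identification to the criterion the TNS estimator actually minimizes; the same curvature $\underline c_\Omega$ drives the proof of Theorem 3.

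It costs regularity that ID2 does not supply. A second-order expansion of $t\mapsto E[\rho_\tau(Y-Q-t\Delta)]$, with $\Delta=X'(b-\beta)+R(a-\alpha)+(g-m)(V)$, needs differentiability of $F_{Y^*|X,R,V}$ at the quantile, a bounded density, and square-integrable $\Delta$. The local-to-global step is best written as a segment argument. A second minimizer would make the convex objective constant on the segment, forcing $E[f\Delta^2]=0$, although $E[f\Delta^2]\ge E[f(P'\Delta_{ba})^2]>0$ with $\Delta_{ba}=(b-\beta,a-\alpha)$.

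The paper's route is more elementary. Strict monotonicity of $F_{Y^*|X,R,V}$ alone gives $E[(F(Q+t\Delta)-F(Q))\Delta]>0$ unless $\Delta=0$ a.s. That already forces any minimizer to reproduce the conditional quantile, which reduces your argument to (A1); from there the $f$-weights are inessential.

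One point needs correcting, in both your argument and the paper's. Below $\tau_0$ the matrix you actually need is $E[Df\tilde P\tilde P']$, where $D=D_{\tau_l}(\tau_{l-1},\tilde\eta)$ and $\tilde P$ is the residual of the $Df$-weighted projection of $(X',R)'$ on functions of $V$. By contrast, $P(\tau)$ in ID2(iii) is built with inner weights $D_\tau(\eta)f$. The $Df$-weighted residual minimizes the $Df$-weighted second moment, so $E[Df\tilde P\tilde P']\preceq E[DfP(\tau)P(\tau)']$. Hence positive definiteness in ID2(iii), as literally written, does not deliver what you use. Your claim that the profiled Hessian is ``exactly'' the ID2(iii) matrix is therefore not right.

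The paper's proof has the same slippage: it invokes $E[D_{\tau_1}(\eta)fP(\tau_1)P(\tau_1)']$, which is neither matrix. The fix is to read ID2(iii) with the feasible selector in both the inner and outer weights.
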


It is remarkable that the
moment-based method developed by \citet{wang2021moment} only applies to censored quantile models in which the endogenous regressor is
discrete or, at minimum one-sided bounded. When the
endogenous regressor is unbounded, the moment-based estimator cannot be
implemented through the trimming technique, otherwise the fundamental moment condition is invalid. Our approach therefore complements the existing moment-based method in
terms of accommodating unbounded endogenous regressors.

\subsection{TNS Estimation}
For practical purpose, this subsection develops the TNS estimator for model (\ref{censor2})-(\ref{endo2}) and defers the
estimation of model (\ref{censor1})-(\ref{endo1}) to Section 4. The estimation proceeds in two steps, each comprising a sequence of series quantile regressions.

\medskip
\noindent\textit{Step 1: control variable estimation}
\medskip

To estimate the control variable, we adopt a simple regression-based approach. In principle, the TNS estimation accommodates a broad class of conditional CDF estimators—an active area of ongoing research estimator (see, e.g., \citet{cattaneo2024boundary}). 

Based on the identification result, $V_{i}$ evaluates the control function
\begin{eqnarray*}
v(r,z)=F_{\epsilon ^{\ast }}\left( \frac{r-h(z)}{\sigma (z)}\right) 
\end{eqnarray*}
at $v(r,z)=(R_{i},Z_{i})$. If the distribution of $\epsilon ^{\ast }$ is specified, $%
V_{i}=v(R_{i},Z_{i})$ can be estimated by
\begin{equation}
	\hat{V}_{i}=\hat{v}(R_{i},Z_{i})=F_{\epsilon ^{\ast }}\left( \frac{R_{i}-%
		\hat{h}(Z_{i})}{\hat{\sigma}(Z_{i})}\right) .  \label{ctrl}
\end{equation}%
Here, $\hat{h}(z)=P_{J_{Z}}(z)^{\prime }\hat{\gamma}_{\tilde{v}}$ where $%
P_{J_{Z}}(z)$ is a $J_{z}\times 1$ vector of basis functions and $\hat{%
	\gamma}_{\tilde{v}}$ is obtained from a $\tilde{v}$-th series quantile
regression, i.e.,
\begin{eqnarray*}
	 \hat{\gamma}_{\tilde{v}}=\underset{c}{\arg \min }\frac{1}{n}%
\sum_{i=1}^{n}\rho _{\tilde{v}}(R_{i}-P_{J_{Z}}(Z_{i})^{\prime }c)
\end{eqnarray*} 
with the check function $\rho _{\tilde{v}}(u)=(\tilde{v}-1\{u<0\})u$. The
estimator of $\sigma ^{2}(z)$ is given by 
\begin{eqnarray*}
	\hat{\sigma}^{2}(z)=\left( \frac{%
	\hat{Q}_{R|Z}(\tilde{v}_{2}|z)-\hat{Q}_{R|Z}(\tilde{v}_{1}|z)}{F_{\epsilon
		^{\ast }}^{-1}\left( \tilde{v}_{2}\right) -F_{\epsilon ^{\ast }}^{-1}\left( 
	\tilde{v}_{1}\right) }\right) ^{2},
\end{eqnarray*}
where $\hat{Q}_{R|Z}(\tilde{v}_{1}|z)$
and $\hat{Q}_{R|Z}(\tilde{v}_{2}|z)$ are nonparametric estimates of $%
\tilde{v}_{1}$-th and $\tilde{v}_{2}$-th quantiles of $R$ given $Z=z$. It is worth pointing out that our estimation is easily extended to the case that the distribution of $\epsilon ^{\ast }$ is left unspecified and estimated nonparametrically based on the condition that  
\begin{equation*}
	E(1\{R<r\}|Z=z)=F_{\epsilon ^{\ast }}\left( \frac{r-h(z)}{\sigma (z)}\right)
	.
\end{equation*}

At this point, one may wonder why we do not employ the quantile-regression-based
counterfactual distribution function (QR-CDF) estimator, $\hat{V}_{i}=\omega
+\int_{\omega }^{1-\omega }1\{\hat{Q}_{R|Z}(v|Z_{i})<R_{i}\}dv$. To answer
this query, notice that the QR-CDF estimator requires a cut-off $\omega $ to
avoid evaluation at extreme quantiles. Under the current setting, $\omega $
should be $o(n^{-1/2})$ to ensure a negligible asymptotic bias. However, the
series quantile regression estimator $\hat{Q}_{R|Z}(v|Z_{i})$ typically converges slowly at extreme quantiles near $\omega $ or 
$1-\omega $. Consequently, the induced estimator $\hat{V}_{i}$ converges at a rate slower than $n^{-1/4}$, thereby contradicting Assumption A4.\medskip

\medskip
\noindent\textit{Step 2: distributional effects estimation}
\medskip

Once $\hat{V}_{i}$ is available, we proceed to the main target—
estimating the distributional effects $\alpha (\tau )$ and $%
\beta (\tau )$. Before a formal analysis, a noteworthy point is several
classical censored quantile methods cannot be directly extended to a partial
linear specification of the conditional quantile function of the latent outcome. \citet{powell1986censored} exploited the invariant property of the
quantile function with respect to a monotone transformation to handle
censoring, that is, $Q_{Y|X}(\tau |X)=\max \{Q_{Y^{\ast }|X}(\tau
|X),0\}=\max \{X^{\prime }\beta (\tau ),0\}$. However, the resulting optimization problem is non-convex, which severely limits its applicability in high-dimensional settings, particularly induced by the series expansion of $m(v,\tau )$. \citet{chen2003estimation}
briefly offered a series estimator based on the moment condition: 
\begin{equation*}
	E\left[1\{X_{i}^{\prime }b+R_{i}a+P_{J}(V_{i})^{\prime }d>0\}(\tau -1\{Y_{i}\leq X_{i}^{\prime }b+R_{i}a+P_{J}(V_{i})^{\prime }d\})\dbinom{X_{i}}{R_{i}}\right]=0.
\end{equation*}%
But the estimator derived from the sample analog leads to inconsistent estimates, as $1\{X_{i}^{\prime }b+R_{i}a+P_{J}(V_{i})^{\prime }d>0\}$ tends to select the subsample incorrectly.

To this end, we propose to combine the series estimation with the sequential analysis implied by the identification strategy. Let $P_{J}(v)=(p_{1}(v),\cdots
,p_{J}(v))^{\prime }$ denote a $J\times 1$ vector of basis functions with
coefficient vector $\delta (\tau )$, so that $P_{J}(v)^{\prime }\delta (\tau
)$ approximates $m(v,\tau )$ well in the sense that the approximation error $%
r(v,\tau )=m(v,\tau )-P_{J}(v)^{\prime }\delta (\tau )$ vanishes as $%
J\rightarrow \infty $. For $\tau _{0}\leq \tau \leq \tau _{U}$, $(\beta
(\tau ),\alpha (\tau ),\delta (\tau ))$ is estimated by%
\begin{equation*}
	(\hat{\beta}(\tau ),\hat{\alpha}(\tau ),\hat{\delta}(\tau ))=\underset{%
		(b,a,d)\in \mathcal{B\times A\times D}}{\arg \min }\frac{1}{n}%
	\sum_{i=1}^{n}T_{i}\rho _{\tau }(Y_{i}-X_{i}^{\prime }b-R_{i}a-P_{J}(\hat{V}%
	_{i})^{\prime }d),
\end{equation*}%
where $\mathcal{B\times A\times D}$ is the parameter space. The trimming
variable $T_{i}$ $=1\{X_{i}\in \mathcal{\bar{X}},R_{i}\in \mathcal{\bar{R}}%
,Z_{i}\in \mathcal{\bar{Z}}\}$ is used to eliminate observations out of the bounded
set $\mathcal{\bar{X}\times \bar{R}\times \bar{Z}}$, where the bounds are
permitted to diverge as the sample size increases. It is noteworthy that the trimming technique is implemented only in the
second-step estimation and only to observed regressors, making it considerably simpler than that used in conventional control function methods which additionally trim on residuals.\footnote{Trimming on residuals requires the trimming threshold to go infinity at a slower rate than the convergence rate of residuals, which introduces a delicate tuning problem. By contrast, our estimation has no such a problem because we need not trim on $\hat{V}_{i}$.}

When censoring comes into play, that is, for $\tau _{L}\leq \tau <\tau _{0}$%
, we partition the quantile interval $[\tau _{L},\tau _{0})$ into a sequence of grid points collected by the set $\mathcal{T=}\{\tau _{0},\tau _{1},\cdots ,\tau _{L}\}$ with $\tau _{0}>\tau _{1}>...>\tau _{L}$. In empirical applications, $\tau _{L}$ is set a priori or numerically. For the quantile level $\tau _{1}$, define the feasible subsample selector as 
\begin{equation*}
	\hat{D%
}_{\tau _{1}i}=1\{X_{i}^{\prime }\hat{\beta}(\tau _{0})+R_{i}\hat{\alpha}%
(\tau _{0})+P_{J}(\hat{V}_{i})^{\prime }\hat{\delta}(\tau _{0})>\tilde{\eta}%
_{n}\}
\end{equation*}
where $\tilde{\eta}_{n}\geq \eta+\max (c/L,\sup_{\tau \in \lbrack \tau
	_{L},\tau _{U}]}\sup_{v\in \mathcal{\bar{V}}}\left\vert r_{J}(v,\tau
)\right\vert,c\max_{i}\left\vert \hat{V}_{i}-V_{i}\right\vert,c\nu _{n})$ for a positive constant $c$ and a positive
sequence $\nu _{n}\rightarrow 0$ at a rate slower than the uniform
convergence rate of $(\hat{\beta}(\tau ),\hat{\alpha}(\tau ),\hat{m}(v,\tau
))$. As $L$ grows with sample size, $\tau _{1}$ gets arbitrarily close to $\tau _{0}$. This enables us to choose $\tilde{\eta}_{n}$ slightly larger than $\eta$ so that, for observations with $\hat{D}_{\tau _{1}i}=1$, the $\tau_{1}$th quantile of $Y^{\ast }$ conditional on $(X,R,V)$ is positive. Accordingly, the TNS estimator for $(\beta (\tau _{1}),\alpha (\tau
_{1}),\delta (\tau _{1}))$ is defined as 
\begin{equation*}
	(\hat{\beta}(\tau _{1}),\hat{\alpha}(\tau _{1}),\hat{\delta}(\tau _{1}))=%
	\underset{(b,a,d)\in \mathcal{B\times A\times D}}{\arg \min }\frac{1}{n}%
	\sum_{i=1}^{n}\hat{D}_{\tau _{1}i}T_{i}\rho _{\tau _{1}}(Y_{i}-X_{i}^{\prime
	}b-R_{i}a-P_{J}(\hat{V}_{i})^{\prime }d).
\end{equation*}%

We then proceed sequentially to $\tau _{2},\cdots,\tau _{L}$. For $\tau _{l}\in \mathcal{T}$ with $l=2,\cdots ,L$, the TNS estimator of $(\beta (\tau _{l}),\alpha
(\tau _{l}),\delta (\tau _{l}))$ is%
\begin{equation*}
	(\hat{\beta}(\tau _{l}),\hat{\alpha}(\tau _{l}),\hat{\delta}(\tau _{l}))=%
	\underset{(b,a,d)\in \mathcal{B\times A\times D}}{\arg \min }\frac{1}{n}%
	\sum_{i=1}^{n}\hat{D}_{\tau _{l}i}T_{i}\rho _{\tau _{l}}(Y_{i}-X_{i}^{\prime
	}b-R_{i}a-P_{J}(\hat{V}_{i})^{\prime }d).
\end{equation*}
where $\hat{D}_{\tau _{l}i}=1\{X_{i}^{\prime }\hat{\beta}(\tau _{l-1})+R_{i}\hat{%
	\alpha}(\tau _{l-1})+P_{J}(\hat{V}_{i})^{\prime }\hat{\delta}(\tau _{l-1})>%
\tilde{\eta}_{n}\}$ which is determined by estimates at the previous quantile level. Although both $\hat{D}_{\tau _{l}i}$ and $T_{i}$ are indicator variables, they serve different purposes. $\hat{D}_{\tau _{l}i}$ selects the subsample with positive $\tau_{l}$th conditional quantile of latent outcome, whereas $T_{i}$ is used to remove outliers.
 
In general, for any $\tau \in \lbrack \tau
_{L},\tau _{0})$, there exists $l$ such that $\tau \in \lbrack \tau
_{l+1},\tau _{l})$. The feasible subsample selector is given by $\hat{D}_{\tau i}=1\{X_{i}^{\prime }\hat{\beta}(\tau _{l})+R_{i}\hat{%
	\alpha}(\tau _{l})+P_{J}(\hat{V}_{i})^{\prime }\hat{\delta}(\tau _{l})>%
\tilde{\eta}_{n}\}$
where $(\hat{\beta}(\tau _{l})^{\prime },\hat{\alpha}(\tau _{l}),\hat{\delta}(\tau _{l})^{\prime })^{\prime }$ was obtained at the preceding quantile level $\tau _{l}$. The TNS
estimator of $(\beta (\tau ),\alpha (\tau ),\delta (\tau ))$ is then
\begin{equation}\label{est} 
	\begin{aligned}
	(\hat{\beta}(\tau ),\hat{\alpha}(\tau ),\hat{\delta}(\tau )) &=\underset{%
		(b,a,d)\in \mathcal{B\times A\times D}}{\arg \min }S_{n}(b,a,d,\hat{\theta}%
	(\tau _{l}),\tau ,\tilde{\eta}_{n})  \\
	&=\underset{(b,a,d)\in \mathcal{B\times A\times D}}{\arg \min }\frac{1}{n}%
	\sum_{i=1}^{n}\hat{D}_{\tau i}T_{i}\rho _{\tau }(Y_{i}-X_{i}^{\prime
	}b-R_{i}a-P_{J}(\hat{V}_{i})^{\prime }d).  
	\end{aligned}
\end{equation}
Subsequently, the consistent estimator of $m(v,\tau )$ is
$\hat{m}(v,\tau )=P_{J}(v)^{\prime }\hat{\delta}(\tau )$ for $v\in \mathcal{\bar{V}}$,
where $\mathcal{\bar{V}}$ is the range of $V_{i}$ when 
$(X_{i},R_{i},Z_{i})\in \mathcal{\bar{X}\times \bar{R}\times \bar{Z}}$.

\begin{remark}
	Unlike moment-based methods, trimming on $R_{i}$
	in the second-step estimation does not pose a problem, because the relative ranks of unobserved heterogeneity are not shifted and thus $R_{i}$ can be treated as exogenous once $V_{i}$ is controlled for.
\end{remark}

Traditional control function approaches, such as \citet{blundell2007censored} and \citet{lee2007endogeneity},
use the error term from the endogenous regressor equation to control for
endogeneity, and require removal of outlying residuals to ensure boundedness
for the nonparametric estimation. Under heteroskedasticity, the prevalence of extreme residuals may render the estimator numerically sensitive to the choice of trimming threshold. In our estimation, trimming on $\hat{V}_{i}$ is unnecessary since $V_{i}$ is inherently bounded.

The CQIV method also copes with endogenous censored quantile regression, but 
the estimator is solved from $\min_{(b,a,d)}\frac{1}{n}\sum_{i=1}^{n}T_{i}1%
\{(X_{i}^{\prime },R_{i},P_{J}(\hat{V}_{i})^{\prime })^{\prime }\hat{\zeta}%
\geq \lambda \}\rho _{\tau }(Y_{i}-X_{i}^{\prime }b-R_{i}a-P_{J}(\hat{V}%
_{i})^{\prime }d)$, where $\hat{\zeta}$ is the maximum likelihood estimate
obtained from a binary regression in which the response variable indicates
whether $Y_{i}$ is subject to censoring. As noted by \citet{chernozhukov2002three},
$\hat{\zeta}$ is inconsistent under heteroskedasticity, thereby
preventing the CQIV method from consistently estimating heterogeneous
distributional effects. By contrast, the TNS method aims at consistently estimating distributional effects and provides a robust
estimation procedure particularly in the presence of
conditional heteroskedasticity.

To compute the TNS estimator, the following outlines the estimation algorithm. For notational convenience, let $\hat{W}_{i}=(X^{\prime },R,P_{J}(\hat{V}_{i})^{\prime })^{\prime }$ and $\hat{\theta}(\tau)=(\hat{\beta} (\tau )^{\prime },\hat{\alpha} (\tau),\hat{\delta}(\tau)^{\prime })^{\prime }$.\bigskip

\noindent\textbf{TNS Estimation Algorithm} \medskip

\noindent \textit{Step 1. Control variable estimation}\medskip

\noindent%
\hangafter 1%
\hangindent=4em%
Step 1.1: Run a $\tilde{v}$th series quantile regression
	of $R_{i}$ \ on $P_{J_{Z}}(Z_{i})$\ to obtain the
	coefficient $\hat{\gamma}_{\tilde{v}}$.\medskip
	
\noindent%
\hangafter 1%
\hangindent=4em%
Step 1.2: Run $\tilde{v}_{1}$th and $\tilde{v}_{2}$th series quantile regressions of $R_{i}$ on $P_{J_{Z}}(Z_{i})$ to obtain $\hat{\gamma}_{\tilde{v}_{1}}$ and $\hat{\gamma}_{\tilde{v}_{2}}$.\medskip

\noindent%
\hangafter 1%
\hangindent=4.5em%
Step 1.3: Construct the estimate of $\sigma ^{2}(z)$ as
$\hat{\sigma}^{2}(z)=\left( \frac{\hat{Q}_{R|Z}(\tilde{v}_{2}|z)-\hat{Q}%
	_{R|Z}(\tilde{v}_{1}|z)}{F_{\epsilon ^{\ast }}^{-1}(\tilde{v}%
	_{2})-F_{\epsilon ^{\ast }}^{-1}(\tilde{v}_{1})}\right) ^{2}$ with 
$\hat{Q}_{R|Z}(\tilde{v}_{1}|z)=P_{J_{Z}}(z)^{\prime }\hat{\gamma}_{\tilde{v}%
	_{1}}$ and $\hat{Q}_{R|Z}(\tilde{v}_{2}|z)=P_{J_{Z}}(z)^{\prime }%
\hat{\gamma}_{\tilde{v}_{2}}$.\medskip

\noindent%
\hangafter 1%
\hangindent=4.5em%
Step 1.4: Estimate $V_{i}$ by $\hat{V}%
_{i}=F_{\epsilon ^{\ast }}\left( \frac{R_{i}-P_{J_{Z}}(Z_{i})^{\prime }\hat{%
		\gamma}_{\tilde{v}}}{\hat{\sigma}(Z_{i})}\right) $.\medskip

\noindent \textit{Step 2. Distributional effects estimation}\medskip

\noindent%
\hangafter 1%
\hangindent=4em%
Step 2.1: Drop observations with extreme values of $(X_{i},R_{i},Z_{i})$.\medskip

\noindent%
\hangafter 1%
\hangindent=4em%
Step 2.2: Choose a set  $\mathcal{T=}\{\tau _{0},\tau _{1},\cdots ,\tau _{L}\}$ with $\tau _{0}>\tau _{1}>...>\tau _{L}$. Run a $\tau _{0}$th series quantile regression of $Y_{i}$ on $\hat{W}_{i}$ and obtain $\hat{\theta}(\tau _{0})$.\medskip

\noindent%
\hangafter 1%
\hangindent=4.5em%
Step 2.3: Using the subsample that $\hat{W}_{i}^{\prime }\hat{%
\theta}(\tau _{0})$ are greater than the $q_{0}$th
quantile of positive $\hat{W}_{i}^{\prime }\hat{\theta}(\tau _{0})$, run a $\tau _{1}$th series quantile regression to
obtain $\tilde{\theta}(\tau _{1})$.\medskip

\noindent%
\hangafter 1%
\hangindent=4.5em%
Step 2.4: Update to the subsample that $\hat{W}_{i}^{\prime }\tilde{%
	\theta}(\tau _{1})$ are greater than the $q_{1}$th quantile
	of positive $\hat{W}_{i}^{\prime }\tilde{\theta}(\tau _{1})$. Then, run a
	$\tau _{1}$th series quantile regression to obtain $\hat{%
	\theta}(\tau _{1})$.\medskip

\noindent%
\hangafter 1%
\hangindent=4.5em%
Step 2.5: Repeat Step 2.3-Step 2.4 for the quantile level $\tau
_{2},\cdots ,\tau _{L}$ and obtain $\hat{\theta}(\tau _{2}),\cdots
,\hat{\theta}(\tau _{L})$.\medskip

We offer several practical guidelines for implementing the TNS algorithm. Following \citet{chernozhukov2024conditional}, Step 1.2 can set $(\tilde{v}_{1},\tilde{v}_{2})=(0.25,0.75)$. Trimming in Step 2.1 may be omitted in applications. For initialization in Step 2.2, follow \citet{chen2023two} to set the initial quantile level to $\tau_{0}=0.99$ or $0.95$. In steps 2.3-2.4, once the grid step size is specified, the number of quantile grid points $L$ is implied; our numerical analysis indicates that a step size of $\tau _{l-1}-\tau _{l}=0.01$ performs well. The selector threshold $\tilde{\eta}_{n}$ may be either a constant or a statistic converging to $\eta $. In line with \citet{chernozhukov2015quantile}, we recommend the updating procedure with $(q_{0},q_{1})=(0.1,0.03)$ or $(0.05,0.01)$. Finally, if $P_{J}(v)$ is composed of raw polynomials, use the cross-validation method to determine a number $J_{p0}$ and then set the working order $J_{p}$($J=J_{p}+1$) to a slightly larger number for undersmoothing; if $P_{J}(v)$ is composed of third order B-splines,
our numerical experiments indicate that the TNS estimator is quite robust to the
number of inner knots $J_{k}$($J=J_{k}+4$).

\section{Asymptotic Properties}

This section establishes the asymptotic properties of the TNS estimator
in (\ref{est}). For easy notations, let $W=(X^{\prime },R,P_{J}(V)^{\prime })^{\prime }$
and $\theta (\tau )=(\beta (\tau )^{\prime },\alpha (%
\tau),\delta(\tau)^{\prime })^{\prime }$. We make several regularity assumptions below.
\setcounter{assumption}{0}
\renewcommand{\theassumption}{A1}
\begin{assumption}\textit{$\{Y_{i},X_{i},R_{i},Z_{i}\}_{i=1}^{n}$ is a random
sample drawn from the distribution of the pair $(Y,X,R,Z)$ generated
according to model (\ref{censor2})-(\ref{endo2}) with Assumption ID2.}
\end{assumption}
\renewcommand{\theassumption}{A2}
\begin{assumption} \textit{For all $(x,r,v)\in \mathcal{\bar{X}\times \bar{R}%
		\times \bar{V}}$, $f_{Y^{\ast }|X,R,V}(y|x,r,v)$ is bounded from above by $%
	\bar{f}$, and $f_{Y^{\ast }|X,R,V}(y|x,r,v)$ is bounded away from zero and
	differentiable in the neighborhood of $Q_{Y^{\ast }|X,R,V}(\tau |x,r,v)$
	with $f_{Y^{\ast }|X,R,V}^{\prime }(Q_{Y^{\ast }|X,R,V}(\tau |x,r,v)|x,r,v)$
	bounded in absolute value by $\bar{f}^{\prime }$ uniformly in $\tau $.}
\end{assumption}
\renewcommand{\theassumption}{A3}
\begin{assumption}\textit{$\beta (\tau )$, $\alpha (\tau )$ and $m(v,\tau )$
	are differentiable with respect to $\tau $ and $v$, and all partial
	derivatives are uniformly bounded.}
\end{assumption}
\renewcommand{\theassumption}{A4}
\begin{assumption}\textit{The control variable estimator $\hat{v}(r,z)$ satisfies 
	\begin{equation*}
		\sup_{(r,z)\in \mathcal{\bar{R}\times \bar{Z}}}\left\vert \hat{v}%
		(r,z)-v(r,z)\right\vert =O_{p}\left( \pi _{1n}\right) =o_{p}(n^{-1/4})
	\end{equation*}%
	and has an asymptotic representation,%
	\begin{equation*}
		\sqrt{n}\left( \hat{v}(r,z)-v(r,z)\right) =\frac{1}{\sqrt{n}}%
		\sum_{j=1}^{n}\phi (r,z,R_{j},Z_{j})+o_{p}(1),
	\end{equation*}%
	uniformly over $(r,z)\in \mathcal{\bar{R}\times \bar{Z}}$ with $E(\phi
	(r,z,R,Z))=0$.}
\end{assumption}
\renewcommand{\theassumption}{A5}
\begin{assumption}\textit{(i)Let $\Omega (\tau )=E[ TD_{\tau
	}(\tilde{\eta} )f_{Y^{\ast }|X,R,V}(Q_{Y^{\ast }|X,R,V}(\tau
	|X,R,V)|X,R,V)WW^{\prime }] $.%
	\begin{equation*}
		0<\underline{c}_{\Omega }\leq \text{mineig}\Omega (\tau )\leq \text{maxeig}%
		\Omega (\tau )\leq \bar{c}_{\Omega }<\infty
	\end{equation*}%
	uniformly in $J$ and $\tau $.}
	
	\textit{(ii) Let $G_{1}(\tau )=TD_{\tau }(\tilde{\eta} )(1\{Y<X^{\prime }\beta (\tau
	)-R\alpha (\tau )-m(V,\tau )\}-\tau )W$ and $G_{2}(\tau )=g(R,Z,\tau )$ with 
	$g(r,z,\tau )=E[TD_{\tau }(\tilde{\eta} )f_{Y^{\ast }|X,R,V}(Q_{Y^{\ast
		}|X,R,V}(\tau |X,R,V)|X,R,V)Wm^{\prime }(V,\tau )\phi (R,Z,r,z)]$. The
	matrix $\Sigma (\tau )=var\left[ G_{1}(\tau )+G_{2}(\tau )\right] $ satisfies%
	\begin{equation*}
		0<\underline{c}_{\Sigma }\leq \text{mineig}\Sigma (\tau )\leq \text{maxeig}%
		\Sigma (\tau )\leq \bar{c}_{\Sigma }<\infty
	\end{equation*}%
	uniformly in $J$ and $\tau $.}
\end{assumption}
\renewcommand{\theassumption}{A6}
\begin{assumption} \textit{For a constant $\kappa $ independent of $n$, $\sup_{\tau \in \lbrack \tau _{L},\tau
		_{U}]}\sup_{v\in \mathcal{\bar{V}}}\left\vert r_{J}(v,\tau )\right\vert
	\lesssim J^{-\kappa }$.}
\end{assumption}
\renewcommand{\theassumption}{A7}
\begin{assumption} \textit{Let $\zeta _{0}=\sup_{v\in \mathcal{\bar{V}}%
	}\left\Vert P_{J}(v)\right\Vert $. (i) $\tilde{\eta}_{n}\rightarrow \tilde{\eta} $, $%
	L\rightarrow \infty $, $\tilde{\eta}_{n}\left( \frac{\zeta _{0}^{2}J\ln n}{n}%
	\right) ^{-1/2}\rightarrow \infty $, $\tilde{\eta}_{n}\pi
	_{1n}^{-1}\rightarrow \infty $, $\tilde{\eta}_{n}L\rightarrow \infty $, $%
	\zeta _{0}^{2}J\left. \ln n\right/ n=o(1)$ and $nJ^{-2\kappa }=o(1).$}
	
	\textit{(ii) $J^{3}\zeta _{0}^{2}\ln ^{3}n/n=o(1)$, $nJ^{1-2\kappa }=o(1),$ $J(%
	\tilde{\eta}_{n}-\tilde{\eta} )\ln n=o(1)$, $J\pi _{1n}\ln n=o(1)$ and $J^{1-\kappa
	}\ln n=o(1).$}
\end{assumption}
Assumption A2 imposes standard conditions for quantile regressions. $f_{Y^{\ast }|X,R,V}
(y|x,r,v)$ is bounded and $f_{Y^{\ast
	}|X,R,V}(Q_{Y^{\ast }|X,R,V}(\tau |x,r,v)|x,r,v)$ is smoothness over the
bounded support $\mathcal{\bar{X}\times \bar{R}\times \bar{V}}$. Assumption
A3 posits that quantile coefficients and the function evolve gradually in $%
\tau $. These differentiability conditions, together with uniform
convergence rates of the control-variable estimator and the TNS estimator,
guarantee the validity of the sequential analysis.

Assumption A4 is a high-level assumption to regulate the uniform rate of
convergence and the linear asymptotic representation for the first-step
estimator. Besides of (\ref{ctrl}), any generic
conditional CDF estimator satisfying Assumption A4 can be used to estimate
the control variable. This requirement is standard for two-step estimators;
see Theorem 8.11 in \citet{newey1994large} for more details.

Assumption A5(i) imposes a nonsingularity condition for the series quantile
regression in the second-step estimation. Assumption A5(ii) is the local
identification condition for the TNS estimator. In particular, $G_{1}(\tau
)$ is the asymptotic influence of the series quantile estimation and $%
G_{2}(\tau )$ reflects that from estimation of the control variable.
Assumption A6 is a restriction on the approximation error that is common in
sieve estimation literature.  \citet{chen2007large} and \citet{belloni2019conditional} discussed
primitive conditions for it. For example, $m(v,\tau )$ belongs to a H\"{o}%
lder ball $\Omega (s,v)$ uniformly in $\tau $, where\ $s$ is the
smoothness index related to $\kappa $.

Assumption A7(i) constrains the tuning parameter $\tilde{\eta}_{n}$ in the
subsample selector, the number of series terms $J$ and the number of
quantile grid points $L$ for consistency. Once the basis function is
specified, the order of $\zeta _{0}$ is known---e.g, $\zeta _{0}=O(J)$ if $%
P_{J}(v)$ is a power series, and $\zeta _{0}=O(J^{1/2})$ if $P_{J}(v)$ is a
spline basis. It requires that $\tilde{\eta}_{n}$ exceeds the uniform
convergence rates of $\hat{m}(v,\tau )$ and $\hat{v}(r,z)$, and that $%
L\rightarrow \infty $ at a rate faster than $1/\tilde{\eta}_{n}$. Assumption
A7(ii) imposes a sharper restriction to achieve uniform asymptotic
representation and asymptotic normality of the TNS estimator.

\begin{theorem}
	\label{Thm3} Let Assumption A1-A7(i) hold. Then%
	\begin{equation*}
		\sup_{\tau \in \lbrack \tau _{L},\tau _{U}]}\left\Vert \hat{\theta}(\tau
		)-\theta (\tau )\right\Vert =O_{p}\left( \sqrt{\frac{J\ln n}{n}}\right)
	\end{equation*}%
	and%
	\begin{equation*}
		\sup_{v\in \mathcal{\bar{V}}}\sup_{\tau \in \lbrack \tau _{L},\tau _{U}]}|%
		\hat{m}(v,\tau )-m(v,\tau )|=O_{p}\left( \zeta _{0}\sqrt{\frac{J\ln n}{n}}%
		\right) +O_{p}(J^{-\kappa }).
	\end{equation*}
\end{theorem}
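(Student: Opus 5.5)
Our plan is to prove the uniform rate by induction over the grid $\mathcal{T}=\{\tau_0,\tau_1,\dots,\tau_L\}$, together with a uniform-in-$\tau$ series quantile regression argument within each cell $[\tau_{l+1},\tau_l)$. The key observation is that, on an event whose probability tends to one, the feasible selector $\hat D_{\tau i}$ agrees with an \emph{infeasible} but valid selector for every $\tau$ and every $l$ simultaneously. On that event each criterion $S_n(b,a,d,\hat\theta(\tau_l),\tau,\tilde\eta_n)$ becomes a series quantile regression on a subsample. On this subsample the $\tau$-th conditional quantile of $Y$ equals that of $Y^\ast$, namely $W'\theta(\tau)+r_J(V,\tau)$. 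Standard uniform series-QR arguments (Belloni et al.\ 2019 type) then deliver the rate. The only added features are the generated regressor $P_J(\hat V_i)$ and the data-dependent selector.

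\textbf{Step 1 (base case).} For $\tau\in[\tau_0,\tau_U]$, Assumption ID2(iv) gives $D_\tau(\eta)=1$ almost surely, so no selection is needed. We write $\hat W_i=W_i+(P_J(\hat V_i)-P_J(V_i))$. By the mean value theorem, Assumption A4 and the derivative bound on the basis, the perturbation is $O_p(\zeta_1\pi_{1n})$, where $\zeta_1=\sup_v\|\partial_v P_J(v)\|$. Its effect enters the first-order condition at order $\pi_{1n}$, which A7(i) makes negligible relative to $\sqrt{J\ln n/n}$. We then take the subgradient condition of the convex criterion and linearize $E[\,\cdot\,]$ around $\theta(\tau)$ using A2 and A5(i), so that the Hessian is $\Omega(\tau)$ with eigenvalues bounded away from zero and infinity. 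The empirical process $\sup_{\tau}\sup_{\|\theta-\theta(\tau)\|\le\delta}\|\mathbb{G}_n[(1\{Y\le W'\theta\}-\tau)W]\|$ is bounded by $O_p(\sqrt{J\ln n})$, using the VC-class property of $\{1\{Y\le w'\theta\}\}$, a maximal inequality, and $\zeta_0^2J\ln n/n\to0$. The approximation bias contributes $O(J^{-\kappa})=o(n^{-1/2})$ by A6 and A7(i). This yields $\sup_{\tau\ge\tau_0}\|\hat\theta(\tau)-\theta(\tau)\|=O_p(\sqrt{J\ln n/n})$.

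\textbf{Step 2 (selector agreement, the induction step and main obstacle).} Suppose the rate holds at $\tau_l$ uniformly in $l$. Then for observations with $T_i=1$,
\[
|\hat W_i'\hat\theta(\tau_l)-Q_{Y^\ast|X,R,V}(\tau_l|X_i,R_i,V_i)|\lesssim \zeta_0\sqrt{J\ln n/n}+J^{-\kappa}+\zeta_1\pi_{1n}.
\]
By A3, $|Q(\tau_l|\cdot)-Q(\tau|\cdot)|\lesssim 1/L$ for $\tau\in[\tau_{l+1},\tau_l)$. Since $\tilde\eta_n$ dominates each of these terms (A7(i): $\tilde\eta_n L\to\infty$, $\tilde\eta_n\pi_{1n}^{-1}\to\infty$, $\tilde\eta_n(\zeta_0^2J\ln n/n)^{-1/2}\to\infty$, and the definition of $\tilde\eta_n$), it follows that $\hat D_{\tau i}=1$ implies $Q(\tau|\cdot)>\eta\ge0$. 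In addition, $\hat D_{\tau i}$ is sandwiched between $D_\tau(\tilde\eta+\epsilon_n)$ and $D_\tau(\tilde\eta-\epsilon_n)$, with $\epsilon_n\to0$. The delicate point is that errors must not accumulate over the $L\to\infty$ steps. The estimator at $\tau_{l+1}$ depends on $\hat\theta(\tau_l)$ only through the indicator. On the sandwich event, the resulting first-order condition has expectation zero for any valid selector, because the conditional quantile restriction (\ref{feaimp}) holds on each selected subsample. Hence the rate bound at step $l+1$ does not inherit the error from step $l$; only the event needs to hold. We therefore bound the empirical process uniformly over the class of selectors $\{1\{w'\vartheta>\tilde\eta_n\}:\vartheta\}$, which is VC with dimension $O(J)$, so that the $\sqrt{J\ln n}$ bound holds uniformly in $l$. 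The difference in the Hessian between $\hat D$ and $D_\tau(\tilde\eta)$ is controlled by the probability mass near the threshold, which is $O(\epsilon_n)$, and A5(i) keeps the Hessian nonsingular.

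\textbf{Step 3 (conclusion).} Combining the steps, the rate $O_p(\sqrt{J\ln n/n})$ holds uniformly over $\tau\in[\tau_L,\tau_U]$. For $\hat m$ we write
\[
\hat m(v,\tau)-m(v,\tau)=P_J(v)'(\hat\delta(\tau)-\delta(\tau))-r_J(v,\tau).
\]
Cauchy--Schwarz with $\zeta_0$, together with A6, then gives the second claim. We expect Step 2 to be the main obstacle: it requires showing that the self-normalized threshold absorbs the estimation error uniformly while the number of steps $L$ diverges.
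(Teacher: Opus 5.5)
Your overall architecture matches the paper's. You take a base case on $[\tau_0,\tau_U]$. You show that on a high-probability event $\hat D_{\tau i}=1$ forces $D_\tau(\eta)=1$. You bound the empirical processes uniformly over selectors $1\{\hat W'\tilde\theta>\tilde\eta_n\}$, so the constant does not compound over the $L$ steps. Finally you handle $\hat m$ through $\zeta_0$ and A6. Your single uniform event over the halfspace-selector class is, if anything, tidier than the paper's conditional $n^{-3}$ tail bounds plus a union bound over $L$.

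\textbf{The genuine gap: the generated regressor.} In Step 1 you claim the first-step error enters the first-order condition at order $\pi_{1n}$, ``which A7(i) makes negligible relative to $\sqrt{J\ln n/n}$.'' That does not follow from the assumptions.
\begin{itemize}
\item The only condition A7(i) places on $\pi_{1n}$ is $\pi_{1n}=o(\tilde\eta_n)$, and A4 gives only $\pi_{1n}=o(n^{-1/4})$.
\item For a nonparametric first step, $\pi_{1n}$ can be of larger order than $\sqrt{J\ln n/n}$. Your argument then delivers at best $O_p(\sqrt{J\ln n/n}+\pi_{1n})$.
\item The same defect recurs at every grid point in Step 2. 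The selected-subsample score has mean zero at $Q_{Y^\ast|X,R,V}(\tau|X,R,V)$, not at $\hat W'\theta(\tau)$.
\end{itemize}

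What is missing is the use of the asymptotically linear representation in A4, not merely its rate. Let $f_i$ denote the conditional density of $Y^\ast$ at its $\tau$-quantile. Expanding with A2--A3 gives
\[
F_{Y^{\ast}|X,R,V}(\hat W_i'\theta(\tau)\mid X_i,R_i,V_i)-\tau=f_i\,m_1'(V_i,\tau)(\hat V_i-V_i)+O(\pi_{1n}^2+J^{-\kappa}).
\]
Now substitute $\hat V_i-V_i=n^{-1}\sum_j\phi(R_i,Z_i,R_j,Z_j)+o_p(n^{-1/2})$. The aggregate first-step contribution to the score becomes the centered average $n^{-1}\sum_j g(R_j,Z_j,\tau)$, which is the $G_2$ term of Theorem 4.
\begin{itemize}
\item Its norm is $O_p(\sqrt{J/n})$ by A5(ii), up to a $\sqrt{\ln n}$ factor for uniformity in $\tau$ and the selector.
\item The remainder is $O(\pi_{1n}^2+J^{-\kappa})=o(n^{-1/2})$.
\end{itemize}
This averaging is exactly what the paper's bound $E[T(F_{Y|X,R,V}(\hat W'\theta(\tau))-\tau)\hat W]\lesssim J^{1/2}(J^{-\kappa}+n^{-1/2})$ encodes. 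It is what makes $\sqrt{J\ln n/n}$ the rate.

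\textbf{A smaller, fixable slip.} The selector bound in your Step 2 contains a $\zeta_1\pi_{1n}$ term. The condition $\tilde\eta_n\pi_{1n}^{-1}\to\infty$ does not make $\tilde\eta_n$ dominate it, since no assumption controls $\zeta_1$. Do as the paper does: write
\[
P_J(\hat V)'\delta(\tau_l)-m(V,\tau_l)=m(\hat V,\tau_l)-m(V,\tau_l)-r_J(\hat V,\tau_l),
\]
and use the bounded $v$-derivative of $m$ from A3. This gives $O(\pi_{1n}+J^{-\kappa})$, with the coefficient error treated separately through $\zeta_0\|\hat\delta(\tau_l)-\delta(\tau_l)\|$.
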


Theorem \ref{Thm3} derives uniform convergence rates for $\hat{\theta}(\tau
) $ and $\hat{m}(v,\tau )$, which is important to determine the tuning
parameter for the subsample selector. Theorem \ref{Thm4} establishes the uniform
asymptotic representation of $\hat{\theta}(\tau )$ for $\tau \in \lbrack \tau _{L},
\tau _{U}]$, acting as a foundation for deriving the asymptotic normality of 
$(\hat{\beta}(\tau ),\hat{\alpha}(\tau ))$ and the weighted bootstrap procedure. 

\begin{theorem}
	\label{Thm4}Let Assumption A1-A7 hold. Then 
	\begin{equation*}
		\sqrt{n}(\hat{\theta}(\tau )-\theta (\tau ))=\frac{1}{\sqrt{n}}%
		\sum_{i=1}^{n}\xi _{i}(\tau )+o_{p}(1)
	\end{equation*}%
	uniformly in $\tau \in \lbrack \tau _{L},\tau _{U}]$, where 
    \begin{align*}
        \xi _{i}(\tau)=-\Omega (\tau )^{-1}G_{1i}(\tau )-\Omega (\tau )^{-1}G_{2i}(\tau )
    \end{align*} 
    with %
	\begin{align*}
	    G_{1i}(\tau) &=T_{i}D_{\tau i}(\tilde{\eta} )(1\{Y_{i}<X_{i}^{\prime }\beta (\tau
	)-R_{i}\alpha (\tau )-m(V_{i},\tau )\}-\tau )W_{i}\\
    G_{2i}(\tau
	)&=g(R_{i},Z_{i},\tau )=E[TD_{\tau }(\tilde{\eta} )f_{Y^{\ast }|X,R,V}(Q_{Y^{\ast }|X,R,V}(\tau
	|X,R,V)|X,R,V)Wm^{\prime }(V,\tau )\\
    &\times\phi (R,Z,R_{i},Z_{i})|R_{i},Z_{i}]
	\end{align*}
\end{theorem}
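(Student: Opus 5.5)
The argument starts from the subgradient (first-order) condition of the check-function problem in (\ref{est}). By a standard computational property of quantile regression (the number of interpolated observations is at most $\dim W=K_X+1+J$), the minimizer satisfies
\[
\frac{1}{\sqrt n}\sum_{i=1}^n \hat D_{\tau i}T_i\big(1\{Y_i<\hat W_i'\hat\theta(\tau)\}-\tau\big)\hat W_i = O_p\big(J\zeta_0/\sqrt n\big)
\]
uniformly in $\tau$, and this is $o_p(1)$ by A7(ii). The main work is to expand the left-hand side around $\theta(\tau)$, $V_i$ and the infeasible selector $D_{\tau i}(\tilde\eta)$. We write it as $\sqrt n\,\mathbb{E}_n[\psi(\hat\theta,\hat v,\hat D)]$. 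Then we split it into an empirical-process term $\sqrt n(\mathbb{E}_n-E)[\psi(\hat\theta,\hat v,\hat D)-\psi(\theta,v,D)]$, the leading term $\sqrt n\,\mathbb{E}_n\psi(\theta,v,D)$, and a population drift $\sqrt n\,E[\psi(\hat\theta,\hat v,\hat D)]$. The leading term differs from $\frac1{\sqrt n}\sum_i G_{1i}(\tau)$ only through the approximation error $r_J$, and that difference is $O_p(\sqrt n J^{-\kappa})=o(1)$ by A6–A7.

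The first step is to replace $\hat D_{\tau i}$ by $D_{\tau i}(\tilde\eta)$. By Theorem~\ref{Thm3}, A4 and A7(i), we have $\sup|\hat W_i'\hat\theta(\tau_l)-Q_{Y^*|X,R,V}(\tau_l|\cdot)|=o_p(\tilde\eta_n)$. Combined with A3 and $\tau_l-\tau\le 1/L=o(\tilde\eta_n)$, this implies that with probability approaching one every observation selected by $\hat D_{\tau i}$ satisfies $Q_{Y^*|X,R,V}(\tau|\cdot)>\eta$. On the selected set the censoring is therefore irrelevant, so the moment is locally unbiased there. The discrepancy between $\hat D_{\tau i}$ and $D_{\tau i}(\tilde\eta)$ involves only observations whose index lies within a band of width $O(\tilde\eta_n-\tilde\eta)+o_p(\tilde\eta_n)$ of the threshold. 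On that band the conditional expectation of the score is zero up to the $\theta$ error. Consequently its contribution to the drift is $O(J(\tilde\eta_n-\tilde\eta)\ln n)\cdot\sqrt n\|\hat\theta-\theta\|$-type terms, which are $o_p(1)$ by A7(ii). The same band argument controls the selector inside the empirical-process term.

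The drift term is handled by a mean-value expansion of $E[T D(F_{Y^*|X,R,V}(\hat W'\hat\theta|\cdot)-\tau)\hat W]$ in both $\hat\theta$ and $\hat v$. The expansion in $\hat\theta$ produces $\Omega(\tau)\sqrt n(\hat\theta(\tau)-\theta(\tau))$. The expansion in $\hat v$ produces $E[TDf\,W\,m'(V,\tau)\sqrt n(\hat v-v)]$. The second-order remainders are bounded with A2 by $\bar f'\,\|\cdot\|^2$ times $\zeta_0$ factors. They are negligible because $\sup\|\hat\theta-\theta\|=O_p(\sqrt{J\ln n/n})$ by Theorem~\ref{Thm3}, $\pi_{1n}=o(n^{-1/4})$, and A7(ii) holds. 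Here $P_J(\hat V)'\delta-P_J(V)'\delta$ is replaced by $m'(V)(\hat V-V)$ up to an $O(J^{-\kappa})$ error. Substituting the linear representation of A4 and exchanging the order of summation turns the $\hat v$-term into $\frac1{\sqrt n}\sum_i g(R_i,Z_i,\tau)=\frac1{\sqrt n}\sum_i G_{2i}(\tau)$. Solving the resulting equation with the eigenvalue bound of A5(i) gives the stated representation.

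The main obstacle is the uniform stochastic equicontinuity of the empirical-process term over $\tau$, $\theta$ in a shrinking ball, $\hat v$ in a sup-norm ball of radius $\pi_{1n}$, and the data-dependent selectors. I would first discretize $\tau$ on a grid finer than $n^{-1}$; between grid points the error is controlled by the Lipschitz property A3 and by the monotonicity of the indicators. For fixed $\tau$ I would bound the class of functions $\{(1\{Y<w'\theta\}-\tau)w\}$ as a VC class with envelope $\zeta_0$ and $L_2$-diameter of order $(\zeta_0\sqrt{J\ln n/n}+\pi_{1n})^{1/2}$, and then apply a maximal inequality, e.g.\ Bernstein plus a union bound over a covering. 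This yields the bound $O_p\big((J\zeta_0^2\ln n)^{1/2}(J\ln n/n)^{1/4}\cdot\ldots\big)$, and A7(ii) ($J^3\zeta_0^2\ln^3n/n\to0$, $J\pi_{1n}\ln n\to0$) is exactly what makes it $o_p(1)$. The dependence of $\hat v$ and $\hat D$ on the whole sample is handled by taking the supremum over these function classes rather than conditioning.
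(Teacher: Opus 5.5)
Your proposal follows essentially the same route as the paper's proof. The subgradient bound is its $e_0(\tau)$, and your split into an empirical-process term, the leading sum and a population drift regroups its $e_1(\tau)$, $e_2(\tau)$, $e_3(\tau)$. As there, the selector is swapped for $D_{\tau}(\tilde{\eta})$ by the band argument, $G_{2i}(\tau)$ comes from linearizing in $\hat{v}$ and inserting A4, and the result follows by inverting $\Omega(\tau)$.

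Two rate statements need correcting:
\begin{itemize}
\item With the variance bound $\zeta_0\sqrt{J\ln n/n}$ you state, the maximal inequality gives $(J^{3}\zeta_0^{2}\ln^{3}n/n)^{1/4}$, which is what A7(ii) controls. Your $\zeta_0^{2}$ inside the square root would instead require $J^{3}\zeta_0^{4}\ln^{3}n/n\rightarrow 0$.
\item The band contributes $(\tilde{\eta}_n-\tilde{\eta})\sqrt{J\ln n}$ to the drift and $\sqrt{(\tilde{\eta}_n-\tilde{\eta})J\ln n}$ to the fluctuation, both $o(1)$ under $J(\tilde{\eta}_n-\tilde{\eta})\ln n=o(1)$. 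The product $J(\tilde{\eta}_n-\tilde{\eta})\ln n\cdot\sqrt{n}\Vert\hat{\theta}-\theta\Vert$ you wrote is not controlled by A7(ii).
\end{itemize}
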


Define two matrices, $S_{1}=(I_{K_{X}+1},0_{(K_{X}+1)\times J})$ and $S_{2}=(0_{J\times
	(K_{X}+1)},I_{J})$, which extract, respectively, the first $(K_{X}+1)$
elements and the last $J$ elements of a $(K_{X}+1+J)\times 1$ vector. Theorem
\ref{Thm5} demonstrates the $\sqrt{n}$-asymptotic normality.
\begin{theorem}
	\label{Thm5} Let Assumption A1-A7 hold, then $\sqrt{n}\left( \dbinom{\hat{%
			\beta}(\tau )}{\hat{\alpha}(\tau )}-\dbinom{\beta (\tau )}{\alpha (\tau )}%
	\right) $ converges to a mean zero Gaussian process with the covariance
	function $\Sigma _{\beta \alpha }(\tau _{1},\tau _{2})=E\left[ S_{1}\xi
	_{i}(\tau _{1})\xi _{i}(\tau _{2})^{\prime }S_{1}^{\prime }\right] $.
\end{theorem}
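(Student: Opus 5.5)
The plan is to derive Theorem \ref{Thm5} from the uniform Bahadur representation of Theorem \ref{Thm4} by projecting with $S_{1}$ and applying a functional central limit theorem. Premultiplying the representation by $S_{1}$ gives, uniformly in $\tau\in[\tau_{L},\tau_{U}]$,
\begin{equation*}
\sqrt{n}\left( \dbinom{\hat{\beta}(\tau )}{\hat{\alpha}(\tau )}-\dbinom{\beta (\tau )}{\alpha (\tau )}\right) =\frac{1}{\sqrt{n}}\sum_{i=1}^{n}S_{1}\xi _{i}(\tau )+o_{p}(1).
\end{equation*}
It therefore suffices to show that the process $\mathbb{G}_{n}(\tau)=n^{-1/2}\sum_{i}S_{1}\xi_{i}(\tau)$ converges weakly in $\ell^{\infty}([\tau_{L},\tau_{U}])^{K_{X}+1}$ to a mean-zero Gaussian process with covariance $\Sigma_{\beta\alpha}$. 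Centering holds for both pieces. The term $E[G_{1i}(\tau)]=0$ follows by conditioning on $(X_{i},R_{i},V_{i})$ and using (\ref{imp}) on the selected subsample (note $D_{\tau}(\tilde\eta)=1$ implies $D_{\tau}(\eta)=1$). The term $E[G_{2i}(\tau)]=0$ follows from $E\phi=0$ in Assumption A4 and Fubini.

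The main difficulty is that $S_{1}\xi_{i}(\tau)$ depends on $n$, since $J\to\infty$ changes both $\Omega(\tau)^{-1}$ and $W_{i}$. This calls for a triangular-array argument. For finite-dimensional convergence, fix $\tau_{1},\dots,\tau_{k}$ and a vector $c$, and apply the Lindeberg--Feller CLT to $\sum_{j}c_{j}^{\prime}S_{1}\xi_{i}(\tau_{j})$. The variance is bounded above and below by Assumption A5: eigenvalues of $\Omega$ and $\Sigma$ are bounded, so $S_{1}\Omega^{-1}\Sigma\Omega^{-1}S_{1}^{\prime}$ is nondegenerate. The Lindeberg condition follows because $\|S_{1}\xi_{i}\|\le \underline{c}_{\Omega}^{-1}(\|G_{1i}\|+\|G_{2i}\|)$ with $\|W_{i}\|\lesssim\zeta_{0}$ on the trimmed set. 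A Lyapunov bound of order $\zeta_{0}^{2}/n\to0$ holds by A7. The limiting covariance exists provided $E[S_{1}\xi_{i}(\tau_{1})\xi_{i}(\tau_{2})^{\prime}S_{1}^{\prime}]$ converges; I will take the stated $\Sigma_{\beta\alpha}$ as this limit, as implicit in the statement.

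For tightness, I will show stochastic equicontinuity of $\mathbb{G}_{n}$. I would split $\xi_{i}(\tau)$ into three components: the $\Omega(\tau)^{-1}$ factor, which is Lipschitz in $\tau$ by A2--A3 and bounded eigenvalues; the indicator part $T_{i}D_{\tau i}(\tilde\eta)(1\{Y_{i}<W_{i}^{\prime}\theta(\tau)+\dots\}-\tau)$; and the smooth part $g(R_{i},Z_{i},\tau)$. The indicator class is VC, being indicators of half-spaces indexed by $\tau$ through the Lipschitz map $\tau\mapsto\theta(\tau)$ under A3. The selector $D_{\tau}$ is a monotone-in-$\tau$ family of sets, so its $L_{2}$ bracketing entropy is of logarithmic order. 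I would then use a maximal inequality for classes with bracketing (e.g.\ Theorem 2.11.23 of van der Vaart--Wellner), with envelope of order $\zeta_{0}$. The $L_{2}$ modulus satisfies $E\|S_{1}(\xi_{i}(\tau)-\xi_{i}(\tau'))\|^{2}\lesssim|\tau-\tau'|$, using the bounded density $\bar f$ and Lipschitz continuity of $\theta$. The growing envelope is controlled by A7(ii) ($J^{3}\zeta_{0}^{2}\ln^{3}n/n\to0$).

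The step I expect to be hardest is the selector's dependence on $\tau$. On the grid, $D_{\tau}(\tilde\eta)$ is piecewise constant in $\tau$ between $\tau_{l+1}$ and $\tau_{l}$, which makes the process jump at grid points. I would argue that these jumps are asymptotically negligible: the boundary region $\{Q_{Y^{*}}(\tau_{l}|\cdot)\in(\tilde\eta,\tilde\eta+C/L)\}$ has probability $O(1/L)\to0$. On the event $D_{\tau}(\tilde\eta)=1$, the contribution is unaffected by which neighbouring grid point defines the selector.

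Combining finite-dimensional convergence with tightness, the process converges to the Gaussian limit $\mathbb{G}$ with covariance $\Sigma_{\beta\alpha}(\tau_{1},\tau_{2})$. Slutsky's lemma absorbs the uniform $o_{p}(1)$ remainder, which completes the proof.
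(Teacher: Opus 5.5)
Your proposal follows the paper's own argument. The paper premultiplies the uniform representation of Theorem~\ref{Thm4} by $S_{1}$ and invokes a functional central limit theorem; you instead spell out the triangular-array finite-dimensional and tightness steps, and rightly note that $\Sigma_{\beta\alpha}$ must be read as the limit of a $J$-dependent covariance. The grid-jump step you single out as hardest is not needed, since the selector in $\xi_{i}(\tau)$ is $D_{\tau}(\tilde{\eta})=1\{Q_{Y^{\ast}|X,R,V}(\tau|X,R,V)>\tilde{\eta}\}$, which is monotone in $\tau$; the grid enters only through the feasible $\hat{D}_{\tau i}$, whose effect Theorem~\ref{Thm4} already absorbs into the $o_{p}(1)$ remainder.
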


For statistical inference, the analytical estimator of the
variance-covariance function is straightforward and expressed as 
\begin{equation*}
	\hat{\Sigma}_{\beta \alpha }(\tau _{1},\tau _{2})=\frac{1}{n}%
	\sum_{i=1}^{n}S_{1}\hat{\xi}_{i}(\tau _{1})\hat{\xi}_{i}(\tau _{2})^{\prime
	}S_{1}^{\prime },
\end{equation*}%
where $\hat{\xi}_{i}(\tau )=-\hat{\Omega}(\tau )^{-1}\hat{G}_{1i}(\tau )-%
\hat{\Omega}(\tau )^{-1}\hat{G}_{2i}(\tau )$ with 
\begin{eqnarray*}
	\hat{\Omega}(\tau ) &=&\frac{1}{nh_{n}}\sum_{i=1}^{n}T_{i}\hat{\hat{D}}%
	_{\tau i}1\{0\leq \hat{\varepsilon}_{\tau i}\leq h_{n}\}\hat{W}_{i}\hat{W}%
	_{i}^{\prime }, \\
	\hat{G}_{1i}(\tau ) &=&T_{i}\hat{\hat{D}}_{\tau i}(1\{Y_{i}<X_{i}^{\prime }%
	\hat{\beta}(\tau )-R_{i}\hat{\alpha}(\tau )-\hat{m}(\hat{V}_{i},\tau
	)\}-\tau )\hat{W}_{i}, \\
	\hat{G}_{2i}(\tau ) &=&\frac{1}{nh_{n}}\sum_{j=1}^{n}T_{j}\hat{\hat{D}}%
	_{\tau j}1\{0\leq \hat{\varepsilon}_{\tau j}\leq h_{n}\}\hat{W}_{j}\hat{m}%
	_{1}^{\prime }(\hat{V}_{j},\tau )\phi (R_{j},Z_{j},R_{i},Z_{i}),
\end{eqnarray*}%
where $\hat{m}_{1}^{\prime }(v,\tau )=\frac{\partial \hat{m}(v,\tau )}{%
	\partial v}$, $\hat{\hat{D}}_{\tau i}=1\{X_{i}^{\prime }\hat{\beta}(\tau
)+R_{i}\hat{\alpha}(\tau )+\hat{m}(\hat{V}_{i},\tau )>\tilde{\eta}_{n}\}$
and $\hat{\varepsilon}_{\tau i}=Y_{i}-X_{i}^{\prime }\hat{\beta}(\tau )-R_{i}%
\hat{\alpha}(\tau )-\hat{m}(\hat{V}_{i},\tau )$ as $J^{2}/nh_{n}\rightarrow
0 $ and $h_{n}\rightarrow 0$.

It is well known that analytical inference is often impractical for applied
researchers, since the estimation of the variance--covariance function
entails numerous complex components, such as the estimated derivatives of
unknown functions and the bandwidth selection required for density
estimation. As an alternative, we offer a weighted bootstrap procedure for
the TNS estimator. Specifically, generate an i.i.d sample $\{\varsigma
_{i}\}_{i=1}^{n}$ from a nonnegative random variable $\varsigma $, which is
independent of data with $E(\varsigma )=var(\varsigma )=1$, for example, the
standard exponential random variable. In the $t$-th draw, the bootstrap
version of the TNS estimator is obtained by%
\begin{equation*}
	(\hat{\beta}^{\dag }(\tau ),\hat{\alpha}^{\dag }(\tau ),\hat{\delta}^{\dag
	}(\tau ))=\underset{(b,a,d)\in \mathcal{B\times A\times D}}{\arg \min }\frac{%
		1}{n}\sum_{i=1}^{n}\varsigma _{i}T_{i}\hat{D}_{\tau i}\rho _{\tau
	}(Y_{i}-X_{i}^{\prime }b-R_{i}a-P_{J}(\hat{V}_{i})^{\prime }d).
\end{equation*}
\renewcommand{\theassumption}{A8}
\begin{assumption} \textit{The weights $\{\varsigma _{i}\}_{i=1}^{n}$ are i.i.d
draws from a nonnegative random variable $\varsigma $ independent of data
such that $E(\varsigma )=var(\varsigma )=1$ and $E(\varsigma ^{2+\pi
})<\infty $ for some $\pi >0$.}
\end{assumption}

\begin{theorem}
	\label{Thm6} Let Assumption A1-A8 hold, then $\sqrt{n}\left( \dbinom{\hat{\beta}^{\dag }(\tau )}{\hat{\alpha}^{\dag }(\tau )}-\dbinom{\hat{\beta}(\tau
		)}{\hat{\alpha}(\tau )}\right) $ converges to a mean zero Gaussian process
	with the covariance function as $\Sigma _{\beta \alpha }(\tau _{1},\tau
	_{2})=E\left[ S_{1}\xi _{i}(\tau _{1})\xi _{i}(\tau _{2})^{\prime
	}S_{1}^{\prime }\right] $.
\end{theorem}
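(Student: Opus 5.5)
The plan is to mirror the proof of Theorem \ref{Thm4} for the weighted objective and then take the difference with the original representation. The target is a uniform Bahadur-type representation for the bootstrap estimator:
\begin{equation*}
\sqrt{n}(\hat{\theta}^{\dag}(\tau)-\theta(\tau))=\frac{1}{\sqrt{n}}\sum_{i=1}^{n}\varsigma_i\,\bigl(-\Omega(\tau)^{-1}G_{1i}(\tau)\bigr)-\frac{1}{\sqrt{n}}\sum_{i=1}^{n}\Omega(\tau)^{-1}G_{2i}(\tau)+o_{p}(1),
\end{equation*}
uniformly in $\tau\in[\tau_L,\tau_U]$. The bootstrap in its stated form reuses $\hat{V}_i$ and $\hat{D}_{\tau i}$ from the original sample. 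So the first-step influence term $G_{2i}$ enters unweighted and cancels when we subtract Theorem \ref{Thm4}. That leaves
\begin{equation*}
\sqrt{n}(\hat{\theta}^{\dag}(\tau)-\hat{\theta}(\tau))=-\frac{1}{\sqrt{n}}\sum_{i=1}^{n}(\varsigma_i-1)\,\Omega(\tau)^{-1}G_{1i}(\tau)+o_p(1).
\end{equation*}
If the covariance is to be $E[S_1\xi_i\xi_i'S_1']$ including the $G_2$ part, the first step must also be reweighted. In that case I would redo Step 1 with the weights $\varsigma_i$, so that $\hat v^{\dag}-\hat v$ has the representation $n^{-1}\sum_j(\varsigma_j-1)\phi(\cdot,R_j,Z_j)$; this follows from Assumption A4 applied to the weighted series quantile regressions. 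I will carry this reweighted version, with $\hat D_{\tau i}$ also recomputed from the bootstrap estimates at the previous grid point. The result is $(\varsigma_i-1)\xi_i(\tau)$ in the display above.

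The key steps, in order, are as follows.

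First, establish uniform consistency and the rate $O_p(\sqrt{J\ln n/n})$ for $\hat\theta^{\dag}(\tau)$, as in Theorem \ref{Thm3}. Since $\varsigma$ is independent of the data with $E\varsigma=1$, the population weighted objective is unchanged. The moment condition $E\varsigma^{2+\pi}<\infty$ from Assumption A8 lets the maximal inequalities for the empirical process go through with a $\ln n$ truncation of the weights.

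Second, control the sequential selector. As in the proofs of Theorems \ref{Thm2}--\ref{Thm3}, the choice of $\tilde\eta_n$ in Assumption A7(i) dominates the uniform error of $\hat m^{\dag}$ and $\hat v^{\dag}$. With probability approaching one, $\hat D^{\dag}_{\tau i}$ therefore selects only observations with $D_{\tau i}(\eta)=1$ and differs from $D_{\tau i}(\tilde\eta)$ only on a set whose contribution is $o_p(n^{-1/2})$ under A7(ii). The induction over the grid $\tau_0>\tau_1>\cdots>\tau_L$ proceeds exactly as before.

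Third, expand the weighted subgradient condition around $\theta(\tau)$. Stochastic equicontinuity of the weighted empirical process $n^{-1/2}\sum_i\varsigma_i\{T_iD_{\tau i}(1\{Y_i<W_i'b\}-\tau)W_i-E[\cdot]\}$ over a shrinking neighbourhood gives the linearization with Hessian $\Omega(\tau)$. A mean-value expansion in $\hat V^{\dag}_i-V_i$ produces the $G_2$ term through $m'(V,\tau)$. The conditions $J^3\zeta_0^2\ln^3n/n\to0$ and $nJ^{1-2\kappa}\to0$ kill the remainder and the approximation bias.

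Fourth, subtract Theorem \ref{Thm4} and apply $S_1$. Conditionally on the data, $n^{-1/2}\sum_i(\varsigma_i-1)S_1\xi_i(\tau)$ has mean zero and covariance $n^{-1}\sum_iS_1\xi_i(\tau_1)\xi_i(\tau_2)'S_1'\to\Sigma_{\beta\alpha}(\tau_1,\tau_2)$. Finite-dimensional convergence then follows from the Lindeberg CLT, using $E\varsigma^{2+\pi}<\infty$ and the eigenvalue bounds in Assumption A5(ii). Tightness over $\tau$ follows from the multiplier empirical process theorem. Here the class $\{S_1\xi(\tau):\tau\}$ is Lipschitz in $\tau$ by Assumption A3 and A2, and it is VC-type through the indicators. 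This yields the Gaussian process limit conditionally in probability.

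The main obstacle is the third step. It requires the uniform-in-$\tau$ stochastic equicontinuity of the weighted process with a growing dimension $J$, while the selector $\hat D^{\dag}$ is itself estimated and depends on the previous grid point. I would first try to reuse the chaining and Bernstein bounds from the proof of Theorem \ref{Thm4}. The weights would be truncated at $C\ln n$, with the truncation error controlled by A8, so the same entropy bounds apply with an extra logarithmic factor. That factor is absorbed by the $\ln^3 n$ slack in Assumption A7(ii).
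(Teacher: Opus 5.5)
Your skeleton matches the paper's proof. You obtain a weighted Bahadur representation by rerunning the proof of Theorem~\ref{Thm4}, subtract the unweighted representation, and apply a conditional multiplier CLT to $n^{-1/2}\sum_i(\varsigma_i-1)S_1\xi_i(\tau)$.

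Where you depart from the paper, you are right and the paper is not. Its proof asserts $\sqrt{n}(\hat\theta^{\dag}(\tau)-\theta(\tau))=n^{-1/2}\sum_i\varsigma_i\xi_i(\tau)+o_p(1)$ for the bootstrap exactly as displayed, with $\hat V_i$ and $\hat D_{\tau i}$ held at their full-sample values. For that procedure the first-step term is $n^{-1}\sum_i\varsigma_i a_i(\hat V_i-V_i)$, with $a_i=T_iD_{\tau i}(\tilde\eta)f_iW_im'(V_i,\tau)$ and $f_i$ the conditional density of $Y^{\ast}$ at its $\tau$-quantile. Using $E\varsigma=1$ and $E\phi(r,z,R,Z)=0$, its projection is $n^{-1}\sum_jG_{2j}(\tau)$, unweighted, so $G_2$ cancels in $\hat\theta^{\dag}-\hat\theta$. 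Since $E[G_1(\tau)\mid R,Z]=0$, the resulting bootstrap covariance falls short of $\Sigma_{\beta\alpha}(\tau_1,\tau_2)$ by $S_1\Omega(\tau_1)^{-1}E[G_2(\tau_1)G_2(\tau_2)']\Omega(\tau_2)^{-1}S_1'$. This shortfall is generically nonzero under endogeneity ($m'(v,\tau)\neq 0$). Re-estimating Step~1 with the weights, as you propose, is what the stated covariance actually requires.

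Two of your supporting claims, however, do not follow from the stated assumptions.

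First, the uniform linearization $\hat v^{\dag}-\hat v=n^{-1}\sum_j(\varsigma_j-1)\phi(\cdot,R_j,Z_j)+o_p(n^{-1/2})$ is not implied by A4. A4 is a high-level condition on the specific estimator $\hat v$ and says nothing about its weighted counterpart. You need either a bootstrap analogue of A4 as an extra assumption, or a derivation from the weighted series quantile regressions in Step~1 plus a delta-method step through $\hat h$, $\hat\sigma$ and $F_{\epsilon^{\ast}}$.

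Second, truncating the weights at $C\ln n$ is not licensed by A8. With only $E\varsigma^{2+\pi}<\infty$, $P(\max_i\varsigma_i>C\ln n)$ need not tend to zero; in general one only has $\max_i\varsigma_i=o_p(n^{1/(2+\pi)})$. The envelopes of the weighted classes are then inflated by a polynomial rather than a logarithmic factor, which the $\ln^3 n$ slack in A7(ii) does not in general absorb. You can either strengthen A8 to sub-exponential weights (exponential weights give $\max_i\varsigma_i=O_p(\ln n)$), or truncate at a polynomial level and tighten the rate conditions accordingly.

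The paper's one-line appeal to the proof of Theorem~\ref{Thm4} leaves both of these points open as well.
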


Theorem \ref{Thm6} implies that the asymptotic distribution of $\sqrt{n}%
\left( (\hat{\beta}(\tau )^\prime,\hat{\alpha}(\tau ))^\prime-(\beta (\tau )^\prime,%
\alpha (\tau ))^\prime\right) $ is well approximated by the empirical
distributions of $\sqrt{n}\left( (\hat{\beta}_{t}^{\dag }(\tau ),%
	\hat{\alpha}_{t}^{\dag }(\tau ))^\prime-(\hat{\beta}(\tau )^\prime,\hat{\alpha}(\tau ))^\prime\right) $, $t=1,\cdots ,T$. Subsequently, the standard errors and
confidence intervals for the TNS estimator can be computed using the
bootstrap draws.

\section{Extension: Non-additive Structure}

Section 2 focuses on a widely employed
semiparametric specification. This section
extends TNS estimation to the general case of non-additive structures as (%
\ref{censor1})-(\ref{endo1}) for completeness.%
\begin{eqnarray*}
	Y &=&\max \{Y^{\ast },0\}=\max \{m_{Y^{\ast }}(X,R,V,U),0\}, \\
	R &=&m_{R}(Z,V).
\end{eqnarray*}%
Compared with model (\ref{censor2})-(\ref{endo2}), the fully nonparametric
model allows for more general specifications. For example,
\begin{example}\label{ex4}
\begin{eqnarray*}
	Y &=&\max \{X^{\prime }\beta +R\alpha +\sqrt{X^{\prime }\tilde{\beta}+R%
		\tilde{\alpha}}\varepsilon ^{\ast },0\}, \\
	R &=&Z^{\prime }\gamma +\sqrt{Z^{\prime }\gamma }\epsilon ^{\ast }
\end{eqnarray*}%
where $\epsilon ^{\ast }=\Phi ^{-1}(V)$ with $V\sim U(0,1)$, and $%
\varepsilon ^{\ast }=\rho \Phi ^{-1}(V)+\sqrt{1-\rho ^{2}}\Phi ^{-1}(U)$
with $U\sim U(0,1)$. 
\end{example}
\begin{example}\label{ex3}
\begin{eqnarray*}
	Y &=&\max \{X^{\prime }\beta +R\alpha +\sigma _{Y^{\ast }}(X,R)\varepsilon
	^{\ast },0\}, \\
	R &=&Z^{\prime }\gamma +\sigma _{^{R}}(Z)\epsilon ^{\ast }
\end{eqnarray*}%
with $\sigma _{Y^{\ast }}(x,r)>0$ and $\sigma _{^{R}}(z)>0$ for all $(x,r,z)$%
, where $\epsilon ^{\ast }=\Phi ^{-1}(V)$ with $V\sim U(0,1)$ and $%
\varepsilon ^{\ast }=\rho \Phi ^{-1}(V)+\sqrt{1-\rho ^{2}}\Phi ^{-1}(U)$
with $U\sim U(0,1)$.
\end{example}

The identification result is a straightforward extension of Theorem \ref{Thm2}.
\begin{theorem}
	\label{Thm7} Under the model (\ref{censor1})-(\ref{endo1}), if Assumption ID1 
	and the following conditions hold,\\
	(i) (Initial Condition) There exists $\tau _{0}\in (0,1)$
	such that $\Pr (Q_{Y^{\ast }|X,R,V}(\tau |X,R,V)>0)=1$ for any $\tau
	_{0}\leq \tau \leq \tau _{U}$.\\
	(ii) (Continuity) $m_{Y^{\ast }}(x,r,v,\tau )$ is
	Lipschitz continuous in $\tau $ for all $(x,r,v)$.\\
	Then $\nabla _{r}m_{Y^{\ast }}(x,r,v,\tau )$ is identified for any $\tau \in
	\lbrack \tau _{L},\tau _{U}]$.
\end{theorem}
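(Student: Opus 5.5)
The argument combines the sequential selection device from the proof of Theorem \ref{Thm2} with the derivative formula of Theorem \ref{Thm1}. Theorem \ref{Thm1} already expresses $\nabla_{r}Q_{Y^{\ast}|X,R,V}(\tau|x,r,v)=\nabla_r m_{Y^{\ast}}(x,r,v,\tau)$ through $Q_{R|Z}(v|z)$ and $Q_{Y|R,Z,D_{\tau}(\eta)=1}(\tau|r,z)$. The first object is identified from the distribution of $(R,Z)$, and Assumption ID1(i) gives $V=F_{R|Z}(R|Z)$. So the only remaining task is to show that $Q_{Y|R,Z,D_\tau(\eta)=1}(\tau|r,z)$ is identified for every $\tau\in[\tau_L,\tau_U]$, even though the selector $D_\tau(\eta)$ is not observed. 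It suffices to identify $Q_{Y^\ast|X,R,V}(\tau|\cdot)=m_{Y^\ast}(\cdot,\tau)$ on the relevant set, since $D_\tau(\eta)$ is a known function of it.

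\emph{Step 1 (initial range).} For $\tau\in[\tau_0,\tau_U]$, condition (i) gives $D_\tau(0)=1$ a.s. The event $\{Y\le y\}$ coincides with $\{Y^\ast\le y\}$ for $y>0$. Because $V$ is identified, this yields $\Pr(Y\le m_{Y^\ast}(X,R,V,\tau)\mid X,R,V)=\tau$. Strict monotonicity of $F_{Y^\ast|X,R,V}$ then identifies $m_{Y^\ast}(x,r,v,\tau)=Q_{Y|X,R,V}(\tau|x,r,v)$. Consequently $D_\tau(\eta)$ is known for these $\tau$.

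\emph{Step 2 (recursion below $\tau_0$).} Take the grid $\tau_0>\tau_1>\cdots>\tau_L$ with mesh $\tau_{l-1}-\tau_l\le c/L$. Let $K$ be the Lipschitz constant in condition (ii), and set $\tilde\eta=\eta+K\max_l(\tau_{l-1}-\tau_l)$. Define the feasible selector
\begin{equation*}
D_{\tau_l}(\tau_{l-1},\tilde\eta)=1\{m_{Y^\ast}(X,R,V,\tau_{l-1})>\tilde\eta\}.
\end{equation*}
By the Lipschitz bound, $m_{Y^\ast}(\cdot,\tau_l)\ge m_{Y^\ast}(\cdot,\tau_{l-1})-K(\tau_{l-1}-\tau_l)>\eta\ge0$ whenever this selector equals one. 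Hence $D_{\tau_l}(\tau_{l-1},\tilde\eta)=1$ implies $D_{\tau_l}(\eta)=1$. On that event the $\tau_l$ conditional quantile of $Y$ given $(X,R,V)$ equals that of $Y^\ast$. This is the analogue of (\ref{feaimp}), and it identifies $m_{Y^\ast}(x,r,v,\tau_l)$ on the selected support. By induction on $l$, the function $m_{Y^\ast}(\cdot,\tau_l)$ is known for all grid points. For $\tau\in[\tau_{l+1},\tau_l)$, I would use $\tau_l$ as the reference level and apply the same argument. Since the mesh shrinks as $L\to\infty$, $\tilde\eta\downarrow\eta$, so every point with $D_\tau(\eta)=1$ is eventually selected by some fine enough grid.

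\emph{Step 3 (conclusion).} With $D_\tau(\eta)$ identified, $Q_{Y|R,Z,D_\tau(\eta)=1}(\tau|r,z)$ is a functional of the observed joint law of $(Y,R,Z)$. Its derivatives exist by ID1(iii). Plugging these into Theorem \ref{Thm1} identifies $\nabla_r m_{Y^\ast}(x,r,v,\tau)$.

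The main obstacle is in Step 2. The selected subsample $\{D_{\tau_l}(\tau_{l-1},\tilde\eta)=1\}$ is strictly smaller than $\{D_{\tau_l}(\eta)=1\}$, while Theorem \ref{Thm1} is stated with the selector $D_\tau(\eta)$. I would handle this in two ways. First, I would note that Theorem \ref{Thm1}'s argument is pointwise in $(r,z)$, so it only needs identification on a neighborhood of the evaluation point where the selector is one. Second, I would use the limiting argument $\tilde\eta\to\eta$ so that any fixed interior point with $m_{Y^\ast}>\eta$ is covered. I would also make explicit that the inductive identification step uses only the support where the previous selector equals one.
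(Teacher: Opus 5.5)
Your proposal is correct and follows the route the paper intends. The paper gives no separate proof of Theorem \ref{Thm7}, only calling it a straightforward extension of Theorem \ref{Thm2}: Theorem \ref{Thm1}'s derivative formula, with the unobserved selector $D_\tau(\eta)$ replaced by a feasible selector built recursively from the previously identified grid level using a Lipschitz margin above $\eta$, which is exactly what you do. Your version is more careful than the paper's (the threshold $\tilde\eta=\eta+K\max_l(\tau_{l-1}-\tau_l)$, the pointwise use of Theorem \ref{Thm1}, and the limit $\tilde\eta\downarrow\eta$), but, like the paper, it tacitly reads condition (ii) as a Lipschitz bound uniform in $(x,r,v)$, and the nesting of the selected sets across grid levels (so that each selector is computable from already identified objects and every point with $m_{Y^\ast}>\eta$ is eventually selected) rests on monotonicity of $m_{Y^\ast}$ in $\tau$, which you should state explicitly.
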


Next, the TNS estimation procedure for model (\ref{censor1})-(\ref%
{endo1}) is as follows:

\medskip
\noindent\textit{Step 1: control variable estimation}
\medskip

Any generic estimator for the conditional CDF of $R$ given $%
Z $ can be employed as the first-step procedure. \citet{cattaneo2024boundary}
proposed a boundary adaptive method to estimate $V_{i}=F_{R|Z}(R_{i}|Z_{i})$
by $\hat{V}_{i}=\hat{v}(R_{i},Z_{i})$, where%
\begin{equation*}
	\hat{v}(r,z)=\arg \min_{c_{r}}\sum_{j=1}^{n}\left( \hat{v}_{z}(R_{j},z)-%
	\dbinom{1}{R_{j}-r}^{\prime }c_{r}\right) ^{2}k_{h_{r}}(R_{j}-r).
\end{equation*}%
Here, $k_{h_{r}}(R_{j}-r)=\frac{1}{h_{r}}k\left( \frac{R_{i}-r}{h_{r}}%
\right) $ and $\hat{v}_{z}(r,z)$ is the minimizer of
\begin{eqnarray*}
	\sum_{i=1}^{n}\left(
	1\{R_{i}<r\}-\dbinom{1}{Z_{i}-z}^{\prime }c_{z}\right) ^{2}\frac{%
		1}{h_{z}^{K_{Z}}}\prod_{l=1}^{K_{Z}}k_{l}\left( \frac{Z_{li}-z_{l}}{h_{z}}%
	\right).
\end{eqnarray*} 
Alternatively, Chernozhukov et al. (2013) proposed a\ QR-CDF method,%
\begin{equation*}
	\hat{V}_{i}=\int_{\omega }^{1-\omega }1\{\hat{Q}_{R|Z}(v|Z_{i})<R_{i}\}dv,
\end{equation*}%
where $\hat{Q}_{R|Z}(v|z)$ is a nonparametric estimator obtained via a
series quantile regression. To mitigate inaccuracies at extreme quantiles,
the trimming cut-off $\omega $ is set to converge to zero at a certain rate.\footnote{Here the QR-CDF estimator can be used because $Q_{Y^{\ast }|X,R,V}(\tau |X,R,V)$ is fully nonparametric and its estimator has a slow rate of convergence.}

\medskip
\noindent\textit{Step 2: distributional effects estimation}
\medskip

Write $S=(X^{\prime },R,V^{\prime })^{\prime }$ and $\hat{S}=(X^{\prime },R,%
\hat{V}^{\prime })^{\prime }$. For notational convenience, $m_{Y^{\ast
}}(s,\tau )$ is shorthand for $m_{Y^{\ast }}(x,r,v,\tau )$. In the
second-step estimation, define $P_{J_{s}}(s)=(p_{1}(s),\cdots
,p_{J_{s}}(s))^{\prime }$ as a $J_{s}\times 1$ vector of basis functions
with a vector of coefficients $\delta _{s}(\tau )$, so that $%
P_{J_{s}}(s)^{\prime }\delta _{s}(\tau )$ approximates $m_{Y^{\ast }}(s,\tau
)$ well in the sense that the approximation error is asymptotically
negligible as $J_{s}\rightarrow \infty $. For any $\tau \in \lbrack \tau
_{0},\tau _{U}]$, estimate $m_{Y^{\ast }}(s,\tau )$ by 
\begin{equation*}
	\hat{m}_{Y^{\ast }}(s,\tau )=P_{J_{s}}(s)^{\prime }\hat{\delta}_{s}(\tau ),
\end{equation*}%
where 
\begin{equation*}
	\hat{\delta}_{s}(\tau )=\underset{d\in \mathcal{D}}{\min }\frac{1}{n}%
	\sum_{i=1}^{n}T_{i}\rho _{\tau }(Y_{i}-P_{J_{s}}(\hat{S}_{i})^{\prime }d).
\end{equation*}
For $[\tau _{L},\tau _{0})$, choose a sequence of grids $\tau _{L}<\cdots
<\tau _{1}<\tau _{0}$ that partitions the interval, and estimate $\delta
_{s}(\tau _{l})$ sequentially for $l=1,\cdots ,L$ by%
\begin{equation*}
	\hat{\delta}_{s}(\tau _{l})=\underset{d\in \mathcal{D}}{\arg \min }\frac{1}{n%
	}\sum_{i=1}^{n}T_{i}1\{P_{J_{s}}(\hat{S}_{i})^{\prime }\hat{\delta}_{s}(\tau
	_{l-1})>\tilde{\eta}_{n}\}\rho _{\tau _{l}}(Y_{i}-P_{J_{s}}(\hat{S}%
	_{i})^{\prime }d)
\end{equation*}%
where the feasible selector $1\{P_{J_{s}}(\hat{S}_{i})^{\prime }\hat{\delta}_{s}(\tau _{l-1})>%
\tilde{\eta}_{n}\}$ restricts the sample to observations with $P_{J_{s}}(%
\hat{S}_{i})^{\prime }\hat{\delta}_{s}(\tau _{l-1})$ exceeding $\tilde{\eta}%
_{n}$.

In general, for any $\tau \in \lbrack \tau _{L},\tau _{0})$, there
exists an $l$ such that $\tau \in \lbrack \tau _{l+1},\tau _{l})$, and the
TNS estimator of $\delta _{s}(\tau )$ is defined as%
\begin{equation*}
	\hat{\delta}_{s}(\tau )=\underset{d\in \mathcal{D}}{\arg \min }\frac{1}{n}%
	\sum_{i=1}^{n}T_{i}1\{P_{J_{s}}(\hat{S}_{i})^{\prime }\hat{\delta}_{s}(\tau
	_{l})>\tilde{\eta}_{n}\}\rho _{\tau }(Y_{i}-P_{J_{s}}(\hat{S}_{i})^{\prime
	}d).
\end{equation*}%
Accordingly, the consistent estimator of $m_{Y^{\ast }}(s,\tau )$ is%
\begin{equation*}
	\hat{m}_{Y^{\ast }}(s,\tau )=P_{J_{s}}(s)^{\prime }\hat{\delta}_{s}(\tau ).
\end{equation*}%
The asymptotic properties of $\hat{\delta}_{s}(\tau )$ and $\hat{m}_{Y^{\ast
}}(s,\tau )$ are omitted, as they are analogous to Theorem \ref{Thm3}-\ref%
{Thm5}, except that no coefficient estimator attains $\sqrt{n}$ asymptotic
normality.

There is a trade-off between model (\ref{censor1})-(\ref{endo1}) and model (%
\ref{censor2})-(\ref{endo2}). Without structural restrictions, model (\ref%
{censor1})-(\ref{endo1}) is in favor of generality, nevertheless, the
resulting estimator suffers from the curse of dimensionality. Also, the
first-step estimation has practical difficulties of choosing multiple
bandwidths or carefully setting an asymptotically negligible trimming
cut-off $\omega $. By contrast, model (\ref{censor2})-(\ref{endo2}) is more
tractable since quantile coefficients directly characterize heterogeneous
distributional effects of interest, and the semiparametric estimator is $%
\sqrt{n}$-asymptotic normality. In practice, an appropriate semiparametric
specification is often recommended as a reasonable and tractable compromise.

\section{Monte Carlo Simulation}

To illustrate the finite sample performance of TNS estimator, we consider
two data generating processes (DGPs) with the model specification as%
\begin{eqnarray*}
	Y &=&\max \{Y^{\ast },0\}=\max \{\beta _{0}+\alpha R+\beta _{1}X+\varepsilon
	,0\}, \\
	R &=&\gamma _{0}+\gamma _{1}Z_{1}+\gamma _{2}X+\epsilon .
\end{eqnarray*}

\noindent \textbf{DGP I.} $X$ and $Z_{1}$ are mutually independent standard
normal random variables. Set $\gamma _{0}=\gamma _{1}=\gamma _{2}=\alpha
=\beta _{1}=1$ and $\beta _{0}$ to be the value such that the censoring
percentage is 30\%. The error term in the endogenous regressor equation is
homoskedastic---$\epsilon =\Phi ^{-1}(V)$, and the error term in the latent
outcome equation is heteroskedastic $\varepsilon =\frac{1}{\sqrt{2}}\rho
\Phi ^{-1}(V)+\frac{1}{\sqrt{2}}[0.4(0.5R+0.5X+2)+\sqrt{1-\rho ^{2}}]\Phi
^{-1}(U)$, where $V$ is independent of $(X,Z_{1})$ with $V\sim U(0,1)$, $U$
is independent of $(X,Z_{1},V)$ with $U\sim U(0,1)$ and $\rho =0.5$. Under
this setup, the distributional effects of $R$ and $X$ are $\alpha (\tau )=1+%
\frac{1}{5\sqrt{2}}\Phi ^{-1}(\tau )$ and $\beta _{1}(\tau )=1+\frac{1}{5%
	\sqrt{2}}\Phi ^{-1}(\tau )$.\textbf{\medskip }

\noindent \textbf{DGP II.} The data is generated same as DGP I except that
the error term in the endogenous regressor equation is heteroskedastic---$\epsilon
=0.5(2+0.5X+0.5Z_{1})\Phi ^{-1}(V)$.\textbf{\medskip }

In DGP I, the latent outcome features a heteroskedastic error implying that
the distributional effects of interest are heterogeneous across quantile
levels. In DGP II, both the endogenous regressor and the latent outcome
contain heteroskedastic errors. We implement the TNS estimation algorithm:
In the first-step, we set $(\tilde{v}_{1},\tilde{v}_{2})=(0.25,0.75)$. In
the second step, we remove 1\% observations with extreme values of $%
(X_{i},R_{i},Z_{i})$, set the grid step to $\tau _{l-1}-\tau _{l}=0.01$ with 
$\tau _{0}=0.99$ and $\tau _{L}=0.3$ (so that $L=90$), and adopt $%
(q_{0},q_{1})=(0.1,0.03)$. Besides, the induced unknown function of $V_{i}$
is approximated by a raw polynomial series with\ the degree $J_{p}=3$($%
J=J_{p}+1$) and a cubic B-spline series with the number of inner knots $%
J_{k}=3$($J=J_{k}+4$), respectively. Each experiment is replicated $G=1000$
times. The sample sizes are $n=250$, $500$, $1000$. More simulation results with
different $J_{p}$ and $J_{k}$ are provided in the supplementary materials.

First, Tables \ref{table1}-\ref{table2} report the bias (Bias) and root of mean square error
(RMSE) of TNS estimates at $(0.9,0.7,0.5,0.3)$ quantiles, which are
calculated by%
\begin{equation*}
	Bias(\tau )=\frac{1}{G}\sum_{g=1}^{G}\left( \dbinom{\hat{\alpha}_{g}(\tau )}{%
		\hat{\beta}_{1g}(\tau )}-\dbinom{\alpha (\tau )}{\beta _{1}(\tau )}\right) 
\end{equation*}
and 
\begin{equation*}
	RMSE(\tau )=\left[ \frac{1}{G}\sum_{g=1}^{G}\left( \dbinom{\hat{%
			\alpha}_{g}(\tau )}{\hat{\beta}_{1g}(\tau )}-\dbinom{\alpha (\tau )}{\beta
		_{1}(\tau )}\right) ^{2}\right] ^{1/2},
\end{equation*}
where $\dbinom{\hat{\alpha}_{g}(\tau )}{\hat{\beta}_{1g}(\tau )}$ denotes
the estimate from the $g$-th simulation replication. For comparison, we also
report the CQIV-DR estimates which are computed using Algorithm 1 of
\citet{chernozhukov2015quantile}. As shown in both tables, the TNS method
outperforms the CQIV-DR method in estimating the distributional effects,
yielding negligible bias and smaller RMSE values.
\begin{landscape}%
\begin{table}
\small
	\caption{Estimation---Polynomial series}
	\label{table1}
		\begin{tabular}{ccccccccccccccc}
				\hline
			&  & \multicolumn{2}{c}{${n=250}$} & \multicolumn{2}{c}{${n=500%
				}$} & \multicolumn{2}{c}{${n=1000}$} &  & \multicolumn{2}{c}{${
					n=250}$} & \multicolumn{2}{c}{${n=500}$} & \multicolumn{2}{c}{$%
				{n=1000}$} \\ 
			&  & {Bias} & {RMSE} & {Bias} & {RMSE} & {
				Bias} & {RMSE} &  & {Bias} & {RMSE} & {Bias} & 
			{RMSE} & {Bias} & {RMSE} \\ \cline{3-15}
			{TNS} &  & \multicolumn{6}{c}{DGP I} &  & \multicolumn{6}{c}%
			{DGP II} \\ 
			${\tau =0.9}$ & ${\alpha (\tau )}$ & {0.008} & {0.130} & {0.005} & {0.090} & {0.008} & {0.062} & & {-0.007} & {0.140} & {0.003} & {0.098} & {0.001} & {0.069} \\ 
			& ${\beta_{1}(\tau)}$ & {0.012} & {0.177} & 
			{0.003} & {0.117} & {-0.006} & {0.084} & & 
			{-0.038} & {0.201} & {-0.043} & {0.151} & 
			{-0.044} & {0.105} \\ [6pt]
			${\tau =0.7}$ & ${\alpha (\tau )}$ & {0.006} & {
			0.106} & {0.004} & {0.074} & {0.007} & {0.053} & 
			& {0.023} & {0.114} & {0.025} & {0.087} & 
			{0.028} & {0.064} \\ 
			& ${\beta }_{1}{(\tau )}$ & {0.013} & {0.146} & 
			{0.003} & {0.097} & {-0.002} & {0.071} &  & 
			{-0.012} & {0.167} & {-0.017} & {0.116} & 
			{-0.015} & {0.082} \\ [6pt]
			${\tau =0.5}$ & ${\alpha (\tau )}$ & {0.008} & {
			0.107} & {0.006} & {0.076} & {0.008} & {0.053} & 
			& {0.043} & {0.122} & {0.046} & {0.094} & 
			{0.044} & {0.071} \\ 
			& ${\beta }_{1}{(\tau )}$ & {0.007} & {0.146} & 
			{0.005} & {0.104} & {-0.004} & {0.069} &  & 
			{0.007} & {0.168} & {0.002} & {0.116} & {
			0.002} & {0.078} \\ [6pt]
			${\tau =0.3}$ & ${\alpha (\tau )}$ & {0.012} & {
			0.121} & {0.007} & {0.082} & {0.010} & {0.057} & 
			& {0.068} & {0.149} & {0.061} & {0.111} & 
			{0.062} & {0.089} \\ 
			& ${\beta }_{1}{(\tau )}$ & {0.009} & {0.162} & 
			{0.005} & {0.110} & {-0.005} & {0.074} &  & 
			{0.032} & {0.196} & {0.024} & {0.131} & {
			0.020} & {0.093} \\ \hline
			{CQIV} &  & \multicolumn{6}{c}{DGP I} &  & \multicolumn{6}{c}%
			{DGP II} \\ 
			${\tau =0.9}$ & ${\alpha (\tau )}$ & {-0.383} & {
			0.419} & {-0.382} & {0.404} & {-0.390} & {0.403} &  & {
			-0.225} & {0.270} & {-0.223} & {0.249} & {-0.224} & {0.238}
			\\ 
			& ${\small \beta }_{1}{\small (\tau )}$ & {-0.366} & {0.427} & 
			{-0.372} & {0.408} & {-0.377} & {0.395} &  & {-0.490} & 
			{0.530} & {-0.512} & {0.537} & {-0.539} & {0.552} \\  [6pt]
			${\tau =0.7}$ & ${\alpha (\tau )}$ & {-0.298} & {
			0.329} & {-0.296} & {0.313} & {-0.302} & {0.313} &  & {
			-0.190} & {0.226} & {-0.191} & {0.211} & {-0.195} & {0.207}
			\\ 
			& ${\beta }_{1}{(\tau )}$ & {-0.284} & {0.325} & 
			{-0.279} & {0.305} & {-0.291} & {0.305} &  & {-0.357} & 
			{0.390} & {-0.371} & {0.393} & {-0.391} & {0.403} \\ [6pt]
			${\tau =0.5}$ & ${\alpha (\tau )}$ & {-0.254} & {
			0.283} & {-0.250} & {0.267} & {-0.262} & {0.273} &  & {
			-0.168} & {0.204} & {-0.173} & {0.195} & {-0.181} & {0.194}
			\\ 
			& ${\beta }_{1}{(\tau )}$ & {-0.248} & {0.287} & 
			{-0.247} & {0.271} & {-0.252} & {0.266} &  & {-0.298} & 
			{0.334} & {-0.304} & {0.325} & {-0.321} & {0.334} \\  [6pt]
			${\tau =0.3}$ & ${\alpha (\tau )}$ & {-0.233} & {
			0.267} & {-0.234} & {0.254} & {-0.243} & {0.256} &  & {
			-0.157} & {0.199} & {-0.166} & {0.192} & {-0.175} & {0.190}
			\\ 
			& ${\beta }_{1}{(\tau )}$ & {-0.237} & {0.280} & 
			{-0.234} & {0.261} & {-0.239} & {0.255} &  & {-0.263} & 
			{0.305} & {-0.275} & {0.299} & {-0.288} & {0.302} \\ \hline
		\end{tabular}%
	\end{table}
\begin{table}
	\small
	\caption{Estimation---B-spline series}
	\label{table2}
	\begin{tabular}{ccccccccccccccc}
		\hline
		&  & \multicolumn{2}{c}{${\small n=250}$} & \multicolumn{2}{c}{${\small n=500%
			}$} & \multicolumn{2}{c}{${\small n=1000}$} &  & \multicolumn{2}{c}{${\small %
				n=250}$} & \multicolumn{2}{c}{${\small n=500}$} & \multicolumn{2}{c}{$%
			{\small n=1000}$} \\ 
		&  & {\small Bias} & {\small RMSE} & {\small Bias} & {\small RMSE} & {\small %
			Bias} & {\small RMSE} &  & {\small Bias} & {\small RMSE} & {\small Bias} & 
		{\small RMSE} & {\small Bias} & {\small RMSE} \\ \cline{3-15}
		{\small SCFCQ} &  & \multicolumn{6}{c}{\small DGP I} &  & \multicolumn{6}{c}%
		{\small DGP II} \\ 
		${\small \tau =0.9}$ & ${\small \alpha (\tau )}$ & {\small 0.010} & {\small %
			0.134} & {\small 0.004} & {\small 0.086} & {\small 0.003} & {\small 0.059} & 
		\multicolumn{1}{r}{} & {\small -0.005} & {\small 0.141} & {\small -0.001} & 
		{\small 0.096} & {\small -0.004} & {\small 0.069} \\ 
		& ${\small \beta }_{1}{\small (\tau )}$ & {\small -0.000} & {\small 0.176} & 
		{\small 0.006} & {\small 0.120} & {\small 0.003} & {\small 0.081} & 
		\multicolumn{1}{r}{} & {\small -0.031} & {\small 0.196} & {\small -0.042} & 
		{\small 0.146} & {\small -0.041} & {\small 0.104} \\ 
		&  &  &  &  &  &  &  &  &  &  &  &  &  &  \\ 
		${\small \tau =0.7}$ & ${\small \alpha (\tau )}$ & {\small 0.008} & {\small %
			0.110} & {\small 0.004} & {\small 0.071} & {\small 0.002} & {\small 0.051} & 
		\multicolumn{1}{r}{} & {\small 0.031} & {\small 0.115} & {\small 0.030} & 
		{\small 0.084} & {\small 0.025} & {\small 0.060} \\ 
		& ${\small \beta }_{1}{\small (\tau )}$ & {\small 0.007} & {\small 0.148} & 
		{\small 0.005} & {\small 0.099} & {\small 0.002} & {\small 0.072} & 
		\multicolumn{1}{r}{} & {\small -0.000} & {\small 0.161} & {\small -0.010} & 
		{\small 0.116} & {\small -0.012} & {\small 0.078} \\ 
		&  &  &  &  &  &  &  &  &  &  &  &  &  &  \\ 
		${\small \tau =0.5}$ & ${\small \alpha (\tau )}$ & {\small 0.013} & {\small %
			0.107} & {\small 0.004} & {\small 0.073} & {\small 0.004} & {\small 0.052} & 
		\multicolumn{1}{r}{} & {\small 0.052} & {\small 0.122} & {\small 0.048} & 
		{\small 0.095} & {\small 0.044} & {\small 0.070} \\ 
		& ${\small \beta }_{1}{\small (\tau )}$ & {\small 0.004} & {\small 0.145} & 
		{\small 0.003} & {\small 0.097} & {\small 0.003} & {\small 0.071} & 
		\multicolumn{1}{r}{} & {\small 0.015} & {\small 0.170} & {\small 0.010} & 
		{\small 0.119} & {\small 0.004} & {\small 0.079} \\ 
		&  &  &  &  &  &  &  &  &  &  &  &  &  &  \\ 
		${\small \tau =0.3}$ & ${\small \alpha (\tau )}$ & {\small 0.013} & {\small %
			0.115} & {\small 0.006} & {\small 0.083} & {\small 0.005} & {\small 0.056} & 
		\multicolumn{1}{r}{} & {\small 0.068} & {\small 0.149} & {\small 0.064} & 
		{\small 0.115} & {\small 0.062} & {\small 0.089} \\ 
		& ${\small \beta }_{1}{\small (\tau )}$ & {\small 0.009} & {\small 0.158} & 
		{\small 0.007} & {\small 0.106} & {\small 0.003} & {\small 0.076} & 
		\multicolumn{1}{r}{} & {\small 0.044} & {\small 0.201} & {\small 0.028} & 
		{\small 0.129} & {\small 0.021} & {\small 0.094} \\ \hline
		{\small CQIV} &  & \multicolumn{6}{c}{\small DGP I} &  & \multicolumn{6}{c}%
		{\small DGP II} \\ 
		${\small \tau =0.9}$ & ${\small \alpha (\tau )}$ & {\small -0.411} & {\small %
			0.445} & -0.404 & 0.425 & {\small -0.412} & {\small 0.422} & 
		\multicolumn{1}{r}{} & {\small -0.232} & {\small 0.280} & -0.233 & 0.261 & 
		{\small -0.233} & {\small 0.249} \\ 
		& ${\small \beta }_{1}{\small (\tau )}$ & {\small -0.356} & {\small 0.421} & 
		-0.385 & 0.420 & {\small -0.395} & {\small 0.411} & \multicolumn{1}{r}{} & 
		{\small -0.496} & {\small 0.540} & -0.527 & 0.548 & {\small -0.556} & 
		{\small 0.567} \\ 
		&  &  &  &  &  &  &  &  &  &  &  &  &  &  \\ 
		${\small \tau =0.7}$ & ${\small \alpha (\tau )}$ & {\small -0.318} & {\small %
			0.343} & -0.314 & 0.329 & {\small -0.327} & {\small 0.334} & 
		\multicolumn{1}{r}{} & {\small -0.203} & {\small 0.235} & -0.207 & 0.226 & 
		{\small -0.207} & {\small 0.218} \\ 
		& ${\small \beta }_{1}{\small (\tau )}$ & {\small -0.292} & {\small 0.336} & 
		-0.308 & 0.328 & {\small -0.313} & {\small 0.323} & \multicolumn{1}{r}{} & 
		{\small -0.368} & {\small 0.403} & -0.392 & 0.408 & {\small -0.409} & 
		{\small 0.418} \\ 
		&  &  &  &  &  &  &  &  &  &  &  &  &  &  \\ 
		${\small \tau =0.5}$ & ${\small \alpha (\tau )}$ & {\small -0.279} & {\small %
			0.302} & -0.275 & 0.289 & {\small -0.285} & {\small 0.292} & 
		\multicolumn{1}{r}{} & {\small -0.183} & {\small 0.216} & -0.192 & 0.212 & 
		{\small -0.195} & {\small 0.205} \\ 
		& ${\small \beta }_{1}{\small (\tau )}$ & {\small -0.255} & {\small 0.296} & 
		-0.273 & 0.292 & {\small -0.277} & {\small 0.286} & \multicolumn{1}{r}{} & 
		{\small -0.307} & {\small 0.340} & -0.329 & 0.346 & {\small -0.342} & 
		{\small 0.352} \\ 
		&  &  &  &  &  &  &  &  &  &  &  &  &  &  \\ 
		${\small \tau =0.3}$ & ${\small \alpha (\tau )}$ & {\small -0.257} & {\small %
			0.284} & -0.260 & 0.274 & {\small -0.268} & {\small 0.275} & 
		\multicolumn{1}{r}{} & {\small -0.174} & {\small 0.214} & -0.190 & 0.212 & 
		{\small -0.196} & {\small 0.208} \\ 
		& ${\small \beta }_{1}{\small (\tau )}$ & {\small -0.243} & {\small 0.284} & 
		-0.258 & 0.280 & {\small -0.264} & {\small 0.274} & \multicolumn{1}{r}{} & 
		{\small -0.281} & {\small 0.320} & -0.297 & 0.317 & {\small -0.308} & 
		{\small 0.320} \\ \hline
	\end{tabular}%
\end{table}
\end{landscape}%
Second, RMSE of the TNS estimator decreases substantially as the sample
size increases. When the sample size increases from $250$ to $1000$, they
are almost halved in accordance with the $\sqrt{n}$-asymptotic normality of $%
\dbinom{\hat{\alpha}(\tau )}{\hat{\beta}(\tau )}$ established in Theorem \ref%
{Thm5}.

Third, the unknown function is approximated by a raw polynomial series in
Table \ref{table1}, and by a B-spline series in Table \ref{table2}. The similar performance implies that the TNS
estimator is robust to basis functions.

Finally, more numerical results are available in the supplementary
materials. Together with Tables \ref{table1}-\ref{table2}, they evaluate the performance of the TNS estimator under different
values of $J$. By Table \ref{table1} ($J_{p}=3$) and the first supplementary table ($J_{p}=2$), we find that
biases generally decrease as the order increases from $2$ to $3$. Moreover,
if the order of the spline basis is fixed, the TNS method exhibits robust
performance with $J_{k}=2,3,4$.
\section{Application: Heterogeneous Income Elasticities of Demand}

The income elasticity of demand describes how household expenditure on
particular goods or services responds to household income(or total
expenditure). This concept has long attracted substantial attention for its
ability to reveal households' strategies in resource allocation.
\citet{bobonis2009allocation} emphasized that resource allocation is an endogenous
optimization process. A large body of related works have committed to
elasticity of demand on various commodities, dealing with the endogeneity
arising from income(total expenditure). For example, \citet{blundell2007semi}
evaluated UK households' income elasticities of demand for seven categories
of nondurables and services among UK households, using the household head's
earnings as an excluded instrumental variable.

Despite substantial progress in the literature, most studies on income
elasticities have focused exclusively on the average effect of income (or
total expenditure), and implicitly assumed that household preferences are
constant. The heterogeneity across relative ranks of commodity expenditure has often been
ignored, leading to an incomplete understanding of distributional
heterogeneity. \citet{lyssiotou1999preference} found that income elasticities are
inherently heterogeneous because they differ across individuals with
distinct consumption patterns. Also, \citet{lewbel2006engel}
documented that the discrepancy between observed budget share and its fitted
value widens markedly as income (total expenditure) rises. Since then,
heteroskedasticity has been recognized as an important source of
heterogeneity in the income elasticities of demand (\citet{pryce2019alcohol}).

In practice, a non-negligible proportion of households never purchase
certain goods or services. For instance, individuals who never smoke or
drink allocate zero expense to cigarettes or alcohol, resulting in censored
expenditure data at zero. Such censoring, which frequently occurs in
household consumption data, poses challenges for estimating distributional
effects with respect to income. \citet{chernozhukov2015quantile} applied the CQIV
method to estimate the elasticity of alcohol demand using data from the UK
Family Expenditure Survey. However, possible heteroskedasticity within
expenditure can induce heterogeneous distributional effects, which in turn
may lead to incorrect subsample selection and then inconsistency of the CQIV
estimator.

This section intends to estimate heterogeneous expenditure(income) elasticities
of demand on alcohol, fares, and fuel across conditional distribution of
commodity expenses, fully admitting endogeneity, heteroskedasticity and
censoring. The 1995 UK Family Expenditure Survey data shows that approximately 30\% households reported no alcohol purchases, 46\% recorded zero expenditure on fares, and 2\% reported no fuel expenditure. The dependent variables---budget
shares---are therefore subject to censoring. In many studies of income
elasticity, total expenditure is employed instead of income to mitigate
measurement error (\citet{blundell2007semi,chernozhukov2015quantile}). Accordingly, our regression specification is%
\begin{eqnarray}\label{spec1}
	\begin{aligned}
	Y &=\max \{X^{\prime }\beta (U)+R\alpha (U)+m(V,U),0\},  \label{spec1} \\
	R &=\gamma _{0}+Z_{1}\gamma _{1}+X^{^{\prime }}\gamma _{2}+\sigma
	(Z)F_{\epsilon ^{\ast }}^{-1}(V),  
    \end{aligned}
\end{eqnarray}%
where $Y$ denotes the observed budget share for alcohol, fares, and fuel,
respectively. The key endogenous regressor, $R$, is the logarithm of total
expenditure. The corresponding coefficient $\alpha (\tau )$ captures the
heterogeneous income elasticities of interest, which is allowed to vary with relative ranks of commodity expenditure. To cope with endogeneity, \citet{gorman1959separable} argued that labor income is a valid instrument for total expenditure. See also \citet{blundell2007semi}
and \citet{chernozhukov2015quantile}. Following this literature, we use
the household head's earnings in 1995 as the excluded instrument $Z_{1}$ without further verification. $X$ include
the household head's age, and a demographic dummy indicating whether the
household has a child or not. This regression
specification is a special case of our model \ref{censor2}-\ref{endo2}.

We implement the TNS algorithm to estimate the distributional effects. In the first-step estimation, we assume
that $\epsilon ^{\ast }$ follows a standard normal distribution for
simplicity and set $(\tilde{v}_{1},\tilde{v}_{2})=(0.25,0.75)$. In the
second-step estimation, 1\% outliers for $(X_{i},R_{i},Z_{i})$ are removed.
The function $m(v,\tau )$ is approximated by a third-order raw polynomial
series expansion. The grid is constructed with a step size of $\tau
_{l-1}-\tau _{l}=0.02$ for $l=1,\cdots ,L$ with $\tau _{0}=0.98$ and $\tau
_{L}=0.3$. The tunning parameter $\tilde{\eta}_{n}$ is initialized as the $%
0.1$-th quantile of the positive values of $X^{\prime }\hat{\beta}(\tau
_{l-1})+R\hat{\alpha}(\tau _{l-1})+\hat{m}(\hat{V},\tau _{l-1})$ and updated
to the $0.03$-th quantile of the positive values of $X^{\prime }\tilde{\beta}%
(\tau _{l})+R\tilde{\alpha}(\tau _{l})+\tilde{m}(\hat{V},\tau _{l})$.
Standard errors are computed using the proposed weighted bootstrap method
with $T=999$ replications.

The left panel of Table \ref{table3} reports the TNS estimates and standard errors
for $\tau =(0.9,0.7,0.5,0.3)$. The results indicate that expenditure
elasticities are heterogeneous across relative ranks of commodity-specific expenditure.
For alcohol, households at intermediate ranks exhibit larger elasticities with respect to total expenditure than those at the top or bottom of the distribution. In contrast, for fares, households at intermediate ranks substantially reduce the budget share devoted to travel as total expenditure rises, whereas households at high or low ranks display comparatively modest adjustments in travel-related consumption. For fuel, all households reduce their budget shares as total expenditure increases, with the largest declines among high-rank fuel consumers. These findings have
important implications for both corporate marketing and public policy, as
they provide evidence for the design of differentiated marketing and pricing
strategies across household segments. Figure 1 depicts the estimated expenditure elasticities over quantile levels.
\begin{table}
	\centering
	\small
	\caption{Expenditure Elasticities at Quantiles}
	\label{table3}
	\begin{tabular}{ccccccccc}
		\hline
		${\small \tau }$ & ${\small 0.9}$ & ${\small 0.7}$ & ${\small 0.5}$ & $%
		{\small 0.3}$ & ${\small 0.9}$ & ${\small 0.7}$ & ${\small 0.5}$ & ${\small %
			0.3}$ \\ \cline{2-9}
		${\small Y}$ & \multicolumn{8}{c}{\small Alcohol Expenditure} \\ 
		$\log ${\small exp} & {\small -0.006} & {\small 0.015}$^{\ast \ast \ast }$ & 
		{\small 0.016}$^{\ast \ast \ast }$ & {\small 0.013}$^{\ast \ast \ast }$ & 
		{\small 0.142}$^{\ast \ast \ast }$ & {\small 0.120}$^{\ast \ast \ast }$ & 
		{\small 0.107}$^{\ast \ast \ast }$ & {\small 0.071}$^{\ast \ast \ast }$ \\ 
		& {\small (0.004)} & {\small (0.002)} & {\small (0.001)} & {\small (0.001)}
		& {\small (0.054)} & {\small (0.019)} & {\small (0.017)} & {\small (0.017)}
		\\ 
		{\small (}$\log ${\small exp)}$^{2}$ & - & - & - & - & {\small -0.014}$%
		^{\ast \ast \ast }$ & {\small -0.010}$^{\ast \ast \ast }$ & {\small -0.008}$%
		^{\ast \ast \ast }$ & {\small -0.005}$^{\ast \ast \ast }$ \\ 
		& - & - & - & - & {\small (0.005)} & {\small (0.002)} & {\small (0.002)} & 
		{\small (0.001)} \\ 
		{\small child} & {\small -0.054}$^{\ast \ast \ast }$ & {\small -0.031}$%
		^{\ast \ast \ast }$ & {\small -0.017}$^{\ast \ast \ast }$ & {\small -0.009}$%
		^{\ast \ast \ast }$ & {\small -0.052}$^{\ast \ast \ast }$ & {\small -0.029}$%
		^{\ast \ast \ast }$ & {\small -0.017}$^{\ast \ast \ast }$ & {\small -0.010}$%
		^{\ast \ast \ast }$ \\ 
		& {\small (0.004)} & {\small (0.003)} & {\small (0.002)} & {\small (0.001)}
		& {\small (0.004)} & {\small (0.003)} & {\small (0.002)} & {\small (0.001)}
		\\ 
		{\small age} & {\small -0.001}$^{\ast \ast \ast }$ & {\small -0.001}$^{\ast
			\ast \ast }$ & {\small -0.000}$^{\ast \ast \ast }$ & {\small -0.000}$^{\ast
			\ast \ast }$ & {\small -0.001}$^{\ast \ast \ast }$ & {\small -0.001}$^{\ast
			\ast \ast }$ & {\small -0.000}$^{\ast \ast \ast }$ & {\small -0.000}$^{\ast
			\ast \ast }$ \\ 
		& {\small (0.000)} & {\small (0.000)} & {\small (0.000)} & {\small (0.000)}
		& {\small (0.000)} & {\small (0.000)} & {\small (0.000)} & {\small (0.000)}
		\\ \cline{2-9}
		${\small Y}$ & \multicolumn{8}{c}{\small Fares Expenditure} \\ 
		$\log ${\small exp} & {\small -0.004} & {\small -0.010}$^{\ast \ast \ast }$
		& {\small -0.002}$^{\ast \ast \ast }$ & {\small 0.001}$^{\ast \ast }$ & 
		{\small 0.002} & {\small -0.043}$^{\ast \ast \ast }$ & {\small -0.025}$%
		^{\ast \ast \ast }$ & {\small -0.012}$^{\ast \ast \ast }$ \\ 
		& {\small (0.004)} & {\small (0.001)} & {\small (0.000)} & {\small (0.000)}
		& {\small (0.036)} & {\small (0.012)} & {\small (0.007)} & {\small (0.004)}
		\\ 
		{\small (}$\log ${\small exp)}$^{2}$ & - & - & - & - & {\small -0.001} & 
		{\small 0.003}$^{\ast \ast \ast }$ & {\small 0.002}$^{\ast \ast \ast }$ & 
		{\small 0.001}$^{\ast \ast \ast }$ \\ 
		& - & - & - & - & {\small (0.003)} & {\small (0.001)} & {\small (0.001)} & 
		{\small (0.000)} \\ 
		{\small child} & {\small -0.011}$^{\ast \ast }$ & {\small 0.003}$^{\ast }$ & 
		{\small 0.003}$^{\ast \ast \ast }$ & {\small 0.004}$^{\ast \ast }$ & {\small %
			-0.011}$^{\ast \ast \ast }$ & {\small 0.002} & {\small 0.004}$^{\ast \ast
			\ast }$ & {\small 0.004}$^{\ast \ast \ast }$ \\ 
		& {\small (0.004)} & {\small (0.002)} & {\small (0.001)} & {\small (0.002)}
		& {\small (0.004)} & {\small (0.002)} & {\small (0.001)} & {\small (0.002)}
		\\ 
		{\small age} & {\small -0.001}$^{\ast \ast \ast }$ & {\small -0.000}$^{\ast
			\ast \ast }$ & {\small -0.000}$^{\ast \ast \ast }$ & {\small 0.000}$^{\ast
			\ast }$ & {\small -0.001}$^{\ast \ast \ast }$ & {\small -0.000}$^{\ast \ast
			\ast }$ & {\small -0.000}$^{\ast \ast \ast }$ & {\small 0.000}$^{\ast \ast }$
		\\ 
		& {\small (0.000)} & {\small (0.000)} & {\small (0.000)} & {\small (0.000)}
		& {\small (0.000)} & {\small (0.000)} & {\small (0.000)} & {\small (0.000)}
		\\ \cline{2-9}
		${\small Y}$ & \multicolumn{8}{c}{\small Fuel Expenditure} \\ 
		$\log ${\small exp} & {\small -0.084}$^{\ast \ast \ast }$ & {\small -0.055}$%
		^{\ast \ast \ast }$ & {\small -0.039}$^{\ast \ast \ast }$ & {\small -0.027}$%
		^{\ast \ast \ast }$ & {\small -0.393}$^{\ast \ast \ast }$ & {\small -0.238}$%
		^{\ast \ast \ast }$ & {\small -0.156}$^{\ast \ast \ast }$ & {\small -0.094}$%
		^{\ast \ast \ast }$ \\ 
		& {\small (0.002)} & {\small (0.001)} & {\small (0.001)} & {\small (0.001)}
		& {\small (0.028)} & {\small (0.017)} & {\small (0.014)} & {\small (0.011)}
		\\ 
		{\small (}$\log ${\small exp)}$^{2}$ & - & - & - & - & {\small 0.029}$^{\ast
			\ast \ast }$ & {\small 0.017}$^{\ast \ast \ast }$ & {\small 0.011}$^{\ast
			\ast \ast }$ & {\small 0.006}$^{\ast \ast \ast }$ \\ 
		& - & - & - & - & {\small (0.003)} & {\small (0.002)} & {\small (0.001)} & 
		{\small (0.001)} \\ 
		{\small child} & {\small 0.019}$^{\ast \ast \ast }$ & {\small 0.013}$^{\ast
			\ast \ast }$ & {\small 0.013}$^{\ast \ast \ast }$ & {\small 0.012}$^{\ast
			\ast \ast }$ & {\small 0.014}$^{\ast \ast \ast }$ & {\small 0.011}$^{\ast
			\ast \ast }$ & {\small 0.011}$^{\ast \ast \ast }$ & {\small 0.010}$^{\ast
			\ast \ast }$ \\ 
		& {\small (0.002)} & {\small (0.001)} & {\small (0.001)} & {\small (0.001)}
		& {\small (0.002)} & {\small (0.001)} & {\small (0.001)} & {\small (0.001)}
		\\ 
		{\small age} & {\small 0.000}$^{\ast \ast \ast }$ & {\small 0.000}$^{\ast
			\ast \ast }$ & {\small 0.000}$^{\ast \ast \ast }$ & {\small 0.000}$^{\ast
			\ast \ast }$ & {\small 0.000}$^{\ast \ast \ast }$ & {\small 0.000}$^{\ast
			\ast \ast }$ & {\small 0.000}$^{\ast \ast \ast }$ & {\small 0.000}$^{\ast
			\ast \ast }$ \\ 
		& {\small (0.000)} & {\small (0.000)} & {\small (0.000)} & {\small (0.000)}
		& {\small (0.000)} & {\small (0.000)} & {\small (0.000)} & {\small (0.000)}
		\\ 
		{\small Obs.} & {\small 6797} & {\small 6797} & {\small 6797} & {\small 6797}
		& {\small 6797} & {\small 6797} & {\small 6797} & {\small 6797} \\ \hline
	\end{tabular}%
\end{table}

\begin{figure}
	\centering
	\includegraphics[width=130mm,height=60mm]{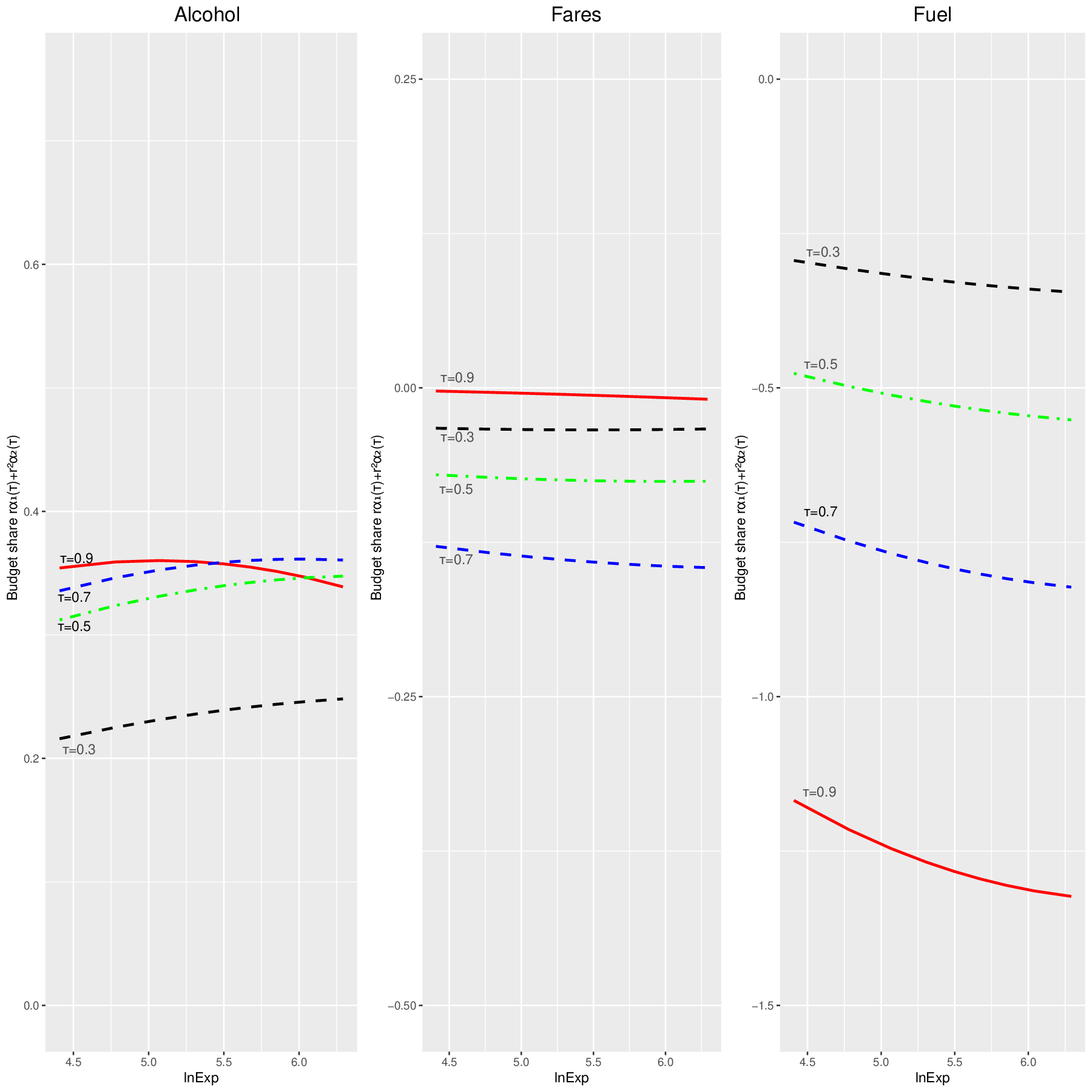}
	\caption{Expenditure Pattern}
	\label{fig2}
\end{figure}
 Some Engel curve studies, such as \citet{hausman1995nonlinear} and \citet{blundell2007semi},
 found that average expenditure elasticities vary with the level of
(ln)total expenditure. They attribute the heterogeneity in
spending behaviors to the nonlinear influence of total expenditure. A
natural question arises as to whether the elasticities exhibit similar
patterns when household heterogeneity is explicitly captured
through quantile regression. To examine this viewpoint, we augment the
baseline specification by incorporating a quadratic term of $R$:%
\begin{eqnarray}
	\begin{aligned}\label{spec2}
		Y &=\max \{X^{\prime }\beta (U)+R\alpha _{1}(U)+R^{2}\alpha
		_{2}(U)+m(V,U),0\}, \\
		R &=Z^{\prime }\gamma +\sigma (Z)F_{\epsilon ^{\ast }}^{-1}(V), 
	\end{aligned}
\end{eqnarray}%
The extended specification enhances flexibility in capturing distributional
effects, allowing the elasticity to vary not only across the conditional
distribution of each commodity expenditure but also with the logarithm of
total expenditure.

\begin{figure}
	\centering
	\subfloat{\includegraphics[width=65mm]{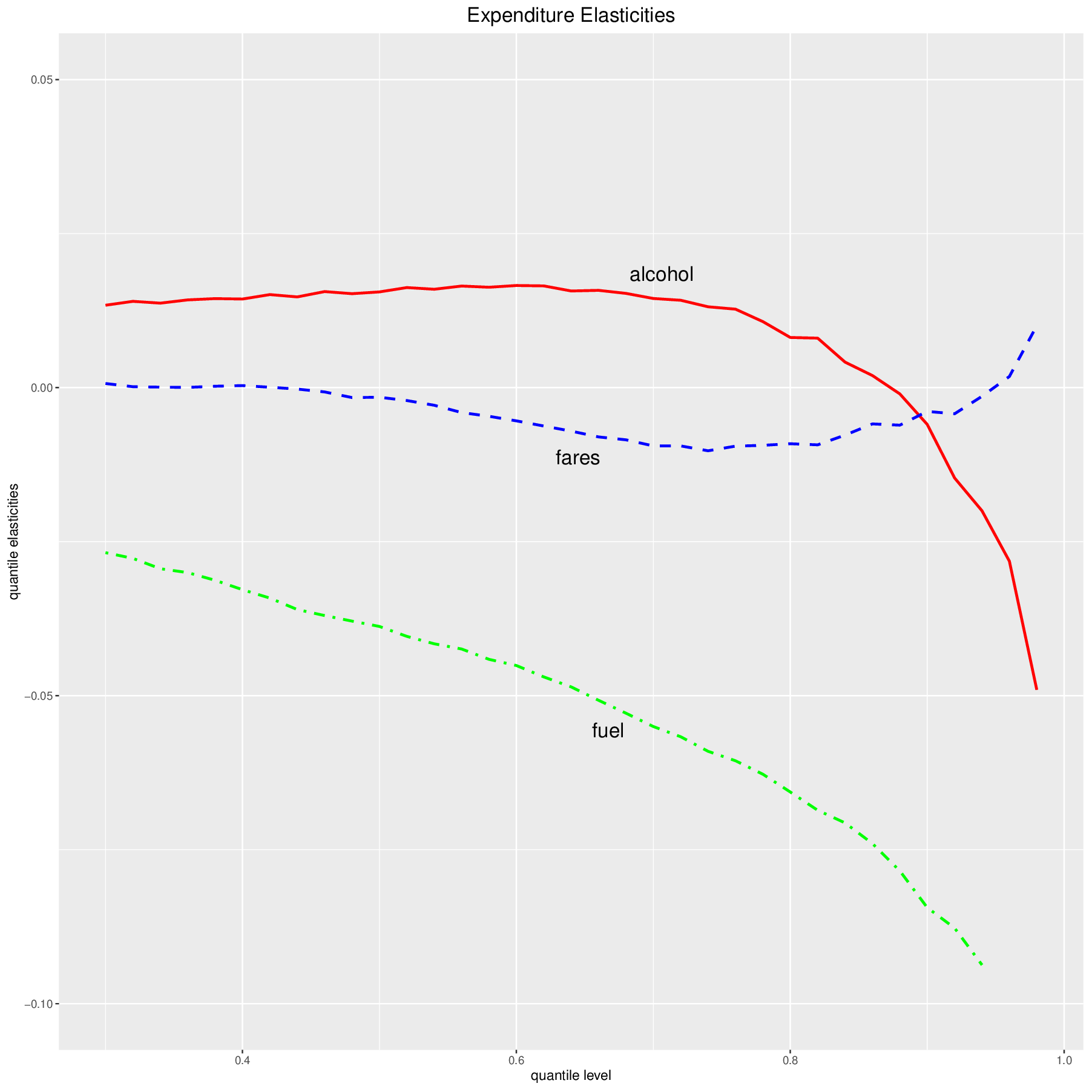}}
	\subfloat{\includegraphics[width=65mm]{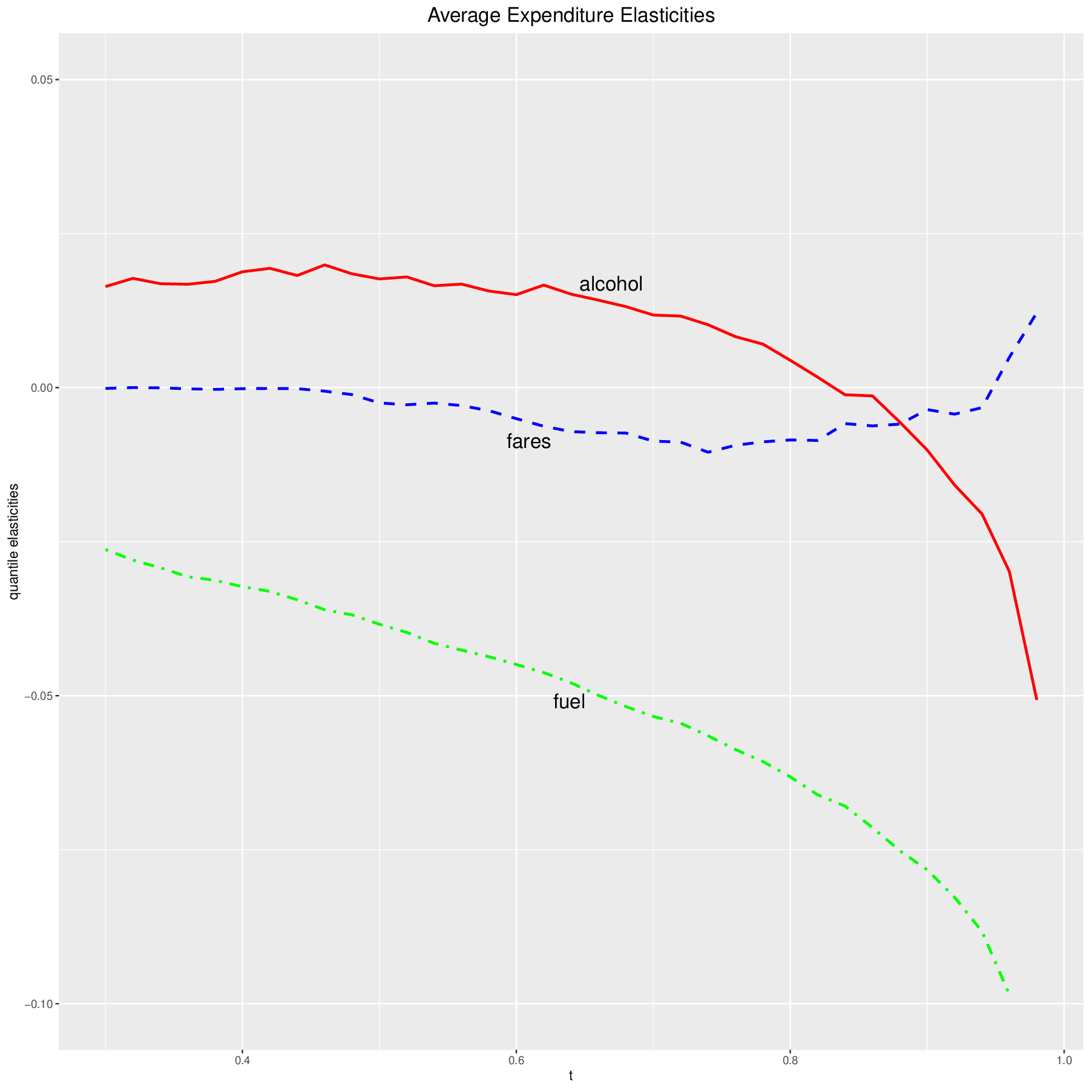}}
	\caption{Expenditure Elasticities}
	\label{fig1}
\end{figure}
The right panel of Table \ref{table3} presents the TNS estimates and their standard errors. For each
quantile level, the relationship between the budget share of alcohol and the
total expenditure exhibits an inverted-U pattern. That means households tend
to increase, then subsequently decrease, the budget share devoted to alcohol
as total expenditure rises. By contrast, the budget shares of fares and fuel
display a U-shaped pattern in relation to total expenditure---households
initially reduce, and later increase, the shares allocated to fares and fuel
as total expenditure grows.

Despite some nonlinear effects, $\left\vert \alpha _{2}(\tau )\right\vert $
are substantially smaller than $\left\vert \alpha _{1}(\tau )\right\vert $. The
right-hand panel of Figure 1 displays the average expenditure elasticities
across quantile levels, calculated as $\alpha _{1}(\tau )+2\alpha _{2}(\tau )%
\frac{1}{n}\sum_{i=1}^{n}R_{i}$. The average distributional effects closely
resemble those shown in the previous figure, suggesting that specification (%
\ref{spec1}) effectively captures the heterogeneous expenditure elasticities
of demand across households' relative ranks, averaged over logarithm of total
expenditure. Figure 2 illustrates the budget share explained by logarithm of
total expenditure for each quantile level, expressed as $r\alpha _{1}(\tau
)+r^{2}\alpha _{2}(\tau )$. The distinct curves confirm that households with
different relative ranks display different expenditure patterns. Further, the
near-parallel alignment of most curves indicates that specification (\ref%
{spec1}) is suitable to capture the quantile income elasticities of demand.

\section{Conclusion}

Distributional effects provide valuable insights into how economic factors
impact individuals differently across various segments in the outcome
distribution. Such effects are particularly important\ to reveal
heterogeneity in applied research. At the same time, censoring, endogeneity and heteroskedasticity are common in empirical data
analysis. Estimating the distributional effects in a censored quantile model
with endogeneity and heteroskedasticity is therefore an appealing and technically demanding task,
particularly when the endogenous regressor and the latent outcome have
conditional heteroskedastic errors. This paper employed the conditional CDF as
the control variable and combined it with the sequential analysis. The
proposed TNS method consistently and robustly estimate distributional effects. Moreover, the algorithm is computationally
straightforward, since it merely relies on a sequence of quantile series
regressions within appropriately selected subsamples. The robustness and
tractability of the TNS estimator make it attractive to a wide range of
applied researchers in economics. Based on the derived uniform
representation, this paper also proposes a weighted bootstrap procedure for
TNS estimation.

Panel data analysis has received a great deal of attention in recent years.
\citet{gu2019panel} developed an estimator for panel data quantile
models with group fixed effects. However, their method fails in the presence
of censored data. The estimation of censored panel data models incorporating
endogeneity remains an important and challenging topic, which we leave for
future research.

\bibliographystyle{chicago}
\bibliography{ref}

\appendix
\begin{center}
{\huge \bf Appendix}
\end{center}

\setcounter{table}{0} \renewcommand{\thetable}{A.\arabic{table}} 
\setcounter{section}{0} \renewcommand{\thesection}{A.\arabic{section}} 
\setcounter{figure}{0} \renewcommand{\thefigure}{A.\arabic{figure}}
\setcounter{remark}{0} \renewcommand{\theremark}{A.\arabic{remark}}
\setcounter{lemma}{0}
\renewcommand{\thelemma}{A.\arabic{lemma}}
\setcounter{equation}{0}
\renewcommand{\theequation}{A.\arabic{equation}}
\onehalfspacing

\section{Proofs}
Throughout this appendix, $\mathcal{S}^{\dim (\theta )-1}$ signifies the
unit sphere in $\mathbb{R}^{\dim (\theta )}$, i.e., $\mathcal{S}^{\dim
	(\theta )-1}=\{\mathbf{\iota }\in \mathbb{R}^{\dim (\theta )}:\left\Vert 
\mathbf{\iota }\right\Vert =1\}$, where $\dim (\theta )$ denotes the
dimension of $\theta $. Denote $\pi _{n}=\sqrt{\frac{J\ln n}{n}}$.
\begin{proof}[Proof of Theorem \ref{Thm1}]
As $V$ given $Z=z$ is continuously distributed and the conditional CDF is strictly increasing by Assumption ID1(i), $V$ given $Z=z$ can be normalized to have a standard uniform distribution. Then, we have
\begin{eqnarray*}
  v =P(V\leq v|Z=z)=P(m_{R}(Z,V)\leq m_{R}(z,v)|Z=z)=F_{R|Z}(r|z),
\end{eqnarray*}%
where the second equality holds by the strict monotonicity of $m_{R}(z,v)$ in $v$. Up to now, $V$ is identified as $F_{R|Z}(R|Z)$. 

On the one hand, 
\begin{eqnarray*}
	Q_{Y|R,Z,D_{\tau }(\eta )=1}(\tau |r,z(v,x,r))&=&Q_{Y^{\ast }|R,Z}(\tau |r,z(v,x,r))\\
                                           &=&Q_{Y^{\ast }|X,R,V}(\tau |x,r,v) \\
                                           &=&m_{Y^{\ast }}(x,r,v,\tau ),
\end{eqnarray*}%
where the first equality holds since $\tau$th quantile of $Y$ coincides with that of $Y^{\ast}$ when $D_{\tau }(\eta )=1$; the second equality and final equality hold
by Assumption ID(ii).

On the other hand, as $V=F_{R|Z}(R|Z)$, for any $V$, there exists
$Z=z(V,R)$ such that $V=F_{R|Z}(R|z(V,R))$. Under Assumption ID1(iii)
\begin{equation*}
	\nabla _{r}z(v,r)=\nabla _{z}Q_{R|Z}(v|z)[\nabla _{z}Q_{R|Z}(v|z)^{\prime
	}\nabla _{z}Q_{R|Z}(v|z)]^{-1}.
\end{equation*}%
By the chain rule, we obtain%
\begin{eqnarray*}
	\nabla _{r}Q_{Y^{\ast }|X,R,V}(\tau |x,r,v) &=&\nabla _{r}Q_{Y|R,Z,D_{\tau }(\eta
		)=1}(\tau |r,z)+\nabla _{z}Q_{R|Z}(v|z)[\nabla _{z}Q_{R|Z}(v|z)^{\prime
	}\nabla _{z}Q_{R|Z}(v|z)]^{-1} \\
	&&\times \nabla _{z}Q_{Y|R,Z,D_{\tau }(\eta )=1}(\tau |r,z).
\end{eqnarray*}%
\end{proof}

\begin{proof}[Proof of Theorem \ref{Thm2}]
	
	The monotonicity condition in Assumption ID2(i) and exclusion
	condition in Assumption ID2(ii) guarantee the existence of $%
	Q_{R|Z}(v|z)=h(z)+\sigma(z)Q_{\varepsilon^{\ast}}(v)$ and $Q_{Y^{\ast }|X,R,V}(\tau |x,r,v)=x^{\prime }\beta (\tau
	)+r\alpha (\tau )+m(v,\tau )$. When $\tau \geq \tau _{0}$, the initial
	condition in Assumption ID2(iv) implies $D_{\tau }(\eta )=1$ $a.s.$%
	, then it is straightforward to identify $(\beta (\tau ),\alpha (\tau ))$ under the
	full rank condition in Assumption ID2(iii).
	
	Next, consider the quantile level $\tau _{1}<\tau _{0}$ where the censoring
	comes into play. If $\tilde{\eta}>\left\vert \eta-\sup_{x,r,v}\left\vert Q_{Y^{\ast }|X,R,V}
(\tau _{1}|x,r,v)-Q_{Y^{\ast}|X,R,V}(\tau _{0}|x,r,v)\right\vert \right\vert $, then
    $D_{\tau _{1}}(\eta )\geq
	D_{\tau _{1}}(\tau _{0},\tilde{\eta})$ $a.s.$ for any $\eta \geq 0$.The continuity condition in Assumption ID2(v) guarantees existence of $D_{\tau _{1}}(\tau _{0},\tilde{\eta})=1$. Given the fact that $D_{\tau _{1}}(\tau _{0},\tilde{\eta})=1$ implies $D_{\tau_{1}}(\eta )=1$, we have
	\begin{equation*}
		Q_{Y|X,R,V,D_{\tau _{1}}(\tau _{0},\tilde{\eta})=1}(\tau
		_{1}|X,R,V)=X^{\prime }\beta (\tau _{1})+R\alpha (\tau _{1})+m(V,\tau _{1}).
	\end{equation*}%
	since 	
	\begin{eqnarray*}
		&&P(Y\leq X^{\prime }\beta (\tau _{1})+R\alpha (\tau _{1})+m(V,\tau _{1})|X,R,V,Q_{Y^{\ast }|X,R,V}(\tau_{0}|X,R,V)>\tilde{\eta})\\
		&=&P(Y\leq X^{\prime }\beta (\tau _{1})+R\alpha (\tau _{1})+m(V,\tau _{1})|X,R,V,Q_{Y^{\ast }|X,R,V}(\tau_{1}|X,R,V)> \eta)\\
		&=&\tau_{1}.
	\end{eqnarray*}%
	If there exists $(b,a,\tilde{m}(v,\tau ))$ such that 
	\begin{equation*}
		Q_{Y|X,R,V,D_{\tau _{1}}(\tau _{0},\tilde{\eta})=1}(\tau |X,R,V)=X^{\prime
		}b+Ra+\tilde{m}(V,\tau _{1}),
	\end{equation*}%
	then 
	\begin{equation}
		X^{\prime }b+Ra+\tilde{m}(V,\tau _{1})=X^{\prime }\beta (\tau _{1})+R\alpha
		(\tau _{1})+m(V,\tau _{1}).  \tag{A1}\label{A1} 
	\end{equation}%
	Define an operator $A^{\ast }(\tau _{1})=\frac{E\left[ \left. D_{\tau
			_{1}}(\tau _{0},\tilde{\eta})f_{Y^{\ast }|X,R,V}(Q_{Y^{\ast }|X,R,V}(\tau
		_{1}|X,R,V))A\right\vert V\right] }{E\left[ \left. D_{\tau _{1}}(\tau _{0},%
		\tilde{\eta})f_{Y^{\ast }|X,R,V}(Q_{Y^{\ast }|X,R,V}(\tau
		_{1}|X,R,V))\right\vert V\right] }$ for any variable $A$. Apply the operator
	to (\ref{A1}), then 
	\begin{equation}
		X^{\ast }(\tau _{1})^{\prime }b+R^{\ast }(\tau _{1})a+\tilde{m}(V,\tau
		_{1})=X^{\ast }(\tau _{1})^{\prime }\beta (\tau _{1})+R^{\ast }(\tau
		_{1})\alpha (\tau _{1})+m(V,\tau _{1}).  \tag{A2}\label{A2}
	\end{equation}%
	Substracting (\ref{A2}) from (\ref{A1}) yields 
	\begin{equation}
		(X-X^{\ast }(\tau _{1}))^{\prime }b+(R-R^{\ast }(\tau _{1}))a=(X-X^{\ast
		}(\tau _{1}))^{\prime }\beta (\tau _{1})+(R-R^{\ast }(\tau _{1}))\alpha
		(\tau _{1}).  \tag{A3}\label{A3}
	\end{equation}%
	By Assumption ID2(iii) and (v), $E[D_{\tau _{1}}(\eta )f_{Y^{\ast
		}|X,R,V}(Q_{Y^{\ast }|X,R,V}(\tau _{1}|X,R,V))P(\tau _{1})P(\tau
	_{1})^{\prime }]$ is positive definite. Subsequently, we shall identify $%
	(\beta (\tau _{1}),\alpha (\tau _{1}))$.
	
	Following the same line, $(\beta (\tau _{2}),\alpha (\tau _{2})),\cdots
	,(\beta (\tau _{L}),\alpha (\tau _{L}))$ are identified in sequence. For any 
	$\tau \in \lbrack \tau _{L},\tau _{0})$, there exists $l$ such that $\tau
	\in \lbrack \tau _{l},\tau _{l-1})$. Then, we identify $(\beta (\tau
	),\alpha (\tau ))$ using the above argument by replacing $\tau _{0}$ and $%
	\tau _{1}$ with $\tau _{l-1}$ and $\tau $.
\end{proof}

\begin{proof}[Proof of Theorem \ref{Thm3}]
	
	\textit{Step 1. Uniform consistency of }$\hat{\theta}(\tau )$ \textit{for} $%
	\tau \geq \tau _{0}$.
	
	With slight abuse of notations, let $S_{n}(\theta ,\tau )=\frac{1}{n}%
	\sum_{i=1}^{n}T_{i}\rho _{\tau }(Y_{i}-\hat{W}_{i}^{\prime }\theta )$ with $%
	\hat{W}_{i}=(X_{i}^{\prime },R_{i},P_{J}(\hat{V}_{i}{\small )}^{\prime
	})^{\prime }$ and $\theta =(b^{\prime },a,d^{\prime })^{\prime }$. Decompose 
	\begin{equation*}
		S_{n}(\theta ,\tau )-S_{n}(\theta (\tau ),\tau )=S(\theta ,\tau )-S(\theta
		(\tau ),\tau )+\bar{S}_{n}(\theta ,\theta (\tau ),\tau )
	\end{equation*}%
	where $S(\theta ,\tau )=E[S_{n}(\theta ,\tau )]$ and $\bar{S}_{n}(\theta
	,\theta (\tau ),\tau )=S_{n}(\theta ,\tau )-S_{n}(\theta (\tau ),\tau
	)-S(\theta ,\tau )+S(\theta (\tau ),\tau )$.
	
	On the one hand, define a class of functions as 
	\begin{equation*}
		\mathcal{F}_{\bar{s}}=\{f_{\bar{s}}:(T,W,Y)\longmapsto T\rho _{\tau
		}(Y-W^{\prime }\theta )-T\rho _{\tau }(Y-W^{\prime }\theta (\tau )),\theta
		\in \mathcal{\Theta },\tau \in \lbrack \tau _{0},\tau _{U}]\}
	\end{equation*}%
	where $\mathcal{\Theta =B\times A\times D}$. Under Assumption A1 and
	Assumption A2, there exists a function $M(Y,W)$ such that%
	\begin{equation*}
		\left\vert T\rho _{\tau }(Y-W^{\prime }\theta )-T\rho _{\tau }(Y-\hat{W}%
		^{\prime }\theta (\tau ))\right\vert \leq TM(Y,W)\left\Vert \theta -\theta
		(\tau )\right\Vert ,
	\end{equation*}%
	where the envelop of $\mathcal{F}_{\bar{s}}$ satisfies $\bar{F}_{\bar{s}%
	}\lesssim \zeta _{0}$ and $E\left[ TM(Y,W)^{2}\right] \lesssim J$. By Lemma
	26 of \citet{belloni2019conditional}, the covering number of $\mathcal{F}_{\bar{s}}$
	is 
	\begin{equation*}
		\mathcal{N(}\left\Vert f_{\bar{s}}\right\Vert _{2},L_{2}(P),\bar{F}_{\bar{s}%
		}\lambda \mathcal{)\lesssim }(c_{\mathcal{N}}/\lambda )^{J}
	\end{equation*}%
	for a constant $c_{\mathcal{N}}$ and $0<\lambda <1$. Taking $x=3\ln n$ of
	Lemma S1, we have 
	\begin{equation*}
		P\left( \sup_{\theta \in \Theta }\left\vert \bar{S}_{n}(\theta ,\theta (\tau
		),\tau )\right\vert >c_{\bar{s}}\pi _{n}\right) \leq \frac{1}{n^{3}}
	\end{equation*}%
	for a constant $c_{\bar{s}}>0$.
	
	On the other hand, let $H_{n}(\theta (\tau ),\tau )=\frac{1}{n}%
	\sum_{i=1}^{n}T_{i}\varphi _{\tau }(Y_{i}-\hat{W}_{i}^{\prime }\theta (\tau
	))\hat{W}_{i}$ and $H(\theta (\tau ),\tau )=E\left[ H_{n}(\theta (\tau
	),\tau )\right] $, where $\varphi _{\tau }(u)=1\{u<0\}-\tau $. We have
	$E[T(F_{Y|X,R,V}(\hat{W}^{\prime }\theta (\tau ))-\tau )\hat{W}%
	]\lesssim J^{1/2}(J^{-\kappa }+n^{-1/2})$\footnote{%
		If $\hat{v}(r,x)$ is a parametric estimator, then $\pi _{1n}=n^{-1/2}$; if $%
		\hat{v}(r,x)$ is an undersmoothing nonparametric estimator, then $%
		E[T(F_{Y|X,R,V}(\hat{W}^{\prime }\theta (\tau ))-\tau )\hat{W}]\lesssim
		J^{1/2}(J^{-\kappa }+\pi _{1n}^{2})$} under Assumption A4-A6. Then,
	\begin{eqnarray*}
		&&S(\hat{\theta}(\tau ),\tau )-S(\theta (\tau ),\tau ) \\
		&=&H(\theta (\tau ),\tau )\left( \hat{\theta}(\tau )-\theta (\tau )\right)
		+\left( \hat{\theta}(\tau )-\theta (\tau )\right) ^{\prime }E\left[
		Tf_{Y^{\ast }|X.R,V}(\hat{W}^{\prime }\bar{\theta})\hat{W}\hat{W}^{\prime }%
		\right] \left( \hat{\theta}(\tau )-\theta (\tau )\right) \\
		&\geq &-c_{\underline{s}}\left( J^{1/2-\kappa }+n^{-1/2}\right) \left\Vert 
		\hat{\theta}(\tau )-\theta (\tau )\right\Vert +\frac{1}{2}\underline{c}%
		_{\Omega }\left\Vert \hat{\theta}(\tau )-\theta (\tau )\right\Vert ^{2}
	\end{eqnarray*}%
	uniformly in $\tau \in \lbrack \tau _{L},\tau _{U}]$, where the final line
	holds by $E\left[ Tf_{Y^{\ast }|X,R,V}(\hat{W}^{\prime }\theta (\tau ))%
	\hat{W}\hat{W}^{\prime }\right] \\
	>\frac{1}{2}\underline{c}_{\Omega }$ and
	the fact that $\left\Vert \hat{W}-W\right\Vert \overset{p}{%
		\rightarrow }0$.
	
	As $\hat{\theta}(\tau )$ minimizes $S_{n}(\theta ,\tau )$ over $\theta \in 
	\mathcal{\Theta }$, we have for all $\tau $%
	\begin{eqnarray*}
		0 &\geq &S_{n}(\hat{\theta}(\tau ),\tau )-S_{n}(\theta (\tau ),\tau ) \\
		&\geq &-c_{\bar{s}}\pi _{n}-c_{\underline{s}}J^{1/2}(J^{-\kappa
		}+n^{-1/2})\left\Vert \hat{\theta}(\tau )-\theta (\tau )\right\Vert +\frac{1%
		}{2}\underline{c}_{\Omega }\left\Vert \hat{\theta}(\tau )-\theta (\tau
		)\right\Vert ^{2}
	\end{eqnarray*}%
	with probability no less than $1-\frac{1}{n^{3}}$. Subsequently, $\left\Vert 
	\hat{\theta}(\tau )-\theta (\tau )\right\Vert =O_{p}(\tilde{\pi}_{n})$ where 
	$\tilde{\pi}_{n}=\left( \frac{J\ln n}{n}\right) ^{1/4}+J^{1/2}(J^{-\kappa
	}+n^{-1/2})$. Thus, the uniform consistency is obtained under Assumption
	A7.\medskip
	
	\textit{Step 2. The uniform convergence rate of }$\hat{\theta}(\tau )$%
	\textit{\ for }$\tau >\tau _{0}$.
	
	Let 
	\begin{eqnarray*}
		R_{n}(\theta ,\theta (\tau ),\tau ) &=&\frac{1}{\left\Vert \theta -\theta
			(\tau )\right\Vert }\left[ S_{n}(\theta ,\tau )-S_{n}(\theta (\tau ),\tau
		)-H_{n}(\theta (\tau ),\tau )^{\prime }(\theta -\theta (\tau ))\right] \\
		&\equiv &\frac{1}{n}\sum_{i=1}^{n}r_{i}(\theta ,\theta (\tau ),\tau ).
	\end{eqnarray*}%
	Define a class of functions as 
	\begin{eqnarray*}
		\mathcal{F}_{r}=\{f_{r}:(T,W,Y)\longmapsto r(\theta ,\theta (\tau ),\tau
		),\left\Vert \theta -\theta (\tau )\right\Vert \lesssim \tilde{\pi}_{n},\tau
		\in \lbrack \tau _{0},\tau _{U}]\}.
	\end{eqnarray*}%
	The envelope of $\mathcal{F}_{r}$ is bounded by $\zeta _{0}$ and $E\left(
	f_{r}^{2}\right) \lesssim J$. Taking $x=3\ln n$ in Lemma S1
	together with $E[f_{r}^{2}(T,\hat{W},Y)]\lesssim J$ yields 
	\begin{eqnarray*}
		\sup_{\tau \in \lbrack \tau _{L},\tau _{U}]}\sup_{\left\Vert \theta -\theta
			(\tau )\right\Vert \lesssim \tilde{\pi}_{n}}\left\vert R_{n}(\theta ,\theta
		(\tau ),\tau )-E(R_{n}(\theta ,\theta (\tau ),\tau ))\right\vert \lesssim
		\pi _{n}
	\end{eqnarray*}%
	with the probability no less than $1-\frac{1}{n^{3}}$. As $E\left[
	\left\Vert T(1\{Y-\hat{W}^{\prime }\theta (\tau )<0\}-\tau )\hat{W}%
	\right\Vert ^{2}\right] \\
	\lesssim J$, the Bernstein's inequality implies that 
	\begin{equation*}
		\left\Vert H_{n}(\theta (\tau ),\tau )-H(\theta (\tau ),\tau )\right\Vert
		\lesssim \pi _{n}
	\end{equation*}%
	with the probability no less than $1-\frac{1}{n^{3}}$.
	
	As $\hat{\theta}(\tau )$ minimizes $S_{n}(\theta ,\tau )$, then%
	\begin{eqnarray*}
		0 &\geq &S_{n}(\hat{\theta}(\tau ),\tau )-S_{n}(\theta (\tau ),\tau ) \\
		&=&\left[ R_{n}(\hat{\theta}(\tau ),\theta (\tau ),\tau )-E(R_{n}(\hat{\theta%
		}(\tau ),\theta (\tau ),\tau ))\right] \left\Vert \hat{\theta}(\tau )-\theta
		(\tau )\right\Vert \\
		&&+\left[ H_{n}(\theta (\tau ),\tau )-H(\theta (\tau ),\tau )\right]
		^{\prime }\left( \hat{\theta}(\tau )-\theta (\tau )\right) +S(\hat{\theta}%
		(\tau ),\tau )-S(\theta (\tau ),\tau ) \\
		&\geq &-\left[\left\vert R_{n}(\hat{\theta}(\tau ),\theta (\tau ),\tau )-E(R_{n}(%
		\hat{\theta}(\tau ),\theta (\tau ),\tau ))\right\vert +\left\Vert
		H_{n}(\theta (\tau ),\tau )-H(\theta (\tau ),\tau )\right\Vert \right. \\
		&&\left. +c_{\underline{s}}J^{1/2}\left( J^{-\kappa }+n^{-1/2}\right) \right]
		\left\Vert \hat{\theta}(\tau )-\theta (\tau )\right\Vert +\frac{1}{2}%
		\underline{c}_{\Omega }\left\Vert \hat{\theta}(\tau )-\theta (\tau
		)\right\Vert ^{2}.
	\end{eqnarray*}%
	Consequently, there exists a constant $c_{\theta }>0$,%
	\begin{eqnarray*}
		&&P\left( \sup_{\tau \in \lbrack \tau _{0},\tau _{U}]}\left\Vert \hat{\theta}%
		(\tau )-\theta (\tau )\right\Vert >c_{\theta }\pi _{n}\right) \\
		&\leq &P\left( \sup_{\tau \in \lbrack \tau _{0},\tau _{U}]}\sup_{\left\Vert
			\theta -\theta (\tau )\right\Vert \lesssim \tilde{\pi}_{n}}\left\vert
		R_{n}(\theta ,\theta (\tau ),\tau )-E(R_{n}(\theta ,\theta (\tau ),\tau
		))\right\vert >\frac{c_{\theta }}{2}\pi _{n}\right) \\
		&&\left. +P\left( \sup_{\tau \in \lbrack \tau _{0},\tau _{U}]}\left\Vert
		H_{n}(\theta (\tau ),\tau )-H(\theta (\tau ),\tau )\right\Vert >\frac{%
			c_{\theta }}{2}\pi _{n}\right) \leq \frac{2}{n}\right.
	\end{eqnarray*}%
	under Assumption A7.
	
	\textit{Step 3. The uniform convergence rate of }$\hat{\theta}(\tau _{l})$%
	\textit{\ for }$l=1,\cdots ,L$.
	
	For any $\tilde{\theta}=(\tilde{b}^{\prime },\tilde{a},\tilde{d}^{\prime
	})^{\prime }$ such that $\left\Vert \tilde{\theta}-\theta (\tau
	_{l-1})\right\Vert \lesssim \pi _{n}$, we have%
	\begin{eqnarray*}
		&&\left. 1\{\hat{W}^{\prime }\tilde{\theta}>\tilde{\eta}_{n}\}\right. \\
		&=&1\left\{ X^{\prime }\beta (\tau _{l-1})+R\alpha (\tau _{l-1})+m(V,\tau
		_{l-1})>\tilde{\eta}_{n}-X^{\prime }(\tilde{b}-\beta (\tau _{l-1}))-R(\tilde{%
			a}-\alpha (\tau _{l-1}))\right. \\
		&&\left. -P_{J}(\hat{V})^{\prime }(\tilde{d}-\delta (\tau _{l-1}))-r_{J}(%
		\hat{V},\tau _{l-1})-m_{1}^{\prime }(V,\tau _{l-1})(\hat{V}-V)\right\} \\
		&\leq &1\left\{ X^{\prime }\beta (\tau _{l})+R\alpha (\tau _{l})+m(V,\tau
		_{l})>\tilde{\eta}_{n}-c_{\pi }\left( \zeta _{0}\pi _{n}+J^{-\kappa
		}+n^{-1/2}\right) -c_{L}\frac{1}{L_{n}}\right\} \\
		&\leq &1\left\{ X^{\prime }\beta (\tau _{l})+R\alpha (\tau _{l})+m(V,\tau
		_{l})>\eta \right\}
	\end{eqnarray*}%
	for constants $c_{\pi }>0$ and $c_{L}>0$, where $m_{1}^{\prime }(v,\tau )$
	is the first order derivative of $m(v,\tau )$ with respect to $v$. under
	Assumption A3. The above relationship illustrates that $\hat{D}_{\tau
		_{l}}=1 $ implies $D_{\tau _{l}}(\eta )=1$ with the probability approaching $%
	1$.
	
	Following the same line of Step 1, one can infer $\left\Vert \hat{\theta}%
	(\tau _{l})-\theta (\tau _{l})\right\Vert \lesssim \tilde{\pi}_{n}$ with the
	probability no less than $1-\frac{1}{n^{3}}$. Here $\left\Vert \hat{\theta}%
	(\tau _{l-1})-\theta (\tau _{l-1})\right\Vert \lesssim \pi _{n}$ is treated
	as given, since it has been established in the previous stage. For example,
	at $\tau _{1}$, $\left\Vert \hat{\theta}(\tau _{0})-\theta (\tau
	_{0})\right\Vert \lesssim \pi _{n}$ was proven by Step 1-2.
	
	Define two classes of functions:%
	\begin{eqnarray*}
		\mathcal{F}_{r_{l}} &=&\{f_{r_{l}}:(T,W,Y)\longmapsto \frac{1}{\left\Vert
			\theta -\theta (\tau _{l})\right\Vert }T1\{W^{\prime }\tilde{\theta}>\tilde{%
			\eta}_{n}\}[\rho _{\tau _{l}}(Y-W^{\prime }\theta )-\rho _{\tau
			_{l}}(Y-W^{\prime }\theta (\tau _{l})) \\
		&&\left. -\varphi _{\tau _{l}}(Y-W^{\prime }\theta (\tau _{l}))W^{\prime
		}(\theta -\theta (\tau _{l}))],\left\Vert \theta -\theta (\tau
		_{l})\right\Vert \lesssim \tilde{\pi}_{n},\left\Vert \tilde{\theta}-\theta
		(\tau _{l-1})\right\Vert \lesssim \pi _{n}\right. \}
	\end{eqnarray*}%
	and 
	\begin{eqnarray*}
		\mathcal{F}_{h_{l}}=\{f_{h_{l}}:(T,W,Y)\longmapsto T1\{W^{\prime }\tilde{%
			\theta}>\tilde{\eta}_{n}\}\varphi _{\tau _{l}}(Y-W^{\prime }\theta (\tau
		_{l}))W^{\prime }\mathbf{\iota },\left\Vert \tilde{\theta}-\theta (\tau
		_{l-1})\right\Vert \lesssim \tilde{\pi}_{n}\}.
	\end{eqnarray*}%
	In addition, let
	
	\begin{eqnarray*}
		H_{n}(\theta (\tau _{l}),\tilde{\theta},\tau _{l},\tilde{\eta}_{n})=\frac{1}{%
			n}\sum_{i=1}^{n}T_{i}1\{\hat{W}^{\prime }\tilde{\theta}>\tilde{\eta}%
		_{n}\}\varphi _{\tau _{l}}(Y_{i}-\hat{W}_{i}^{\prime }\theta (\tau _{l}))%
		\hat{W}_{i}
	\end{eqnarray*}%
	and 
	\begin{eqnarray*}
		R_{n}(\theta ,\theta (\tau _{l}),\tilde{\theta},\tau _{l},\tilde{\eta}_{n})
		&=& \frac{1}{\left\Vert \theta -\theta (\tau _{l})\right\Vert }[S_{n}(\theta ,%
		\tilde{\theta},\tau _{l},\tilde{\eta}_{n})-S_{n}(\theta (\tau _{l}),\tilde{%
			\theta},\tau _{l},\tilde{\eta}_{n})\\
		&&-H_{n}(\theta (\tau _{l}),\tilde{\theta},\tau _{l},\tilde{\eta}_{n})(\theta -\theta (\tau _{l}))]
	\end{eqnarray*}%
	where $S_{n}(\theta ,\tilde{\theta},\tau _{l},\tilde{\eta}_{n})$ is the
	shorthand notation of $S_{n}(b,a,d,\tilde{\theta},\tau _{l},\tilde{\eta}%
	_{n}) $.
	
	Since $E[f_{r_{l}}^{2}(T,W,Y)]\lesssim J$ and $E[f_{h_{l}}^{2}(T,W,Y)]%
	\lesssim J$, applying Lemma S1 with $x=3\ln n$ yields 
	\begin{equation*}
		\sup_{\left\Vert \tilde{\theta}-\theta (\tau _{l-1})\right\Vert \lesssim \pi
			_{n}}\sup_{\left\Vert \theta -\theta (\tau _{l})\right\Vert \lesssim \tilde{%
				\pi}_{n}}\left\vert R_{n}(\theta ,\theta (\tau _{l}),\tilde{\theta},\tau
		_{l},\tilde{\eta}_{n})-E[R_{n}(\theta ,\theta (\tau ),\tilde{\theta},\tau ,%
		\tilde{\eta}_{n})]\right\vert \lesssim \pi _{n}
	\end{equation*}%
	with the probability no less than $1-\frac{1}{n^{3}}$ and 
	\begin{equation*}
		\sup_{\left\Vert \tilde{\theta}-\theta (\tau _{l-1})\right\Vert \lesssim \pi
			_{n}}\left\Vert H_{n}(\theta (\tau _{l}),\tilde{\theta},\tau _{l},\tilde{\eta%
		}_{n})-E[H_{n}(\theta (\tau _{l}),\tilde{\theta},\tau _{l},\tilde{\eta}%
		_{n})]\right\Vert \lesssim \pi _{n}
	\end{equation*}%
	with the probability no less than $1-\frac{1}{n^{3}}$.
	
	As $\hat{\theta}(\tau _{l})$ minimizes $S_{n}(\theta ,\tilde{\theta},\tau
	_{l},\tilde{\eta}_{n})$, then 
	\begin{eqnarray*}
		0 &\geq &-[\left\vert R_{n}(\theta ,\theta (\tau _{l}),\tilde{\theta},\tau
		_{l},\tilde{\eta}_{n})-E\left( R_{n}(\theta ,\theta (\tau _{l}),\tilde{\theta%
		},\tau _{l},\tilde{\eta}_{n})\right) \right\vert \\
		&&+\left\Vert H_{n}(\theta (\tau _{l}),\tilde{\theta},\tau _{l},\tilde{\eta}%
		_{n})-E\left( H_{n}(\theta (\tau _{l}),\tilde{\theta},\tau _{l},\tilde{\eta}%
		_{n})\right) \right\Vert \\
		&&\left. +c_{\underline{s}}J^{1/2}\left( J^{-\kappa }+n^{-1/2}\right) \right]
		\times \left\Vert \hat{\theta}(\tau _{l})-\theta (\tau _{l})\right\Vert +%
		\frac{1}{2}\underline{c}_{\Omega }\left\Vert \hat{\theta}(\tau _{l})-\theta
		(\tau _{l})\right\Vert ^{2}.
	\end{eqnarray*}%
	It follows that%
	\begin{equation*}
		P\left( \left. \left\Vert \hat{\theta}(\tau _{l})-\theta (\tau
		_{l})\right\Vert >c_{\theta }\pi _{n}\right\vert \Omega _{l-1}\right) \leq 
		\frac{2}{n^{3}}
	\end{equation*}%
	where $\Omega _{l-1}$ signifies the event $\left\Vert \hat{\theta}(\tau
	_{l-1})-\theta (\tau _{l-1})\right\Vert \lesssim \pi _{n}$. Consequently,
	
	\begin{equation*}
		P\left( \max_{l=1,\cdots ,L}\left\Vert \hat{\theta}(\tau _{l})-\theta (\tau
		_{l})\right\Vert >c_{\theta }\pi _{n}\right) \leq \sum_{l=1}^{L}P\left(
		\left. \left\Vert \hat{\theta}(\tau _{l})-\theta (\tau _{l})\right\Vert
		>c_{\theta }\pi _{n}\right\vert \Omega _{l-1}\right) \leq \frac{2L}{n^{3}}%
		\rightarrow 0.
	\end{equation*}
	
	\textit{Step 4. The uniform convergence rate of }$\hat{\theta}(\tau )$%
	\textit{\ and }$\hat{m}(v,\tau )$\textit{\ for }$\tau \in \lbrack \tau
	_{L},\tau _{0}]$.
	
	For any $\tau \in \lbrack \tau _{L},\tau _{0}]$, there exists $l$ such that $%
	\tau \in \lbrack \tau _{l},\tau _{l-1}]$. Similar to Step 1, we can show $%
	\left\Vert \hat{\theta}(\tau )-\theta (\tau )\right\Vert \lesssim \tilde{\pi}%
	_{n}$ with the probability no less than $1-\frac{1}{n^{3}}$. With slight
	abuse of notations, define two classes of functions: 
	\begin{eqnarray*}
		\mathcal{F}_{r} &=&\{f_{r}:(T,W,Y)\longmapsto \frac{1}{\left\Vert \theta
			-\theta (\tau )\right\Vert }T1\{W^{\prime }\tilde{\theta}>\tilde{\eta}%
		_{n}\}[\rho _{\tau }(Y-W^{\prime }\theta )-\rho _{\tau }(Y-W^{\prime }\theta
		(\tau )) -\\
		&&\left. \varphi _{\tau }(Y-W^{\prime }\theta (\tau ))W^{\prime }\mathbf{(}%
		\theta -\theta (\tau )\mathbf{)}],\left\Vert \theta -\theta (\tau
		_{l})\right\Vert \lesssim \tilde{\pi}_{n},\left\Vert \tilde{\theta}-\theta
		(\tau _{l-1})\right\Vert \lesssim \pi _{n},\tau \in \lbrack \tau _{l},\tau
		_{l-1}]\right. \}
	\end{eqnarray*}%
	and 
	\begin{eqnarray*}
		\mathcal{F}_{h}&=&\{f_{h}:(T,W,Y)\longmapsto T1\{W^{\prime }\tilde{\theta}>%
		\tilde{\eta}_{n}\}\varphi _{\tau }(Y-W^{\prime }\theta (\tau ))W^{\prime }%
		\mathbf{\iota },\left\Vert \tilde{\theta}-\theta (\tau _{l-1})\right\Vert
		\lesssim \pi _{n},\\
		&&\tau \in \lbrack \tau _{l},\tau _{l-1}]\}.
	\end{eqnarray*}%
	Similar to $H_{n}(\theta (\tau _{l}),\tilde{\theta},\tau _{l},\tilde{\eta}%
	_{n})$ and $R_{n}(\theta ,\theta (\tau _{l}),\tilde{\theta},\tau _{l},\tilde{%
		\eta}_{n})$, we shall define 
	\begin{equation*}
		H_{n}(\theta (\tau ),\tilde{\theta},\tau ,\tilde{\eta}_{n})=\frac{1}{n}%
		\sum_{i=1}^{n}T_{i}1\{\hat{W}_{i}^{\prime }\tilde{\theta}>\tilde{\eta}%
		_{n}\}\varphi _{\tau }(Y_{i}-\hat{W}_{i}^{\prime }\theta (\tau ))\hat{W}_{i}
	\end{equation*}%
	and%
	\begin{eqnarray*}
		R_{n}(\theta ,\theta (\tau ),\tilde{\theta},\tau ,\tilde{\eta}_{n})
		&=&\frac{1}{\left\Vert \theta -\theta (\tau )\right\Vert }[S_{n}(\theta ,\tilde{\theta}%
		,\tau ,\tilde{\eta}_{n})-S_{n}(\theta (\tau ),\tilde{\theta},\tau ,\tilde{%
			\eta}_{n})-H_{n}(\theta (\tau ),\tilde{\theta},\tau ,\tilde{\eta}%
		_{n})^{\prime }\\
		&&\times(\theta -\theta (\tau ))].
	\end{eqnarray*}%
	As $E[f_{r}^{2}(T,\hat{W},Y)]\lesssim J$ and $E[f_{h}^{2}(T,\hat{W}%
	,Y)]\lesssim J$, applying Lemma S1 with $x=3\ln n$ yields%
	\begin{equation*}
		\sup_{\tau \in \lbrack \tau _{l},\tau _{l-1}]}\sup_{\left\Vert \tilde{\theta}%
			-\theta (\tau _{l-1})\right\Vert \lesssim \pi _{n}}\sup_{\left\Vert \theta
			-\theta (\tau )\right\Vert \lesssim \tilde{\pi}_{n}}\left\vert R_{n}(\theta
		,\theta (\tau ),\tilde{\theta},\tau ,\tilde{\eta}_{n})-E(R_{n}(\theta
		,\theta (\tau ),\tilde{\theta},\tau ,\tilde{\eta}_{n}))\right\vert
	\end{equation*}
	\begin{equation*}
		=\sup_{f_{r}\in \mathcal{F}_{r}}\left\vert \frac{1}{n}\sum_{i=1}^{n}\left[
		f_{r}(T_{i},\hat{W}_{i},Y_{i})-E(f_{r}(T_{i},\hat{W}_{i},Y_{i}))\right]
		\right\vert \lesssim \pi _{n}
	\end{equation*}%
	with the probability no less than $1-\frac{1}{n^{3}}$, and 
	\begin{eqnarray*}
		&&\sup_{\tau \in \lbrack \tau _{l},\tau _{l-1}]}\sup_{\left\Vert \tilde{%
				\theta}-\theta (\tau _{l-1})\right\Vert \lesssim \pi _{n}}\left\Vert
		H_{n}(\theta (\tau ),\tilde{\theta},\tau ,\tilde{\eta}_{n})-E[H_{n}(\theta
		(\tau ),\tilde{\theta},\tau ,\tilde{\eta}_{n})]\right\Vert \\
		&=&\sup_{f_{h}\in \mathcal{F}_{h}}\left\vert \frac{1}{n}\sum_{i=1}^{n}\left[
		f_{h}(T_{i},\hat{W}_{i},Y_{i})-E(f_{h}(T_{i},\hat{W}_{i},Y_{i}))\right]
		\right\vert \lesssim \pi _{n}
	\end{eqnarray*}%
	with the probability no less than $1-\frac{1}{n^{3}}$.
	
	Based on the fact that%
	\begin{eqnarray*}
		0 &\geq &-\left[\left\vert R_{n}(\hat{\theta}(\tau ),\theta (\tau ),\hat{\theta}%
		(\tau _{l-1}),\tau ,\tilde{\eta}_{n})-E\left( R_{n}(\hat{\theta}(\tau
		),\theta (\tau ),\hat{\theta}(\tau _{l-1}),\tau ,\tilde{\eta}_{n})\right)
		\right\vert  \right.\\
		&&+\left.\left\Vert H_{n}(\theta (\tau ),\tilde{\theta},\tau ,\tilde{\eta}%
		_{n})-E\left( H_{n}(\theta (\tau ),\tilde{\theta},\tau ,\tilde{\eta}%
		_{n})\right) \right\Vert +c_{\underline{s}}J^{1/2}\left( J^{-\kappa
		}+n^{-1/2}\right)\right] \\
		&&\times \left\Vert \hat{\theta}(\tau )-\theta (\tau )\right\Vert +\frac{1}{2%
		}\underline{c}_{\Omega }\left\Vert \hat{\theta}(\tau )-\theta (\tau
		)\right\Vert ^{2},
	\end{eqnarray*}%
	we obtain 
	\begin{equation*}
		P\left( \left. \sup_{\tau \in \lbrack \tau _{l},\tau _{l-1}]}\left\Vert \hat{%
			\theta}(\tau )-\theta (\tau )\right\Vert >c_{\theta }\pi _{n}\right\vert
		\Omega _{l-1}\right) \leq \frac{2}{n^{3}}
	\end{equation*}%
	for $l=1,\cdots ,L$. As a consequence, 
	\begin{eqnarray*}
		P\left( \sup_{\tau \in \lbrack \tau _{L},\tau _{0}]}\left\Vert \hat{\theta}%
		(\tau )-\theta (\tau )\right\Vert >c_{\theta }\pi _{n}\right) 
		&\leq&\sum_{l=1}^{L}P\left( \left. \sup_{\tau \in \lbrack \tau _{l},\tau
			_{l-1}]}\left\Vert \hat{\theta}(\tau )-\theta (\tau )\right\Vert >c_{\theta
		}\pi _{n}\right\vert \Omega _{l-1}\right) \\
		&\leq& \frac{2L}{n^{3}}=o(1),
	\end{eqnarray*}%
	and 
	\begin{eqnarray*}
		&&\sup_{\tau \in \lbrack \tau _{L},\tau _{0}]}\sup_{v\in \mathcal{\bar{V}}%
		}\left\vert \hat{m}(v,\tau )-m(v,\tau )\right\vert \\
		&\leq &\sup_{\tau \in \lbrack \tau _{L},\tau _{0}]}\sup_{v\in \mathcal{\bar{V%
		}}}\left\vert P_{J}(v)^{\prime }(\hat{\delta}(\tau )-\delta (\tau
		))\right\vert +\sup_{\tau \in \lbrack \tau _{L},\tau _{0}]}\sup_{v\in 
			\mathcal{\bar{V}}}\left\vert P_{J}(v)^{\prime }\delta (\tau )-m(v,\tau
		)\right\vert \\
		&=&O_{p}\left( \zeta _{0}\pi _{n}\right) +O_{p}(J^{-\kappa }).
	\end{eqnarray*}
	
	In sum,%
	\begin{equation*}
		\sup_{\tau \in \lbrack \tau _{L},\tau _{U}]}\left\Vert \hat{\theta}(\tau
		)-\theta (\tau )\right\Vert =O_{p}\left( \sqrt{\frac{J\ln n}{n}}\right)
	\end{equation*}
	
	\noindent and%
	\begin{equation*}
		\sup_{v\in \mathcal{\bar{V}}}\sup_{\tau \in \lbrack \tau _{L},\tau _{U}]}|%
		\hat{m}(v,\tau )-m(v,\tau )|=O_{p}\left( \zeta _{0}\sqrt{\frac{J\ln n}{n}}%
		\right) +O_{p}(J^{-\kappa })
	\end{equation*}%
\end{proof}
\begin{proof}[Proof of Theorem \ref{Thm4}]
	Define%
	\begin{eqnarray*}
		e_{0}(\tau ) &=&\frac{1}{n}\sum_{i=1}^{n}T_{i}\hat{D}_{\tau i}\varphi _{\tau
		}(Y_{i}-\hat{W}_{i}^{\prime }\hat{\theta}(\tau ))\hat{W}_{i} \\
		e_{1}(\tau ) &=&H_{n}(\hat{\theta}(\tau ),\hat{\theta}(\tau _{l-1}),\tau ,%
		\tilde{\eta}_{n})-E\left( H_{n}(\hat{\theta}(\tau ),\hat{\theta}(\tau
		_{l-1}),\tau ,\tilde{\eta}_{n})\right) -H_{n}(\theta (\tau ),\hat{\theta}%
		(\tau _{l-1}),\tau ,\tilde{\eta}_{n}) \\
		&&+E\left( H_{n}(\theta (\tau ),\hat{\theta}(\tau _{l-1}),\tau ,\tilde{\eta}%
		_{n})\right) ; \\
		e_{2}(\tau ) &=&\frac{1}{n}\sum_{i=1}^{n}T_{i}\hat{D}_{\tau i}\varphi _{\tau
		}(Y_{i}-\hat{W}_{i}^{\prime }\theta (\tau ))\hat{W}_{i}-\frac{1}{n}%
		\sum_{i=1}^{n}T_{i}D_{\tau i}(\tilde{\eta})\varphi _{\tau }(Y_{i}-X_{i}^{\prime
		}\beta (\tau )-R_{i}\alpha (\tau ) \\
		&&-m(V_{i},\tau ))W_{i}-\frac{1}{n}\sum_{i=1}^{n}E[TD_{\tau }(\tilde{\eta} )f_{Y^{\ast
			}|X,R,V}(Q_{Y^{\ast }|X,R,V}(\tau ))Wm^{\prime }(V,\tau )\\
		&&\phi(R,Z,R_{i},Z_{i})|R_{i},Z_{i}]; \\
		e_{3}(\tau ) &=&E\left[ \frac{1}{n}\sum_{i=1}^{n}T_{i}\hat{D}_{\tau
			i}(\varphi _{\tau }(Y_{i}-\hat{W}_{i}^{\prime }\hat{\theta}(\tau ))-\varphi
		_{\tau }(Y_{i}-\hat{W}_{i}^{\prime }\theta (\tau ))\hat{W}_{i})\right]
		-\Omega (\tau )(\hat{\theta}(\tau )-\theta (\tau )).
	\end{eqnarray*}%
	Notice that 
	\begin{eqnarray*}
		&&\frac{1}{n}\sum_{i=1}^{n}T_{i}\hat{D}_{\tau i}\varphi _{\tau }(Y_{i}-\hat{W%
		}_{i}^{\prime }\hat{\theta}(\tau ))\hat{W}_{i} \\
		&=&\Omega (\tau )(\hat{\theta}(\tau )-\theta (\tau ))+\frac{1}{n}%
		\sum_{i=1}^{n}T_{i}D_{\tau i}(\tilde{\eta} )\varphi _{\tau }(Y_{i}-X_{i}^{\prime
		}\beta (\tau )-R_{i}\alpha (\tau )-m(V_{i},\tau ))W_{i} \\
		&&+\frac{1}{n}\sum_{i=1}^{n}E[TD_{\tau }(\tilde{\eta} )f_{Y^{\ast
			}|X,R,V}(Q_{Y^{\ast }|X,R,V}(\tau ))Wm^{\prime }(V,\tau )\phi
		(R,Z,R_{i},Z_{i})|R_{i},Z_{i}]\\
		&&+e_{1}(\tau )+e_{2}(\tau )+e_{3}(\tau ).
	\end{eqnarray*}%
	It follows that
	\begin{equation}
		\Omega (\tau )\sqrt{n}(\hat{\theta}(\tau )-\theta (\tau ))=-\frac{1}{\sqrt{n}%
		}\sum_{i=1}^{n}G_{1i}(\tau )-\frac{1}{\sqrt{n}}\sum_{i=1}^{n}G_{2i}(\tau )+%
		\sqrt{n}e_{n}(\tau )  \tag{A4}\label{A4}
	\end{equation}%
	uniformly in $\tau \in \lbrack \tau _{L},\tau _{U}]$, where $e_{n}(\tau
	)=e_{0}(\tau )-e_{1}(\tau )-e_{2}(\tau )-e_{3}(\tau )$.
	
	Following the proof of Theorem 3.3 in Koenker and Bassett (1978), we have 
	\begin{equation*}
		\sup_{\tau \in \lbrack \tau _{L},\tau _{U}]}\left\Vert \frac{1}{n}%
		\sum_{i=1}^{n}\hat{D}_{\tau i}\varphi _{\tau }(Y_{i}-\hat{W}_{i}^{\prime }%
		\hat{\theta}(\tau ))\hat{W}_{i}\right\Vert =O_{p}\left( \frac{\xi _{0}J}{n}%
		\right) .
	\end{equation*}%
	Under Assumption A7, $\sup_{\tau \in \lbrack \tau _{L},\tau _{U}]}\left\Vert
	e_{0}(\tau )\right\Vert =o_{p}(n^{-1/2})$.
	Next, consider a class of functions, 
	\begin{eqnarray*}
		\mathcal{F}_{\bar{h}} &=&\{f_{\bar{h}}:(T,W,Y)\longmapsto T1\{W^{\prime }%
		\tilde{\theta}>\tilde{\eta}_{n}\}(1\{Y-W^{\prime }\theta <0\}-1\{Y-W^{\prime
		}\theta (\tau )<0\})W^{\prime }\mathbf{\iota }, \\
		&&\left\Vert \theta -\theta (\tau )\right\Vert \lesssim \pi _{n},\left\Vert 
		\tilde{\theta}-\theta (\tau _{l-1})\right\Vert \lesssim \pi _{n},\tau \in
		\lbrack \tau _{L},\tau _{U}],\mathbf{\iota \in }\mathcal{S}^{\dim (\theta
			)-1}\}.
	\end{eqnarray*}%
	The envelope of $\mathcal{F}_{\bar{h}}$ is bounded by $\zeta _{0}$ and the
	uniform entropy number of $\mathcal{F}_{\bar{h}}$ is bounded by $J$. After a
	simple calculation, one can obtain $\sup_{f_{\bar{h}}\in \mathcal{F}_{\bar{h}%
	}}E\left( f_{\bar{h}}^{2}\right) \lesssim J\zeta _{0}\pi _{n}$. By Lemma 21
	of Belloni et al. (2019), 
	\begin{eqnarray*}
		&&\sup_{\tau \in \lbrack \tau _{L},\tau _{U}]}\sup_{\left\Vert \theta
			-\theta (\tau )\right\Vert \lesssim \pi _{n}}\left\Vert H_{n}(\theta ,\hat{%
			\theta}(\tau _{l-1}),\tau ,\tilde{\eta}_{n})-E(H_{n}(\theta ,\hat{\theta}%
		(\tau _{l-1}),\tau ,\tilde{\eta}_{n}))\right. \\
		&&\left.-H_{n}(\theta (\tau ),\hat{\theta}%
		(\tau _{l-1}),\tau ,\tilde{\eta}_{n}) +E(H_{n}(\theta (\tau ),\hat{\theta}(\tau _{l-1}),\tau ,\tilde{\eta}%
		_{n}))\right\Vert\\
		&\lesssim& \sup_{f_{\bar{h}}\in \mathcal{F}_{\bar{h}}}\left\vert \frac{1}{n}%
		\sum_{i=1}^{n}\left[ f_{\bar{h}}(T_{i},\hat{W}_{i},Y_{i})-E(f_{\bar{h}%
		}(T_{i},\hat{W}_{i},Y_{i}))\right] \right\vert \\
		&=&O_{p}\left( \frac{J^{3/4}\zeta _{0}^{1/2}(\ln n)^{3/4}}{n^{3/4}}\right) +O_{p}\left( \frac{%
			J\zeta _{0}\ln n}{n}\right) .
	\end{eqnarray*}%
	It follows that $\sup_{\tau \in \lbrack \tau _{L},\tau _{U}]}\left\Vert
	e_{1}(\tau )\right\Vert =O_{p}\left( \frac{J^{3/4}\zeta _{0}^{1/2}(\ln
		n)^{3/4}}{n^{3/4}}\right) +O_{p}\left( \frac{J\zeta _{0}\ln n}{n}\right) $.
	
	By Lemma S2 and Lemma S3,
    \begin{eqnarray*} 
    	\sup_{\tau \in \lbrack \tau_{L},\tau _{U}]}\left\Vert e_{2}(\tau )\right\Vert 
    	&=&O_{p}\left(J^{1/2-\kappa }\right) +O_{p}\left( \sqrt{\frac{J^{1-\kappa }\ln n}{n}}%
	    \right)+O_{p}\left( \sqrt{\frac{J\pi _{1n}\ln n}{n}}\right) \\
	    &&+O_{p}\left(\sqrt{\frac{(\tilde{\eta}_{n}-\tilde{\eta} )J\ln n}{n}}\right) +O_{p}\left( \frac{%
		J\zeta _{0}\ln n}{n}\right) +o_{p}(n^{-1/2}) 
   \end{eqnarray*}	
		and 
	\begin{eqnarray*}
		\sup_{\tau \in \lbrack\tau _{L},\tau _{U}]}\left\Vert e_{3}(\tau )\right\Vert 
		&=&O_{p}\left( \frac{J\zeta _{0}\ln n}{n}\right) +O_{p}\left( (\tilde{\eta}_{n}-\tilde{\eta} )\sqrt{\frac{%
			J\ln n}{n}}\right) +O_{p}\left( \pi _{1n}\sqrt{\frac{J\ln n}{n}}\right)\\
		&&+O_{p}\left( J^{-\kappa }\sqrt{\frac{J\ln n}{n}}\right). 
   \end{eqnarray*}
	Hence, under Assumption A7,
	\begin{eqnarray*}
		\sup_{\tau \in \lbrack \tau _{L},\tau _{U}]}\left\Vert e_{n}(\tau
		)\right\Vert &=&O_{p}\left( \frac{J^{3/4}\zeta _{0}^{1/2}(\ln n)^{3/4}}{%
			n^{3/4}}\right) +O_{p}\left( J^{1/2-\kappa }\right) +O_{p}\left( \sqrt{\frac{%
				J^{1-\kappa }\ln n}{n}}\right)
		\\
		&& +O_{p}\left( \frac{J\zeta _{0}\ln n}{n}\right)+O_{p}\left( \sqrt{\frac{(\tilde{\eta}_{n}-\tilde{\eta} )J\ln n}{n}}\right)
		+O_{p}\left( \sqrt{\frac{J\pi _{1n}\ln n}{n}}\right)\\
		&=&o_{p}(n^{-1/2}).
	\end{eqnarray*}
	Up to now, collecting (\ref{A4}) and all terms above gives 
	\begin{eqnarray*}
		\sqrt{n}(\hat{\theta}(\tau )-\theta (\tau )) &=&-\frac{1}{\sqrt{n}}%
		\sum_{i=1}^{n}\Omega ^{-1}(\tau )G_{1i}(\tau )-\frac{1}{\sqrt{n}}%
		\sum_{i=1}^{n}\Omega ^{-1}(\tau )G_{2i}(\tau )+\sqrt{n}\Omega ^{-1}(\tau
		)e_{n}(\tau ) \\
		&=&\frac{1}{\sqrt{n}}\sum_{i=1}^{n}\xi _{i}(\tau )+o_{p}(1).
	\end{eqnarray*}%
\end{proof}
\begin{proof}[Proof of Theorem \ref{Thm5}]
	Based on Theorem \ref{Thm4}, 
	\begin{equation*}
		\sqrt{n}\left( \dbinom{\hat{\beta}(\tau )}{\hat{\alpha}(\tau )}-\dbinom{%
			\beta (\tau )}{\alpha (\tau )}\right) =\frac{1}{\sqrt{n}}\sum_{i=1}^{n}S_{1}%
		\xi _{i}(\tau )+o_{p}(1).
	\end{equation*}%
	By the functional central limit theorem, $\sqrt{n}\left( \dbinom{\hat{\beta}%
		(\tau )}{\hat{\alpha}(\tau )}-\dbinom{\beta (\tau )}{\alpha (\tau )}\right) $
	converges to a mean zero Gaussian process with the covariance function\ as $%
	\Sigma _{\beta \alpha }(\tau _{1},\tau _{2})=E\left[ S_{1}\xi _{i}(\tau
	_{1})\xi _{i}(\tau _{2})^{\prime }S_{1}^{\prime }\right] $.
\end{proof}
\begin{proof}[Proof of Theorem \ref{Thm6}]
	Let $\hat{\theta}^{\ast }(\tau )=(\hat{\beta}^{\ast }(\tau )^{\prime },\hat{%
		\alpha}^{\ast }(\tau ),\hat{\delta}^{\ast }(\tau )^{\prime })^{\prime }$.
	Following the proof of Theorem \ref{Thm4}, we shall establish 
	\begin{equation*}
		\sqrt{n}\left( \hat{\theta}^{\ast }(\tau )-\theta (\tau )\right) =\frac{1}{%
			\sqrt{n}}\sum_{i=1}^{n}\varsigma _{i}\xi _{i}(\tau )+o_{p}(1).
	\end{equation*}%
	It follows that $\sqrt{n}\left( \dbinom{\hat{\beta}^{\ast }(\tau )}{\hat{%
			\alpha}^{\ast }(\tau )}-\dbinom{\beta (\tau )}{\alpha (\tau )}\right) =\frac{%
		1}{\sqrt{n}}\sum_{i=1}^{n}\varsigma _{i}S_{1}\xi _{i}(\tau )+o_{p}(1)$
	uniformly in $\tau \in \lbrack \tau _{L},\tau _{U}]$. Combining it with $%
	\sqrt{n}\left( \dbinom{\hat{\beta}(\tau )}{\hat{\alpha}(\tau )}-\dbinom{%
		\beta (\tau )}{\alpha (\tau )}\right) =\frac{1}{\sqrt{n}}\sum_{i=1}^{n}\xi
	_{i}(\tau )+o_{p}(1)$, we obtain%
	\begin{equation*}
		\sqrt{n}\left( \dbinom{\hat{\beta}^{\ast }(\tau )}{\hat{\alpha}^{\ast }(\tau
			)}-\dbinom{\hat{\beta}(\tau )}{\hat{\alpha}(\tau )}\right) =\frac{1}{\sqrt{n}%
		}\sum_{i=1}^{n}(\varsigma _{i}-1)S_{1}\xi _{i}(\tau )+o_{p}(1)
	\end{equation*}%
	uniformly in $\tau \in \lbrack \tau _{L},\tau _{U}]$. By the conditional
	central limit theorem in Theorem 2.96 of van der Vaart and Wellner (1996), $%
	\sqrt{n}\left( \dbinom{\hat{\beta}^{\ast }(\tau )}{\hat{\alpha}^{\ast }(\tau
		)}-\dbinom{\hat{\beta}(\tau )}{\hat{\alpha}(\tau )}\right) $ converges to a
	mean zero Gaussian process with the covariance function as $\Sigma _{\beta
		\alpha }(\tau _{1},\tau _{2})=E\left[ S_{1}\xi _{i}(\tau _{1})\xi _{i}(\tau
	_{2})^{\prime }S_{1}^{\prime }\right] $.
\end{proof}

\end{document}